\Crefname{algocf}{Algorithm}{Algorithms}
\definecolor{shadecolor}{rgb}{.67,0.85,0.90}
\newenvironment{relaxedbox}{\begin{shaded*}}{\end{shaded*}}
\renewcommand*{\backref}[1]{}
\renewcommand*{\backrefalt}[4]{%
    \ifcase #1 %
    \relax % No citations.% use \relax if you do not want the "No citations" message
    \or
    (cited on p.~#2).%
    \else
    (cited on pp.~#2).%
    \fi%
}
\journalname{Software and Systems Modeling}
\newcommand{\myTitle}{A Systematic Approach to Constructing Families of Incremental Topology Control Algorithms Using Graph Transformation}
\newcommand{\myShortTitle}{Constructing Families of Incremental Topology Control Algorithms Using GT}
\newcommand{\wrt}{w.r.t.\xspace}
\newcommand{\eg}{e.g.\xspace}
\newcommand{\idest}{i.e.\xspace}
\newcommand{\vs}{vs.\xspace}
\newcommand{\reftofig}[1]{\textsf{#1}}
\newcommand{\hide}[1]{\textcolor{white}{#1}}
\newcommand{\shortcomingGap}[0]{S1\xspace}
\newcommand{\shortcomingVariability}[0]{S2\xspace}
\newcommand{\MaxpowerTC}{Maxpower \TC}
\newcommand{\ktc}[0]{kTC\xspace}
\newcommand{\ktcParameterK}{\ensuremath{k}\xspace}
\newcommand{\ektc}[0]{\mbox{e-kTC}\xspace}
\newcommand{\lktc}[0]{\mbox{l*-kTC}\xspace}
\newcommand{\topology}[0]{\ensuremath{G}\xspace}
\newcommand{\ektcTopology}[0]{\ensuremath{G_{\ektc}}\xspace}
\newcommand{\maxpowerTopology}[0]{\ensuremath{G_{\text{maxpower}}}\xspace}
\newcommand{\TC}[0]{TC\xspace}
\newcommand{\WSNs}[0]{WSNs\xspace}
\newcommand{\WSN}[0]{WSN\xspace}
\newcommand{\CE}[0]{context event\xspace}
\newcommand{\CEs}[0]{context events\xspace}
\newcommand{\nodeName}[1]{\ensuremath{n_{\text{#1}}\xspace}}
\newcommand{\nodeNameLong}[1]{node~\nodeName{#1}\xspace}
\newcommand{\linkName}[1]{\ensuremath{e_{#1}\xspace}}
\newcommand{\ACT}[0]{\texttt{Active}\xspace}
\newcommand{\INACT}[0]{\texttt{Inactive}\xspace}
\newcommand{\UNCL}[0]{\texttt{Unclassified}\xspace}
\newcommand{\constraint}[1]{\ensuremath{C_{#1}}\xspace}
\newcommand{\constraintLong}[1]{constraint~\constraint{y}\xspace}
\newcommand{\inactiveLinkConstraintKTC}[0]{\ensuremath{\hyperref[fig:generic-graph-constraints-inactive]{C_{\textrm{i}}}}\xspace}
\newcommand{\inactiveLinkConstraintKTCLong}[0]{in\-active-link constraint \inactiveLinkConstraintKTC}
\newcommand{\InactiveLinkConstraintKTCLong}[0]{In\-active-Link Constraint \inactiveLinkConstraintKTC}
\newcommand{\activeLinkConstraintKTC}[0]{\ensuremath{\hyperref[fig:generic-graph-constraints-active]{C_{\textrm{a}}}}\xspace}
\newcommand{\activeLinkConstraintKTCLong}[0]{active-link constraint \activeLinkConstraintKTC}
\newcommand{\ActiveLinkConstraintKTCLong}[0]{Active-Link Constraint \activeLinkConstraintKTC}
\newcommand{\unclassifiedLinkConstraint}[0]{\ensuremath{\hyperref[fig:unclassified-link-constraint]{C_{\textrm{u}}}}\xspace}
\newcommand{\unclassifiedLinkConstraintLong}[0]{no-un\-classi\-fied-links constraint \unclassifiedLinkConstraint}
\newcommand{\premiseUnclassifiedLinkConstraint}{\ensuremath{p_{\textrm{no-u}}}\xspace}
\newcommand{\noParallelLinksConstraint}[0]{\ensuremath{C_{\text{no-par-links}}}\xspace}
\newcommand{\noLoopsConstraint}[0]{\ensuremath{C_{\text{no-loops}}}\xspace}
\newcommand{\premiseInactiveLinkConstraint}[0]{\ensuremath{p_{\textrm{i}}}\xspace}
\newcommand{\premiseActiveLinkConstraint}[0]{\ensuremath{p_{\textrm{a}}}\xspace}
\newcommand{\premiseActiveLinkConstraintLong}{premise~\premiseActiveLinkConstraint}
\newcommand{\conclusionInactiveLinkConstraint}[0]{\ensuremath{c_{\textrm{i}}}\xspace}
\newcommand{\match}[1]{\ensuremath{m\ifstrempty{#1}{}{(#1)}}\xspace}
\newcommand{\linkOrder}[1]{\ensuremath{\prec_{#1}}\xspace}
\newcommand{\pattern}[0]{\ensuremath{p}\xspace}
\newcommand{\GT}[0]{GT\xspace}
\newcommand{\AC}[1]{\ensuremath{\text{AC}_{#1}}\xspace}
\newcommand{\NAC}[1]{\ensuremath{\text{NAC}_{#1}}\xspace}
\newcommand{\NACs}[0]{NACs\xspace}
\newcommand{\NACia}[1]{\ensuremath{\text{NAC}_{\textrm{i},\textrm{a},#1}}\xspace}
\newcommand{\NACaa}[1]{\ensuremath{\text{NAC}_{\textrm{a},\textrm{a},#1}}\xspace}
\newcommand{\PACii}[1]{\ensuremath{\text{PAC}_{\textrm{i},\textrm{i},#1}}\xspace}
\newcommand{\PACui}[1]{\ensuremath{\text{PAC}_{\textrm{u},\textrm{i},#1}}\xspace}
\newcommand{\PAC}[1]{\ensuremath{\text{PAC}_{#1}}\xspace}
\newcommand{\LHS}[1]{\ensuremath{\text{LHS}_{#1}}\xspace}
\newcommand{\RHS}[1]{\ensuremath{\text{RHS}_{#1}}\xspace}
\newcommand{\RHSActivationRule}[0]{\RHS{\textrm{a}}}
\newcommand{\RHSInactivationRule}[0]{\RHS{\textrm{i}}}
\newcommand{\guardSuccess}[0]{\reftofig{[Success]}\xspace}
\newcommand{\guardFailure}[0]{\reftofig{[Failure]}\xspace}
\newcommand{\GTrule}[1]{\ensuremath{R_{#1}}\xspace}
\newcommand{\GTruleLong}[1]{\GT rule~\GTrule{#1}}
\newcommand{\GToperation}[1]{\texttt{#1}\xspace}
\newcommand{\activationRule}[0]{\ensuremath{R_{\textrm{a}}}\xspace}
\newcommand{\activationRuleLong}[0]{activation rule~\activationRule}
\newcommand{\ActivationRuleLong}[0]{Activation Rule~\activationRule}
\newcommand{\inactivationRule}[0]{\ensuremath{R_{\textrm{i}}}\xspace}
\newcommand{\inactivationRuleLong}[0]{inactivation rule~\inactivationRule}
\newcommand{\InactivationRuleLong}[0]{Inactivation Rule~\inactivationRule}
\newcommand{\unclassificationRule}[0]{\ensuremath{R_{\textrm{u}}}\xspace}
\newcommand{\unclassificationRuleLong}[0]{un\-clas\-si\-fi\-ca\-tion rule~\unclassificationRule}
\newcommand{\nodeAdditionRule}[0]{\ensuremath{R_{\textrm{+n}}}\xspace}
\newcommand{\nodeAdditionRuleLong}[0]{node addition rule~\nodeAdditionRule}
\newcommand{\nodeRemovalRule}[0]{\ensuremath{R_{\textrm{-n}}}\xspace}
\newcommand{\nodeRemovalRuleLong}[0]{node removal rule~\nodeRemovalRule}
\newcommand{\linkAdditionRule}[0]{\ensuremath{R_{\textrm{+e}}}\xspace}
\newcommand{\linkAdditionRuleLong}[0]{link addition rule~\linkAdditionRule}
\newcommand{\linkRemovalRule}[0]{\ensuremath{R_{\textrm{-e}}}\xspace}
\newcommand{\linkRemovalRuleLong}[0]{link removal rule~\linkRemovalRule}
\newcommand{\attributeModificationRule}[1]{\ensuremath{R_{\textrm{mod-#1}}}\xspace}
\newcommand{\weightModificationRule}[0]{\attributeModificationRule{w}}
\newcommand{\energyModificationRule}[0]{\attributeModificationRule{E}}
\newcommand{\hopCountModificationRule}[0]{\attributeModificationRule{h}}
\newcommand{\latModificationRule}[0]{\attributeModificationRule{lat}}
\newcommand{\longModificationRule}[0]{\attributeModificationRule{long}}
\newcommand{\weightModificationRuleLong}[0]{weight modification rule~\weightModificationRule}
\newcommand{\energyModificationRuleLong}[0]{energy modification rule~\energyModificationRule}
\newcommand{\hopCountModificationRuleLong}[0]{hop count modification rule~\hopCountModificationRule}
\newcommand{\latModificationRuleLong}[0]{latitude modification rule~\latModificationRule}
\newcommand{\longModificationRuleLong}[0]{longitude modification rule~\longModificationRule}
\newcommand{\findUnclassifiedLinkRule}[0]{\ensuremath{R_{\textrm{find-u}}}\xspace}
\newcommand{\findUnclassifiedLinkRuleLong}[0]{un\-classi\-fied-link-iden\-ti\-fi\-ca\-tion rule~\findUnclassifiedLinkRule}
\newcommand{\findClassifiedLinkRule}[0]{\ensuremath{R_{\textrm{find-ai}}}\xspace}
\newcommand{\findClassifiedLinkRuleLong}[0]{classi\-fied-link-iden\-ti\-fi\-ca\-tion rule~\findClassifiedLinkRule}
\newcommand{\LSM}{link state modification\xspace}
\newcommand{\LSMs}{link state modifications\xspace}
\newcommand{\handleUnclassification}[0]{\handlerOperation{\unclassificationRule}}
\newcommand{\leftPattern}[0]{\ensuremath{p_\ell}\xspace}
\newcommand{\rightPattern}[0]{\ensuremath{p_r}\xspace}
\newcommand{\gluing}[2]{\ensuremath{g^{#2}_{#1}}\xspace}
\newcommand{\gluingII}[2]{\gluing{\textrm{i},\textrm{i},#1}{#2}}
\newcommand{\gluingAA}[2]{\gluing{\textrm{a},\textrm{a},#1}{#2}}
\newcommand{\postcondition}[1]{\ensuremath{\text{PC}_{#1}}\xspace}
\newcommand{\postconditionII}[1]{\postcondition{\textrm{i},\textrm{i},#1}}
\newcommand{\premise}[2]{\ensuremath{p^{#2}_{#1}}\xspace}
\newcommand{\conclusion}[2]{\ensuremath{c^{#2}_{#1}}\xspace}
\newcommand{\conclusionII}[2]{\conclusion{\textrm{i},\textrm{i},#1}{#2}}
\newcommand{\nodeVariable}[0]{\ensuremath{n}\xspace}
\newcommand{\nodeVariableBaseStation}[0]{\ensuremath{n_0}\xspace}
\newcommand{\nodeVariableA}[0]{\ensuremath{n_\text{A}}\xspace}
\newcommand{\nodeVariableB}[0]{\ensuremath{n_\text{B}}\xspace}
\newcommand{\nodeVariablea}[0]{\ensuremath{n_\text{a}}\xspace}
\newcommand{\nodeVariableb}[0]{\ensuremath{n_\text{b}}\xspace}
\newcommand{\nodeVariablec}[0]{\ensuremath{n_\text{c}}\xspace}
\newcommand{\nodeVariablez}[0]{\ensuremath{n_\text{z}}\xspace}
\newcommand{\nodeVariableOne}[0]{\ensuremath{n_1}\xspace}
\newcommand{\nodeVariableTwo}[0]{\ensuremath{n_2}\xspace}
\newcommand{\linkVariable}[0]{\linkName{}\xspace}
\newcommand{\linkVariableAB}[0]{\linkName{\text{AB}}\xspace}
\newcommand{\linkVariableab}[0]{\linkName{\text{ab}}\xspace}
\newcommand{\linkVariableac}[0]{\linkName{\text{ac}}\xspace}
\newcommand{\linkVariablecb}[0]{\linkName{\text{cb}}\xspace}
\newcommand{\linkVariablebc}[0]{\linkName{\text{bc}}\xspace}
\newcommand{\linkVariablecd}[0]{\linkName{\text{cd}}\xspace}
\newcommand{\linkVariableyz}[0]{\linkName{\text{yz}}\xspace}
\newcommand{\linkVariableOneTwo}[0]{\linkName{\text{12}}\xspace}
\newcommand{\linkVariableOneThree}[0]{\linkName{\text{13}}\xspace}
\newcommand{\linkVariableOneFour}[0]{\linkName{\text{14}}\xspace}
\newcommand{\linkVariableThreeOne}[0]{\linkName{\text{31}}\xspace}
\newcommand{\linkVariableTwoThree}[0]{\linkName{\text{23}}\xspace}
\newcommand{\linkVariableThreeTwo}[0]{\linkName{\text{32}}\xspace}
\newcommand{\linkVariableFourThree}[0]{\linkName{\text{43}}\xspace}
\newcommand{\handlerOperation}[1]{\GToperation{handle-#1}}
\newcommand{\violationIdentificationRule}[1]{\GTrule{\text{find-violation-}#1}}
\newcommand{\OK}[0]{\checkmark} % from amssymb (works in text and math mode)
\newcommand{\notOK}{\ensuremath{\times}}
\newtheorem{theorem}{Theorem}
\newenvironment{proof-sketch}{\noindent{\emph{Sketch of Proof.}}}{\qed}
\newenvironment{proof}{\noindent{\emph{Proof.}}}{\qed}
\newcommand{\coneCount}[0]{\ensuremath{n_{\text{cone}}}\xspace}
\newcommand{\naturalNumbers}[0]{\ensuremath{\mathbb{N}}\xspace}
\DeclareMathOperator{\atan}{atan}
\DeclareMathOperator{\stateOp}{s}
\newcommand{\state}[1]{\ensuremath{\stateOp(#1)}\xspace}
\DeclareMathOperator{\idOp}{id}
\newcommand{\id}[1]{\ensuremath{\idOp(#1)}\xspace}
\DeclareMathOperator{\hopcountOp}{hops}
\newcommand{\hopcount}[1]{\ensuremath{\hopcountOp\ifstrempty{#1}{}{(#1)}}\xspace}
\DeclareMathOperator{\weightOp}{w}
\newcommand{\weight}[1]{\ensuremath{\weightOp\ifstrempty{#1}{}{(#1)}}\xspace}
\newcommand{\weightsquared}[1]{\ensuremath{\weightOp^2(#1)}\xspace}
\newcommand{\weightThreshold}[0]{\ensuremath{w_{\text{thres}}}\xspace}
\DeclareMathOperator{\longitudeOp}{long}
\newcommand{\longitude}[1]{\ensuremath{\longitudeOp(#1)}\xspace}
\DeclareMathOperator{\latitudeOp}{lat}
\newcommand{\latitude}[1]{\ensuremath{\latitudeOp(#1)}\xspace}
\DeclareMathOperator{\angleOp}{\alpha}
\newcommand{\angleCmd}[1]{\ensuremath{\angleOp(#1)}\xspace}
\DeclareMathOperator{\energyOp}{E}
\newcommand{\energy}[2]{\ensuremath{\energyOp_{#2}\ifstrempty{#1}{}{(#1)}}\xspace}
\newcommand{\cardinality}[1]{\ensuremath{\left\vert{#1}\right\vert}\xspace}
\DeclareMathOperator{\lifetimeOp}{L}
\DeclareMathOperator{\powerOp}{P}
\newcommand{\remainingLifetime}[2]{\ensuremath{\lifetimeOp_{#1}\ifstrempty{#2}{}{(#2)}}\xspace}
\newcommand{\remainingLifetimeLong}[2]{remaining $#1$-lifetime~\remainingLifetime{#1}{#2}}
\newcommand{\remainingLifetimePct}[2]{\ensuremath{\lifetimeOp_{\SI{#1}{\percent}}\ifstrempty{#2}{}{(#2)}}\xspace}
\newcommand{\remainingLifetimePctLong}[2]{remaining \SI{#1}{\percent}-lifetime~\remainingLifetimePct{#1}{#2}}
\newcommand{\expectedPower}[2]{\ensuremath{\widehat{\powerOp}_{#1}\ifstrempty{#2}{}{(#2)}}\xspace}
\newcommand{\expectedPowerLong}[2]{expected transmission power~\expectedPower{#1}{#2}}
\newcommand{\expectedRemainingLifetime}[2]{\ensuremath{\widehat{\lifetimeOp}_{#1}\ifstrempty{#2}{}{(#2)}}\xspace}
\newcommand{\expectedRemainingLifetimeLong}[2]{expected \ifstrempty{#1}{}{$#1$-}lifetime~\expectedRemainingLifetime{#1}{#2}}
\DeclareMathOperator{\completeClassificationPredicate}{\phi_{\text{CC}}}
\DeclareMathOperator{\AConnectivityPredicate}{\phi_{\text{A-conn}}}
\DeclareMathOperator{\minWeightPredicateOp}{\phi_{\text{min-weight}}}
\newcommand{\minWeightPredicate}[1]{\ensuremath{\minWeightPredicateOp\ifstrempty{#1}{}{(#1)}}\xspace}
\newcommand{\minWeightPredicateLong}[1]{minimum-weight predicate~\minWeightPredicate{#1}}
\DeclareMathOperator{\tieBreakingPredicateOp}{\phi_{\text{tie-break}}}
\newcommand{\tieBreakingPredicate}[1]{\ensuremath{\tieBreakingPredicateOp\ifstrempty{#1}{}{(#1)}}\xspace}
\newcommand{\tieBreakingPredicateLong}[1]{tie-breaking predicate~\tieBreakingPredicate{#1}}
\DeclareMathOperator{\trianglePredicateOp}{\phi_\Delta}
\newcommand{\trianglePredicate}[1]{\ensuremath{\trianglePredicateOp\ifstrempty{#1}{}{(#1)}}\xspace}
\DeclareMathOperator{\tcPredicateOp}{\phi}
\newcommand{\tcPredicate}[2]{\ensuremath{\tcPredicateOp_{#1}\ifstrempty{#2}{}{(#2)}}\xspace}
\newcommand{\ktcPredicate}[1]{\tcPredicate{\;\text{\ktc}}{#1}}
\newcommand{\ktcMinWeightPredicate}[1]{\tcPredicate{\;\text{kTC+min-weight}}{#1}}
\newcommand{\ektcPredicate}[1]{\tcPredicate{\text{\ektc}}{#1}}
\newcommand{\lktcPredicate}[1]{\tcPredicate{\;\text{\lktc}}{#1}}
\newcommand{\xtcPredicate}[1]{\tcPredicate{\;\text{XTC}}{#1}}
\newcommand{\xtcPredicateLong}[1]{XTC predicate~\xtcPredicate{#1}}
\newcommand{\rngPredicate}[1]{\tcPredicate{\;\text{RNG}}{#1}}
\newcommand{\ggPredicate}[1]{\tcPredicate{\;\text{GG}}{#1}}
\newcommand{\yaoPredicate}[1]{\tcPredicate{\;\text{Yao}}{#1}}
\newcommand{\maxpowerPredicate}[1]{\tcPredicate{\;\text{Maxpower}}{#1}}
\newcommand{\RQekTC}{RQ1}
\newcommand{\RQekTCLong}{\RQekTC---Performance of \ektc}
\newcommand{\RQMinWeightPredicate}{RQ2}
\newcommand{\RQMinWeightPredicateLong}{\RQMinWeightPredicate---Benefit of Minimum-Weight Predicate}
\newcommand{\eMoflon}[0]{\textsc{eMoflon}\xspace}
\newcommand{\Simonstrator}[0]{\textsc{Simonstrator}\xspace}
\newcommand{\PFS}[0]{\PFSLong}
\newcommand{\PFSLong}[0]{\textsc{PeerfactSim.KOM}\xspace}
\newcommand{\pct}[1]{\SI{#1}{\percent}\xspace}
\def\evalDataRoot{figures/evaluation/sosym_evalrun_2016-06-29T144307/output}
\def\rootOfAliveNodePlots{\evalDataRoot/alive-nodes_vs_simtime_comparison}
\newcommand{\aliveNodesSuffix}[4]{worldSize=#1_nodeCount=#2_kTCParK=#3_minDistInM=#4}
\newcommand{\evalParSeedCount}{\numprint{5}\xspace}
\newcommand{\evalParK}{\numprint{1.41}\xspace}
\newcommand{\evalParTCInterval}{\SI{10}{\minute}\xspace}
\newcommand{\evalParSimulationDuration}{\SI{25}{\hour}\xspace}
\newcommand{\evalParBatterySource}{\SI{130}{\joule}\xspace}
\newcommand{\evalParMeasurementsPerRun}{\numprint{150}\xspace}
\newcommand{\evalParTransmissionRange}{\SI{130}{\meter}\xspace}
\newcommand{\evalParNodeCount}{100\xspace}
\newcommand{\evalParNumberOfConfigurations}{2\xspace}
\newcommand{\evalParMsgInterval}{\SI{30}{\second}\xspace}
\newcommand{\evalParMsgSize}{\SI{1}{\kilo\byte}\xspace}
\newcommand{\metricOutdegree}{\ensuremath{d_{\text{out}}}\xspace}
\newcommand{\metricWorldSize}{\ensuremath{a}\xspace}
\def\cfgSmallMedium{\texttt{n100w500}\xspace}
\def\cfgSmallSparse{\texttt{n100w750}\xspace}
\newcommand{\relativeMetric}[2]{\ensuremath{\delta_{#1}\ifstrempty{#2}{}{({#2})}}\xspace}
\newcommand{\topologySize}[1]{\ensuremath{S\ifstrempty{#1}{}{(#1)}}\xspace}
\newcommand{\meanTopologySize}[1]{\ensuremath{\overline{\topologySize{#1}}}}
\newcommand{\relativeTopologySize}[1]{\relativeMetric{\text{Size}}{#1}}
\newcommand{\relativeExecutionTime}[1]{\relativeMetric{T}{#1}}
\newcommand{\executionTime}[1]{\ensuremath{T\ifstrempty{#1}{}{(#1)}}\xspace}
\newcommand{\meanExecutionTime}[1]{\ensuremath{\overline{\executionTime{#1}}}}
\newcommand{\relativeLsmCount}[1]{\relativeMetric{N}{#1}}
\newcommand{\lsmCount}[1]{\ensuremath{N\ifstrempty{#1}{}{(#1)}}}
\newcommand{\meanLsmCount}[1]{\ensuremath{\overline{\lsmCount{#1}}}}
\newcommand{\relativeRemainingLifetime}[2]{\relativeMetric{\remainingLifetime{#1}{}}{#2}}
\newcommand{\relativeRemainingLifetimePct}[2]{\relativeMetric{\remainingLifetimePct{#1}{}}{#2}}
\def\extractFirstInternal#1 #2\relax{#1}
\def\extractSecondInternal#1 #2\relax{#2}
\def\extractFirst#1{\expandafter\extractFirstInternal#1\relax}
\def\extractSecond#1{\expandafter\extractSecondInternal#1\relax}
\newcommand{\bstVal}[1]{\textbf{#1}}
\newcommand{\GwThresh}[0]{\ensuremath{G_{\weightThreshold}}}
\tikzset{main node/.style={circle, draw, fill=black, text=white,font=\sffamily\scriptsize, minimum size = 0.4cm, inner sep= 1pt}}
\tikzset{active/.style={->, line width=0.9pt, font=\scriptsize}}
\tikzset{inactive/.style={->, densely dotted, line width=0.9pt, font=\scriptsize}}
\tikzset{unclassified/.style={->, densely dashed,line width=0.9pt, font=\scriptsize}}
\tikzset{activeOrInactive/.style={->, dashdotted,line width=0.9pt, font=\scriptsize}}
\tikzset{any/.style={->, draw={rgb:black,5;white,6}, line width=0.9pt, font=\scriptsize}}
\tikzset{induction/.style={->, >=latex, line width=0.9pt, font=\scriptsize}}
\tikzset{rcBig/.style={rectangle,draw, fill=white, minimum height=4.1cm,text width=4.6cm, align = center}}
\tikzset{rcGraphSmall/.style={rectangle,draw, fill=white, minimum height=1.1cm,text width=4cm, align = center}}
\tikzset{rcGraph/.style={rectangle,draw, fill=white, minimum height=3.25cm,text width=4cm, align = center}}
\tikzset{rcGraphInner/.style={rectangle,draw, fill=white, minimum height=0.705cm,text width=3.55cm, align = left}}
\tikzset{rcMini/.style={rectangle,draw, fill=white, minimum height=0.1cm,text width=0.55cm, align = left}}
\tikzset{rcMiniMedium/.style={rectangle,draw, fill=white, minimum height=0.1cm,text width=0.95cm, align = left}}
\tikzset{rcMiniLong/.style={rectangle,draw, fill=white, minimum height=0.1cm,text width=1.25cm, align = left}}
\tikzset{rc/.style={rectangle,draw, fill=white, minimum height=2.35cm,text width=3.25cm, align = center}}
\tikzset{rcSmall/.style={rectangle,draw, fill=white, minimum height=0.705cm,text width=3.35cm, align = center}}
\tikzset{rcMedium/.style={rectangle,draw, fill=white, minimum height=3.5cm,text width=3cm, align = center}}
\tikzset{rcGraphMedium/.style={rectangle,draw, fill=white, minimum height=1.9cm,text width=2.6cm, align = center}}
\tikzset{rcShort/.style={rectangle,draw, fill=white, minimum height=1.7cm,text width=2.25cm, align = center}}
\tikzset{rcShortDouble/.style={rectangle,draw, fill=white, minimum height=1.7cm,text width=4.35cm, align = center}}
\tikzset{rcShortGraph/.style={rectangle,draw, fill=white, minimum height=1.1cm,text width=1.9cm, align = center}}
\tikzset{rcLongGraph/.style={rectangle,draw, fill=white, minimum height=2.3cm,text width=1.9cm, align = center}}
\tikzstyle{bigArrow}=[line width=0.9cm,font=\sffamily\small, color={rgb:black,5;white,6}]
\tikzset{bigArrowHead/.style={fill={rgb:black,5;white,6},shape=single arrow,text width=0.35mm,text height=10.5ex}}
\tikzset{sectionSummaryBox/.style={align=center,text width=4.6cm,below}}
\begin{document}

\title{\myTitle%
\thanks{This work has been funded by the German Research Foundation (DFG) as part of projects A01 and C01 within the Collaborative Research Center (CRC) 1053 -- MAKI.
}
}

\titlerunning{\myShortTitle}

\author{Roland Kluge\and Michael Stein \and Gergely Varró \and Andy Schürr \and Matthias Hollick \and Max Mühlhäuser
}

\institute{R. Kluge \at
              Merckstraße 25, 64283 Darmstadt \\
              Tel.: +49-6151-16-22354\\
              Fax: +49-6151-16-22352\\
              \email{roland.kluge@es.tu-darmstadt.de} 
}

\date{Received: 2016-07-04 / Revised: 2016-11-22 / Accepted: 2017-02-16 / Published online: 2017-03-08}

\maketitle
   
\begin{abstract}

\begin{relaxedbox}
% http://www.springerpub.com/journal-article-sharing-policies
\noindent This document corresponds to the accepted manuscript of the article
\emph{Kluge, R., Stein, M., Varró, G., Schürr, A., Hollick, M., Mühlhäuser, M.: "A systematic approach to constructing families of incremental topology control algorithms using graph transformation," in: SoSyM 2017.}
The URL of the formal version is \url{https://doi.org/10.1007/s10270-017-0587-8}.
This document is made available under the CC-BY-NC-ND 4.0 license \url{http://creativecommons.org/licenses/by-nc-nd/4.0/}.
\end{relaxedbox}  

In the communication systems domain, constructing and maintaining network topologies via topology control algorithms is an important cross-cutting research area.
Network topologies are usually modeled using attributed graphs whose nodes and edges represent the network nodes and their interconnecting links.
A key requirement of topology control algorithms is to fulfill certain consistency and optimization properties to ensure a high quality of service.
Still, few attempts have been made to constructively integrate these properties into the development process of topology control algorithms.
Furthermore, even though many topology control algorithms share substantial parts (such as structural patterns or tie-breaking strategies), few works constructively leverage these commonalities and differences of topology control algorithms systematically.
In previous work, we addressed the constructive integration of consistency properties into the development process.
We outlined a constructive, model-driven methodology for designing individual topology control algorithms.
Valid and high-quality topologies are characterized using declarative graph constraints;
topology control algorithms are specified using programmed graph transformation.
We applied a well-known static analysis technique to refine a given topology control algorithm in a way that the resulting algorithm preserves the specified graph constraints.

In this paper, we extend our constructive methodology by generalizing it to support the specification of families of topology control algorithms.
To show the feasibility of our approach, we reneging six existing topology control algorithms and develop \ektc, a novel energy-efficient variant of the topology control algorithm \ktc.
Finally, we evaluate a subset of the specified topology control algorithms using a new tool integration of the graph transformation tool \eMoflon and the \Simonstrator network simulation framework.
\keywords{Graph transformation \and Graph constraints \and Static analysis \and Model-driven engineering \and Wireless networks \and Network simulation }
\end{abstract}
\section{Introduction}
\label{sec:introduction}

In the communication systems domain, wireless sensor networks (\WSNs)
\cite{Santi2005,Wang08} are a highly active research area. 
For instance, \WSNs are applied to monitor physical or environmental conditions using distributed, autonomous, battery-powered sensor nodes that cooperatively transmit their collected data to a central location.
To improve important properties (\eg, the battery lifetime of these devices), a topology control (\TC) algorithm~\cite{Santi2005} inactivates redundant communication links of a \WSN.
Key requirements on a \TC algorithm are to
\begin{inparaenum}
\item handle continuously changing network topologies,
\item operate in a highly distributed environment, in which each node can only observe and modify its local neighborhood, and
\item guarantee important local and global formal properties (\eg, bounded node degree or connectivity of the topology) for their neighborhood and the whole network, respectively.
\end{inparaenum}

The design and implementation of a \TC algorithm are, therefore, challenging and elaborate tasks, which are typically carried out by highly skilled experts.
The development of a new \TC algorithm is usually an iterative process.
In each iteration, %
\begin{inparaenum}
\item a new variant must be individually designed and implemented for a
distributed environment,
\item the preservation of required formal properties must be proved, and
\item performance measurements must be carried out in a corresponding
runtime environment, which is either a network simulator or a hardware testbed.
\end{inparaenum}
On the one hand, the \emph{specification} of a \TC algorithm often builds on a mathematically well-founded framework (\eg, graph or game theory), which allows the \TC developer to prove formal properties.
On the other hand, the \emph{implementation} of a \TC algorithm is typically written in a general-purpose programming language (\eg, Java for simulation~\cite{RSRS15} or C for testbed evaluation~\cite{Dunkels2004}).
Additionally, the \TC algorithm and the runtime environment (often continuously) interact:
The \TC algorithm activates and inactivates links in the topology, and the runtime environment causes context events (\eg, node removals due to battery depletion).
A \emph{dynamic \TC algorithm} has to handle such context events.
In many application scenarios of \WSNs (\eg, environmental monitoring), these context events are small compared to the size of the entire topology.
Therefore, it is crucial that a dynamic \TC algorithm reacts to context events in an \emph{incremental} manner, \idest, by retaining unaltered parts of the topology as much as possible.

\paragraph{Current shortcomings}
State-of-the-art \TC literature reveals that the traditional development process of \TC algorithms exposes two major shortcomings.
\begin{enumerate}
\item[\shortcomingGap]
A systematic mapping between the specification and the implementation is missing.
This makes it extraordinarily difficult to verify that both representations are indeed equivalent.
Incremental \TC algorithms are considerably more difficult to develop compared to their batch version, which complicates the traceability between specification and implementation even more.

\item[\shortcomingVariability]
Novel \TC algorithms tend to build on former \TC algorithms.
Still, these inherent commonalities and differences of \TC algorithms are not specified systematically.
This reduces reusability among and comparability of \TC algorithms.
Moreover, such a systematic specification could also enable us to prove formal properties not only for individual \TC algorithms but for whole families of \TC algorithms at once.
\end{enumerate} 
Both shortcomings are well-known research challenges in the communication systems domain.
For instance, \shortcomingGap is addressed in~\cite{Kluge2016,Qadir2015,Mori2013,Martins2010,Zave2012,Zave2008,Dohler2007} (see also \Cref{sec:relatedWorkCorrectness}), and \shortcomingVariability is addressed in \cite{Portocarrero2014,Anaya2014,Saller2013,Ortiz2012,Saller2012,Fuentes2011,Anguera2010,Delicato2009,Bencomo2008} (see also \Cref{sec:relatedWorkVariability}).

\paragraph{Previous work on \shortcomingGap}
In~\cite{KVS15,Kluge2016}, we showed that model-driven principles~\cite{Beydeda2005}, as applied in many success stories
\cite{VSBHH13,Hermann2013}, are suitable to address \shortcomingGap.
We describe topologies as graph-based models and possible operations of topology control algorithms as declarative graph transformation (\GT) rules~\cite{Rozenberg1997}.
Although this approach provides a well-defined procedure for modeling the static and dynamic aspects of \TC algorithms in general, it does not ensure that all required formal properties are fulfilled for the resulting topology.
To this end, a well-known, constructive, static analysis technique~\cite{HW95}
has been established in the \GT community to formulate structural invariants in terms of graph constraints and to guarantee that these graph constraints hold.
Graph constraints specify positive or negative graph patterns, which must be present in or missing from a valid graph, respectively. 
Based on these graph constraints and a set of \GT rules, a refinement algorithm enriches the \GT rules with additional application conditions.
These application conditions are derived from the graph constraints and ensure that the applying the refined \GT rules never produces invalid graphs \wrt the graph constraints.
This technique could previously only be applied to scenarios where invariants must hold permanently (\eg, \cite{Koch2002}).
In the \WSN domain, the situation is different because context events inevitably violate the specified graph constraints.
To address this problem, we first relaxed the original constraints by introducing appropriate intermediate states of links.
Then, we ensured that the specification of the \TC algorithm always preserves the relaxed constraints, using the constructive approach described in~\cite{HW95}.
Finally, we iteratively modified the state configuration of the topology to enforce the original, strong constraints.
We illustrated the proposed constructive methodology by re-engineering a \emph{single} existing \TC algorithm, \ktc~\cite{SWBM12}.
Yet, \shortcomingVariability, \idest, describing commonalities and differences of \TC algorithms, still remained an open issue in~\cite{KVS15,Kluge2016}.

\paragraph{Contribution}
In this paper, we tackle \shortcomingVariability by generalizing the constructive, model-driven methodology for designing \TC algorithms using graph transformation~\cite{KVS15,Kluge2016} to support the development of families of \TC algorithms.
This paper has four major contributions:
\begin{enumerate}
\item 
We model commonalities and differences of \TC algorithms by extracting common structural constraints and specifying the individual part of each \TC algorithm in terms of individual attribute constraints.
We lift all steps described in our original approach~\cite{KVS15} to operate on abstract representations of \TC algorithm families.

\item 
To demonstrate the applicability of our approach, we re-engineer six existing \TC algorithms (kTC~\cite{SWBM12}, l-kTC~\cite{SKSVSM15,Stein2016b}, XTC~\cite{Wattenhofer2004}, RNG~\cite{Karp2000}, GG~\cite{Wang08}, Yao Graph~\cite{Yao1982}) and propose \ektc, a novel, energy-aware variant of \ktc~\cite{SWBM12} that has been inspired by the CTCA algorithm~\cite{XH12}.

\item 
We extend the constructive approach with a step that systematically derives context event handlers, which repair all constraint violations that may result from the context events.

\item 
We perform a comparative, simulation-based evaluation of \ktc and \ektc to showcase an integration of the \GT tool \eMoflon~\cite{LAS14} and the \Simonstrator network simulation environment~\cite{RSRS15}.
\end{enumerate}

\paragraph{Structure}
\Cref{fig:overview} maps the major contributions of this paper to the following sections.
\begin{figure}
    \begin{center}
        \resizebox{1.0\textwidth}{!}{%
            \begin{tikzpicture}[node distance = 4.925cm]
            
            \sffamily
            \normalsize
            
            \node [rcBig,color=white] (0) at (0,0) {};
            
            \draw[bigArrow] ([shift=({-0.5cm,0mm})]0.west) -- ++(16.25,0) --  ++(0,-2.85) --++ (-16.6,0) --++ (0,-2.8) --++ (16.5,0);
            
            \node [rcBig] (1) at (0,0) {};
            \node[sectionSummaryBox] (1TopText) at (1.north) {
            Shortcomings in current development of topology control algorithms:
            gap between specification and implementation (\shortcomingGap) and missing reuse (\shortcomingVariability)\\
            \lbrack \Cref{sec:introduction}\rbrack
            }; 
            \node[align=left, below of=1, node distance=0.7cm, anchor=north, text width=4.6cm] (1BottomText) {};
            
            \node [rcBig, right of=1] (2) {};
            \node[sectionSummaryBox] (2TopText) at (2.north) {
            Model wireless sensor network topologies and
            topology control algorithms\\
            \lbrack \Cref{sec:background}\rbrack
            }; 
            \node[align=left, below of=2, node distance=0.7cm, anchor=north, text width=4.6cm] (2BottomText) {NEW:\\ Modular specification of six \TC algorithms + \ektc};
            
            \node [rcBig, text width=4.6cm, right of=2] (3) {};
            \node[sectionSummaryBox] (3TopText) at (3.north) {
            Model consistency properties using graph constraints\\
            \lbrack \Cref{sec:constraints}\rbrack
            }; 
            \node[align=left, below of=3, node distance=0.7cm, anchor=north, text width=4.6cm] (3BottomText) {NEW:\\ Lift proves to families of \TC algorithms};
            
            \node [rcBig, below of=1, node distance=5.65cm] (4) {};
            \node[sectionSummaryBox] (4TopText) at (4.north) {Model topology modifications using \GT rules and programmed \GT\\ 
            \lbrack \Cref{sec:gratra}\rbrack
            };
            \node[align=left, below of=4, node distance=0.7cm, anchor=north, text width=4.6cm] (4BottomText) {};
            
            \node [rcBig, right of=4] (5) {};
            \node[sectionSummaryBox] (5TopText) at (5.north) {Derive constraint-preserving graph transformation rules\\
            \lbrack \Cref{sec:refinement}\rbrack
            };
            \node[align=left, below of=5, node distance=0.7cm, anchor=north, text width=4.6cm] (5BottomText) {NEW:\\ Derive context event handlers};
            
            \node [rcBig, right of=5] (6) {};
            \node[sectionSummaryBox] (6TopText) at (6.north) {Simulation-based evaluation of network lifetime\\
            \lbrack \Cref{sec:evaluation}\rbrack
            };
            \node[align=left, below of=6, node distance=0.7cm, anchor=north, text width=4.6cm] (6BottomText) {NEW:\\ Entirely};
            
            \node[bigArrowHead, text width=0.055mm, right of=6, node distance=2.95cm](arrow1) {};
            
            \end{tikzpicture}}%
    \end{center}
    \caption{Structure of this paper (NEW: extensions compared to \cite{KVS15,Kluge2016})}
    \label{fig:overview}
\end{figure}
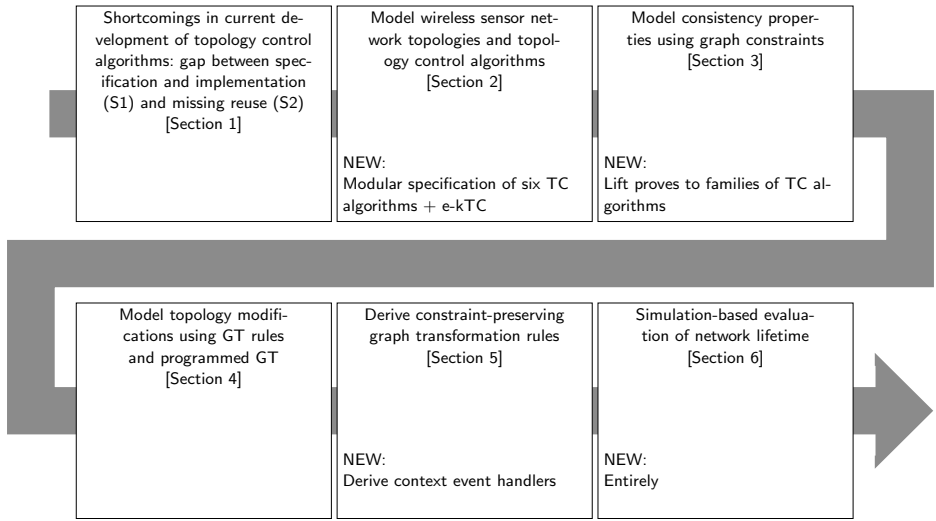
In \Cref{sec:background}, we specify valid topologies using metamodeling and introduce the six existing and one novel \TC algorithm using first-order logic predicates.
In \Cref{sec:constraints} we specify the \TC algorithms using graph constraints and conduct the prove of connectivity based an abstract specification of the family of \TC algorithms.
In \Cref{sec:gratra}, we specify topology modifications using \GT rules and \TC algorithms using programmed \GT.
In \Cref{sec:refinement}, we refine the \GT rules based on the graph constraints to ensure that the refined \GT rules preserve the graph constraints.
Additionally, we derive handler operations for the \CE rules and ensure that the refined \TC algorithm terminates.
In \Cref{sec:evaluation}, we present the results of a simulation-based evaluation.
In \Cref{sec:relatedWork}, we survey related work and conclude this paper in \Cref{sec:conclusion}.

\section{Metamodeling and Topology Control}
\label{sec:background}
In this section, we introduce basic concepts of metamodeling and \TC.
Afterwards, we introduce the considered \TC algorithms and analyze them \wrt recurring substructures.

\subsection{Basic Metamodeling Concepts}
\label{sec:metamodeling-intro}

A \emph{model} describes a set of related entities as a graph whose nodes are \emph{objects} and whose edges are \emph{references}.
A \emph{metamodel} specifies all well-formed models of the considered domain as a multi-graph whose nodes are the \emph{classes}, which describe possible entities and serve as \emph{object type}, and whose edges are \emph{associations}, which describe possible relations between entities and serve as \emph{reference type}.
Object and reference types have to be \emph{compatible}, \idest, the types of the source and target of a reference are the source and target class of its corresponding association.
A class may have multiple typed \emph{attribute}s, which represent properties of its instances.
An association end is labeled with a \emph{role name}, which further describes the corresponding relation, and a \emph{multiplicity}, which restricts the number of corresponding references in a model.

\subsection{Topologies}
\label{sec:topology-control-intro}

A (network) \emph{topology} represents the state of a communication system as an attributed graph consisting of \emph{nodes} and (communication) \emph{links}~\cite{Santi2005}.
In this paper, we consider topologies of \WSNs, \idest, nodes in the topology correspond to battery-powered wireless sensor nodes, and links correspond to the possible direct wireless communication connections between sensor nodes.
This implies that a topology is a simple graph, which neither contains loops nor parallel links, \idest, the source and target node of a link are unequal, and each pair of nodes is connected by at most one link.
We denote links with the letter $e$\footnote{We use $e$ instead of $l$ or $\ell$ for better readability.} in running text, \eg, \linkVariableOneTwo, \linkVariableab, \linkVariableAB, and as arrow-headed lines in compact notation, \eg, \tikz[baseline,anchor=base]{\node[shape=circle,draw,inner sep=2pt,text=white,fill=black] (act1) {$1$}; \node[right of=act1,shape=circle,draw,inner sep=2pt,text=white,fill=black] (act2) {$2$}; \draw[-angle 45,very thick,black,solid] (act1) -- (act2);}.
By convention, a link \linkVariableab has source node \nodeVariablea and target node \nodeVariableb.
A \emph{path} $P(\nodeVariablea, \nodeVariablez) = (\linkVariableab, \linkVariablebc, \dots, \linkVariableyz)$ from \nodeNameLong{a} to \nodeNameLong{z} is a list of links where the target node of one link in $P$ is the source node of its successor link in $P$.
In the following, we introduce node and link properties that are required to model the \TC algorithms in this paper.
A sensor node \nodeVariablea exposes the following properties:
\begin{itemize}
\item 
An integer-valued unique \emph{identifier} $\id{\nodeVariablea} = a$ allows to distinguish \nodeVariablea from other nodes.
The identifier of a node is shown in subscript notation in running text, \eg, \nodeVariableOne, and as white label inside the corresponding solid black circle in compact notation, \eg, \tikz[baseline=(char.base)]{\node[shape=circle,draw,inner sep=2pt,text=white,fill=black] (char) {$1$};}.
\item 
A real-valued \emph{energy} property \energy{\nodeVariablea}{} stores the current energy level of node \nodeVariablea, which is typically measured in Joule.
\item 
The real-valued \emph{latitude} \latitude{\nodeVariablea} and \emph{longitude} \longitude{\nodeVariablea} capture the position of node \nodeVariablea (\eg, Euclidean or GPS coordinates).
\item 
The integer-valued \emph{hop count} \hopcount{\nodeVariablea} stores the shortest distance (\wrt the number of hops) between \nodeVariablea and a dedicated second node \nodeVariableBaseStation.
A \emph{hop} is the traversal of a single link.
This property is required in application scenarios such as data collection, where each sensor node periodically sends collected data of its environment to a dedicated base station node \nodeVariableBaseStation.
A routing protocol (\eg, RPL~\cite{Winter2012}, AODV~\cite{CB04}) operates on top of the output topology of the current \TC algorithm and determines the path between \nodeVariablea and the base station.
\end{itemize}
A link \linkVariableab exposes the following properties:
\begin{itemize}
\item 
A real-valued generic \emph{weight} \weight{\linkVariableab} stores the cost of using \linkVariableab for message transfer.
For example, the weight of \linkVariableab may be derived from the distance of its incident nodes or the received signal strength indicator (RSSI) at \nodeVariableb.
\item
The real-valued \emph{angle} \angleCmd{\linkVariableab} of a link \linkVariableab can be derived from the positions of its incident nodes as follows:
\begin{align*}
\angleCmd{\linkVariableab} &= \atan(
\latitude{\nodeVariablea} - \latitude{\nodeVariableb}, \longitude{\nodeVariablea} - \longitude{\nodeVariableb}) + 180\degree
\end{align*}
With $\atan$, we denote the arcus tangens operator, which maps a pair of latitudinal and longitudinal differences to the corresponding angle.
\item
The \emph{state} \state{\linkVariableab} stores the processing state of \linkVariableab during the execution of the \TC algorithm.
Details follow in \Cref{sec:topology-control}.
\end{itemize}

\paragraph{Topology metamodel}
\Cref{fig:topology-metamodel} depicts the metamodel of topologies and contains three classes, \reftofig{Topology}\footnote{We use sans-serif font when referring to metamodel elements.}, \reftofig{Node}, and \reftofig{Link} (depicted as rectangular boxes).
Its eight associations (depicted as four bidirectional arrows) specify that 
\begin{inparaenum}
\item 
each \reftofig{Node} and \reftofig{Link} is contained in a single \reftofig{Topology},
\item 
each \reftofig{Topology} contains an unlimited number of \reftofig{Node}s and \reftofig{Link}s, and
\item a \reftofig{Node} serves as the unique \reftofig{source} (\reftofig{target}) of any of its zero or more \reftofig{outgoing} (\reftofig{incoming}) \reftofig{Link}s.
\end{inparaenum}
The class \reftofig{Node} has two integer-valued attributes (\reftofig{id} and \reftofig{hopCount}), and three real-valued attributes (\reftofig{energy}, \reftofig{latitude}, \reftofig{longitude}).
The class \reftofig{Link} has a real-valued \reftofig{weight} and a \reftofig{state} that can take values \reftofig{Active}, \reftofig{Inactive}, and \reftofig{Unclassified}, specified in the enumeration type \reftofig{LinkState}.
All attributes and types correspond to the aforementioned node and link properties of the same name.
\begin{figure}
    \begin{center}
        \includegraphics[width=.85\textwidth]{./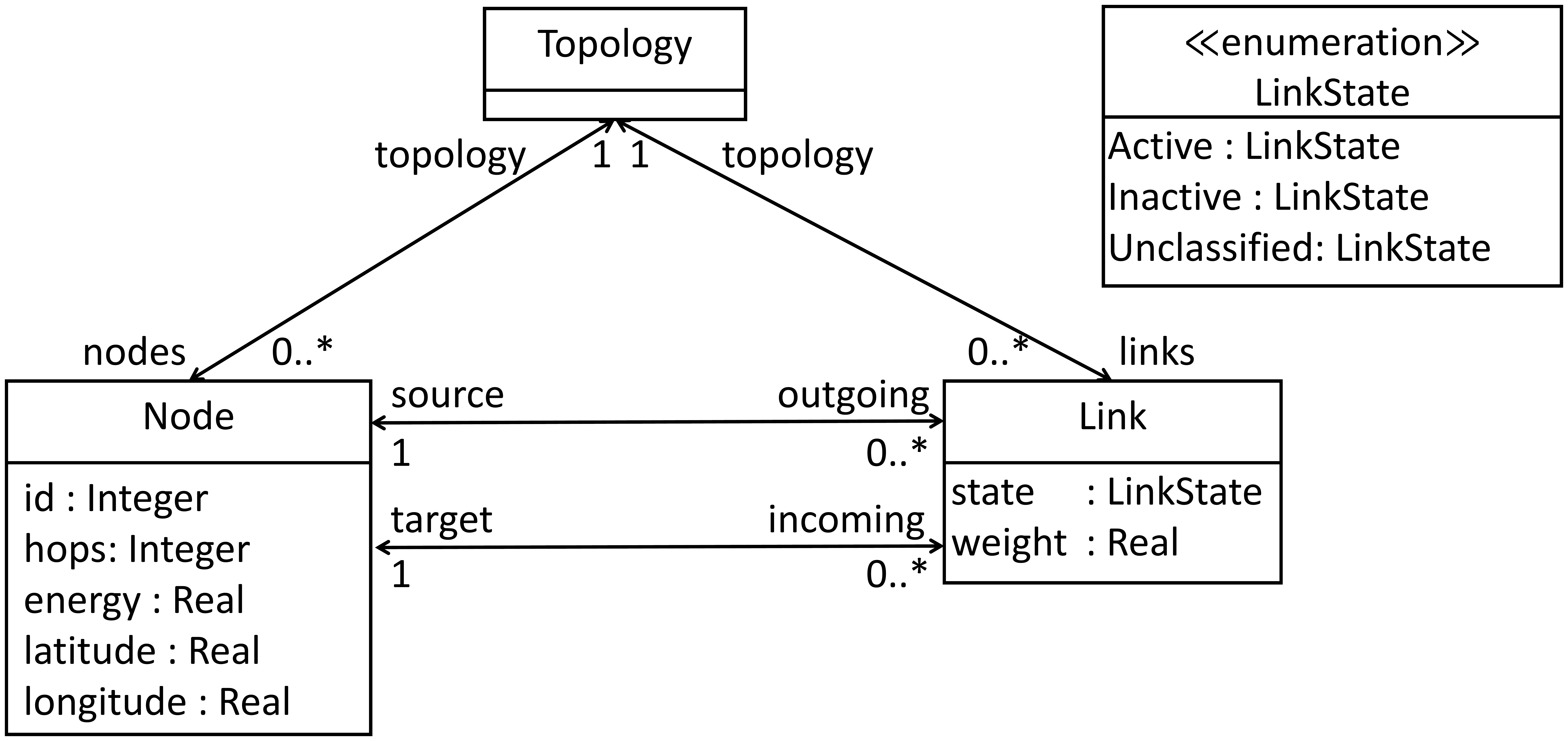}
        \caption{Topology metamodel}
        \label{fig:topology-metamodel}
    \end{center}
\end{figure}

\Cref{fig:topology-example-triangle} shows a \emph{(directed) triangle} of links in object and compact notation.
In this example, latitude and longitude are Euclidean coordinates, link weights represent the Euclidean distance between the incident nodes, and the hop count is relative to node \nodeVariableOne.
Throughout this paper, we assume that every node and link is part of a single topology $G$ with node set $V$ and edge set $E$.
For brevity, we use the compact notation and depict only the relevant attribute values (\eg, the link weight in this case) in the following.
\begin{figure*}
    \begin{center}     
        \begin{subfigure}[t]{\textwidth}
            \includegraphics[width=.99\textwidth]{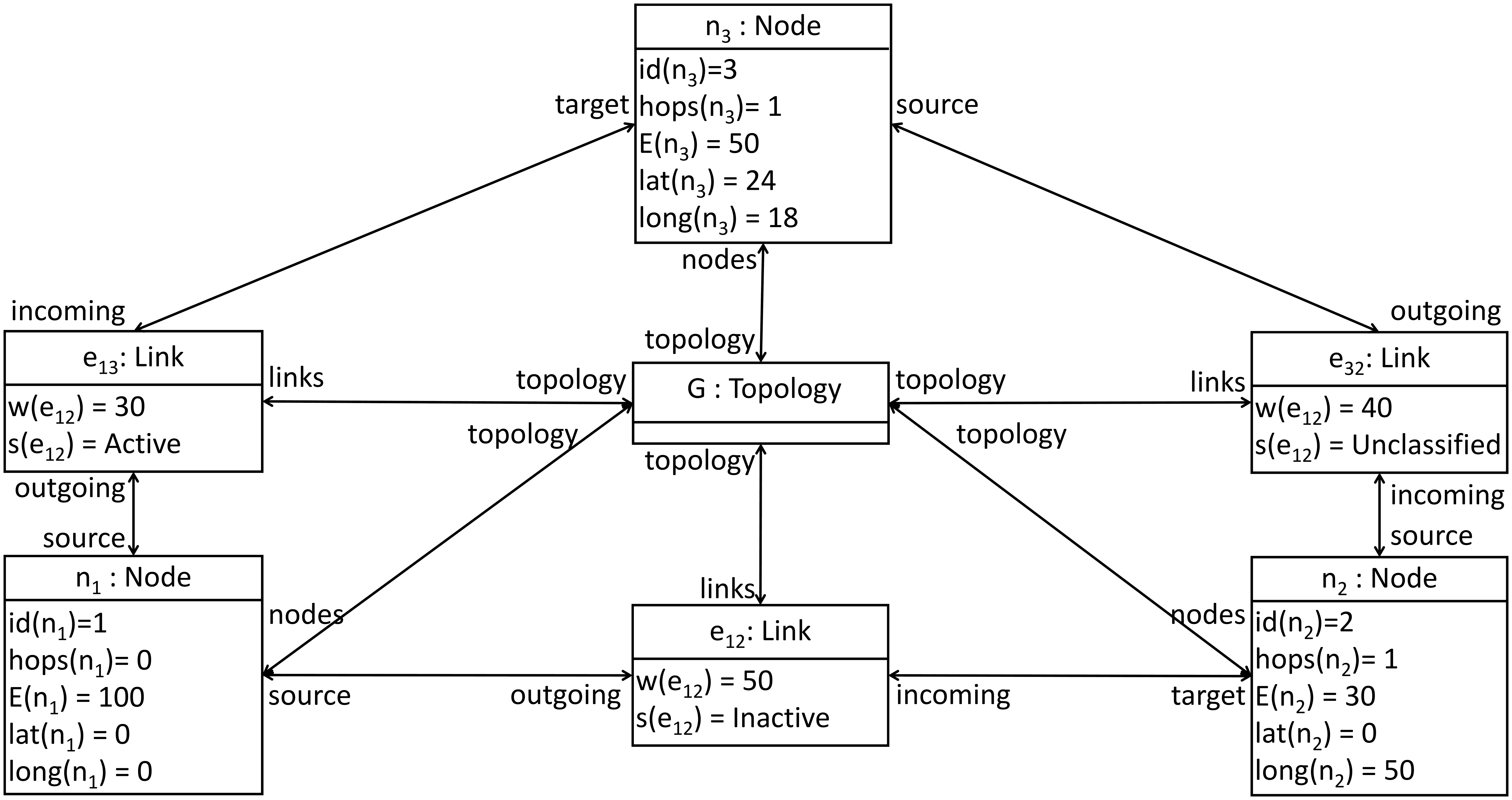}
            \caption{Object notation}
            \label{fig:topology-example-triangle-abstract}
        \end{subfigure}
        
        \begin{subfigure}[t]{\textwidth}
            \begin{center}
                \includegraphics[width=.3\textwidth]{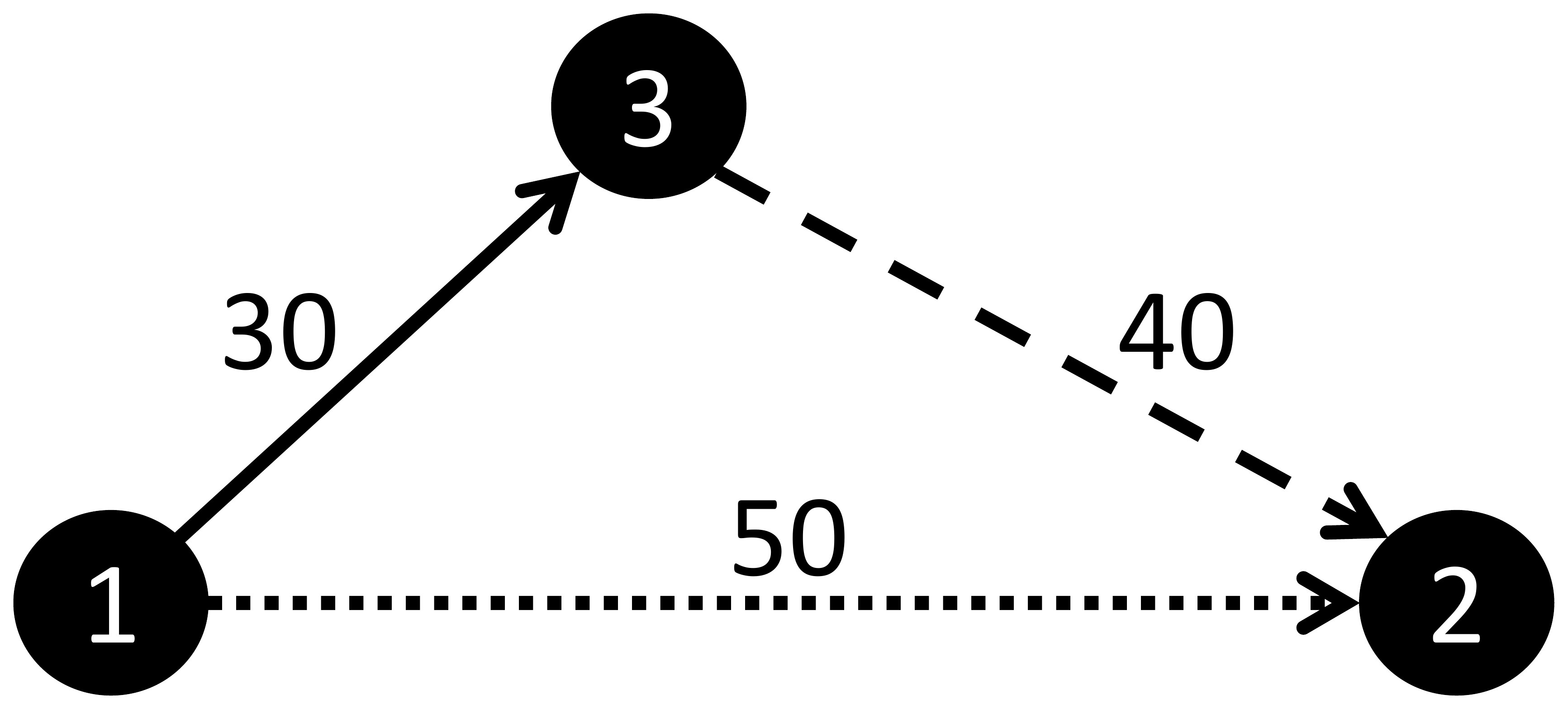}        
                \hspace{4em}
                \includegraphics[width=.3\textwidth]{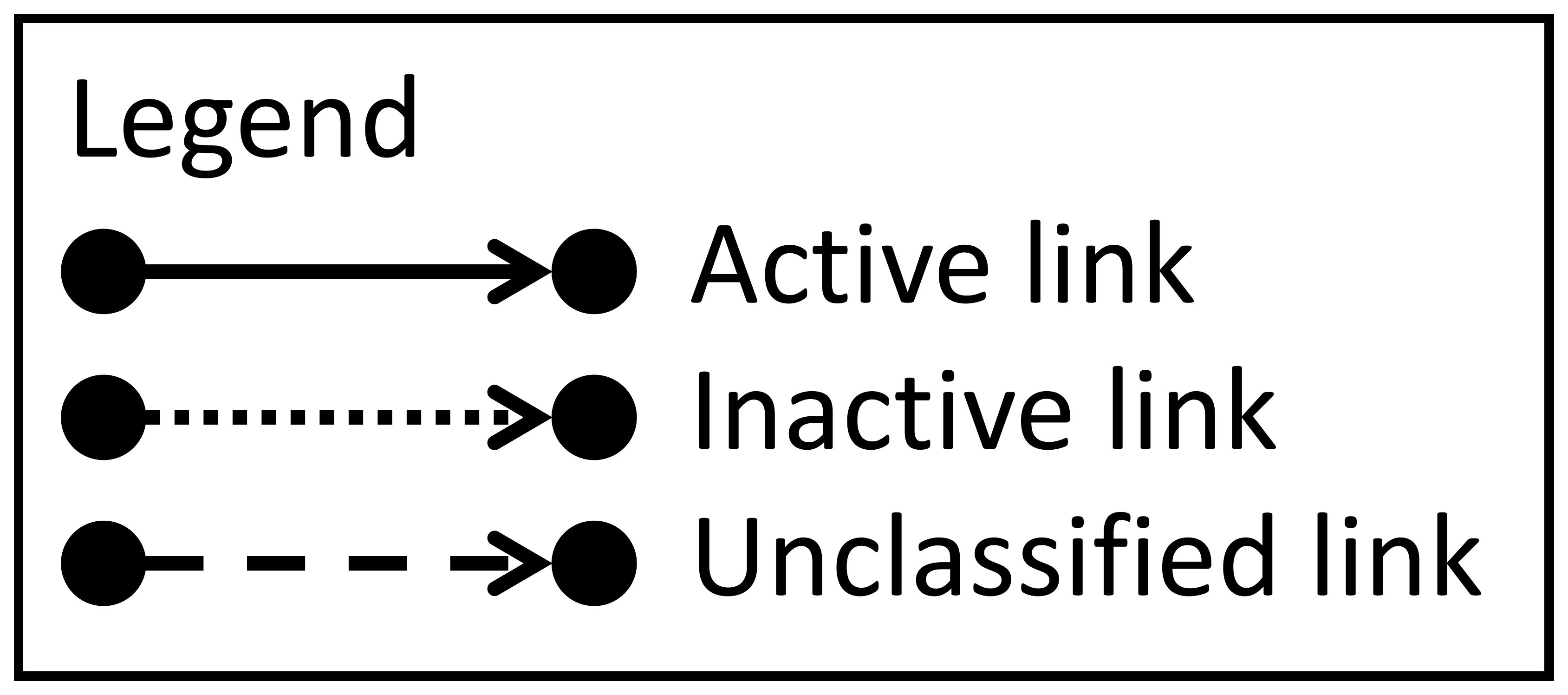}      
            \end{center}
            \caption{Compact notation}
            \label{fig:topology-example-triangle-concrete}
        \end{subfigure}
        \caption{Triangle topology in object and compact notation. Hop count is relative to \nodeVariableOne.}
        \label{fig:topology-example-triangle}
    \end{center}
\end{figure*}

\subsection{Topology Control}
\label{sec:topology-control}
Topology control (\TC{}) is the discipline of adapting wireless sensor network topologies to optimize network metrics.
As described earlier, wireless sensor nodes are typically battery-powered, and often the energy source is not (easily) exchangeable or rechargeable.
This makes prolonging the network lifetime a key optimization goal for \WSNs~\cite{Santi2005}.

\Cref{fig:topology-control-steps} sketches the three phases of the \TC process: topology monitoring, planning, and execution.
\begin{figure}[htbp]
    \begin{center}
        \includegraphics[width=\textwidth]{./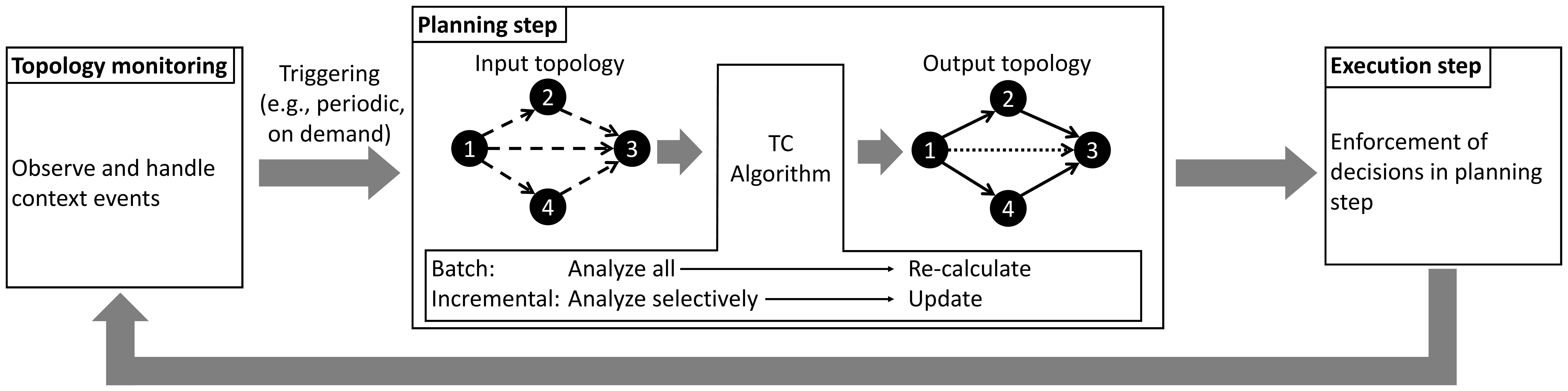}
        \caption{Topology control process}
        \label {fig:topology-control-steps}
    \end{center}
\end{figure}
The \emph{topology monitoring} detects \emph{context events}, which are external modifications of the physical topology.
In this paper, we consider the following six types of \emph{context event}s: node addition, node removal, link addition, link removal, node property modification, and link property modification.
The execution of the \TC process is \emph{triggered} either periodically or on-demand, \eg, when a batch of context events has finished.
In the \emph{planning step}, the \TC algorithm analyzes the input topology and produces a corresponding output topology.
The output topology contains all links of the input topology that are necessary to fulfill the specified consistency properties (\eg, reduced node degree or connectivity).
We distinguish between batch and incremental \TC algorithms:
A \emph{batch \TC algorithm} analyzes the entire input topology and outputs an entire output topology.
An \emph{incremental \TC algorithm} selectively analyzes the modified parts of the topology and updates the output topology accordingly.
This behavior requires that all links that are modified, added, and removed between two iterations of the planning step are marked.
In the \emph{execution step}, the sensor node ensures that all links in the output topology are available for message transfer.
The planning step is a link classification problem, whereas the execution step and the topology monitoring are highly platform-specific tasks.
Therefore, we focus on the planning step in this paper and leave the necessary refinement of the metamodel for the other two steps as future work.

In \Cref{fig:topology-control-steps}, we use a joint representation for the input and output topology.
The physically possible links in the topology are shown as directed lines. 
The decision of the \TC algorithm is stored as \emph{state} \state{\linkVariableab} attribute for each link \linkVariableab.
This representation allows us to store previous decisions of the \TC algorithm for each link, which is essential for incremental \TC algorithms.
More precisely, we say that a link \linkVariableab in the output topology is 
\begin{itemize}

\item 
\emph{active} if $\state{\linkVariableab}{=}\ACT$ if it is part of the output topology (denoted as solid line in compact notation, \eg, \tikz[baseline=(char.base)]{\node[shape=circle,draw,inner sep=2pt,text=white,fill=black] (act1) {$1$}; \node[right of=act1,shape=circle,draw,inner sep=2pt,text=white,fill=black] (act2) {$2$}; \draw[-angle 45,very thick,black,solid] (act1) -- (act2);}), 

\item 
\emph{inactive} if $\state{\linkVariableab}{=}\INACT$ if is not part of the output topology (denoted as dotted line in compact notation \eg, \tikz[baseline=(char.base)]{\node[shape=circle,draw,inner sep=2pt,text=white,fill=black] (act1) {$1$}; \node[right of=act1,shape=circle,draw,inner sep=2pt,text=white,fill=black] (act2) {$2$}; \draw[-angle 45,very thick,black,dotted] (act1) -- (act2);}),

\item
\emph{classified} if $\state{\linkVariableab} \in \{\ACT, \INACT\}$  if the \TC algorithm has made a decision for \linkVariableab (denoted as mixed dotted-solid line in compact notation, \eg, \tikz[baseline=(char.base)]{\node[shape=circle,draw,inner sep=2pt,text=white,fill=black] (act1) {$1$}; \node[right of=act1,shape=circle,draw,inner sep=2pt,text=white,fill=black] (act2) {$2$}; \draw[-angle 45,very thick,black,densely dashdotted] (act1) -- (act2);}),

\item
\emph{unclassified} $\state{\linkVariableab} = \UNCL$  if either the \TC algorithm has not considered link \linkVariableab yet or a context event has invalidated the decision of the \TC algorithm (denoted as dashed line in compact notation, \eg, \tikz[baseline=(char.base)]{\node[shape=circle,draw,inner sep=2pt,text=white,fill=black] (act1) {$1$}; \node[right of=act1,shape=circle,draw,inner sep=2pt,text=white,fill=black] (act2) {$2$}; \draw[-angle 45,very thick,black,dashed] (act1) -- (act2);}), and

\item
\emph{undefined} if we either do not know or do not care about the state of \linkVariableab (denoted as gray line in compact notation, \eg, \tikz[baseline=(char.base)]{\node[shape=circle,draw,inner sep=2pt,text=white,fill=black] (act1) {$1$}; \node[right of=act1,shape=circle,draw,inner sep=2pt,text=white,fill=black] (act2) {$2$}; \draw[-angle 45,thick,gray,solid] (act1) -- (act2);}).
\end{itemize}
A \emph{link state modification} is the modification of the state of a single link, \idest, activation, inactivation, or unclassification of the link.

\subsection{Specifying Valid Output Topologies with First-Order Logic Predicates}
\label{sec:predicates}

In the following, we specify required properties of output topologies in terms of first-order logic predicates.
We begin with two general properties that must hold for any \TC algorithm.
Afterwards, we introduce additional algorithm-specific conditions for seven \TC algorithms (six existing \TC algorithms and one new \TC algorithm variant).

\subsubsection{General Required Properties of Output Topologies}
\label{sec:general-properties}
Upon termination of every \TC algorithm, each link in the topology should be classified and the output topology should be connected.
These requirements are described by the following two predicates.

\paragraph{Complete Classification Constraint $\completeClassificationPredicate$:}
A \TC algorithm should make a definite decision for each link in the topology, \idest, the output topology of every \TC algorithm should only contain classified links.
This postcondition ensures that a \TC algorithm may only terminate after completely classifying the input topology;
more formally:
\begin{align}\label{eqn:no-unclassified-links}
    \completeClassificationPredicate(G(V,E)) \Leftrightarrow \forall \linkVariableab \in E: \state{\linkVariableab} \in \{\ACT, \INACT\}
\end{align}

\paragraph{A-connectivity Predicate $\AConnectivityPredicate$:}
The output topology must be connected, \idest, each pair of nodes $\nodeVariablea, \nodeVariableb \in V$ must be connected by a path $P_{\text{out}}(\nodeVariablea,\nodeVariableb)$ of active links if a path $P_{\text{in}}(\nodeVariablea, \nodeVariableb)$ of edges exists in the input topology.
This requirement can be described by the following A-connectivity predicate $\AConnectivityPredicate$:
\begin{align}\label{eqn:active-link-connectivity}
    \AConnectivityPredicate(G(V,E))\Leftrightarrow  &\forall (\nodeVariablea, \nodeVariableb) \in V \times V: 
     \notag\\
    &\big(
    \exists P_\text{in}(\nodeVariablea,\nodeVariableb) \Rightarrow 
    \exists P_\text{out}(\nodeVariablea,\nodeVariableb): \forall \linkVariable \in P_{\text{out}}:
        \state{\linkVariable} = \ACT 
    \big)
\end{align}
Note that A-connectivity can only be evaluated based on a global view of the topology, whereas complete classification can be checked based on local knowledge of each node's outgoing links.

\subsubsection{Algorithm-Specific Properties}
\label{sec:algorithm-specific-predicates}

Each \TC algorithm has specific optimization goals, which jointly describe when a link may be inactive in a valid output topology;
all links that do not fulfill these conditions have to be active.
As an example, we consider the \TC algorithm \ktc~\cite{SWBM12}.
In a valid output topology of \ktc, a link \linkVariableab is inactive if and only if 
\begin{inparaenum}
\item 
it is the weight-maximal link in a triangle, together with classified links \linkVariableac and \linkVariablecb, and 
\item its weight is additionally \ktcParameterK times larger than the weight of the weight-minimal link in the same triangle;
\end{inparaenum}
more formally:
\begin{align}\label{eqn:ktc-inactive-link-constraint}
\begin{split}
\forall \linkVariableab \in E:\;&\;\state{\linkVariableab} = \INACT \\
& \Leftrightarrow\; \linkVariableab \text{ is in a triangle with classified links } \linkVariableac, \linkVariablecb \\
& \wedge\; \weight{\linkVariableab} \geq \max\left(\weight{\linkVariableac}, \weight{\linkVariablecb}\right) \\
& \wedge\; \weight{\linkVariableab} \geq k \cdot \min\left(\weight{\linkVariableac}, \weight{\linkVariablecb}\right). 
\end{split}
\end{align}
The core idea of \ktc is that it is often beneficial to use multiple shorter (\idest, more energy-efficient) links (here: \linkVariableac and \linkVariablecb) instead of one long link (here: \linkVariableab) for transferring a message because the required transmission power grows at least quadratically with the length of a link~\cite{Fri46}.

A closer look at \Cref{eqn:ktc-inactive-link-constraint} reveals that the algorithm-specific condition contains a structural predicate (the first line, here: a triangle) and an additional attribute predicate (the remaining two lines), which refers to the links identified by the structural predicate.
In fact, this is a recurring property in specifications of \TC algorithms.
We express this separation in the following reformulation of \Cref{eqn:ktc-inactive-link-constraint}.
\begin{align}\label{eqn:ktc-inactive-link-constraint-split}
\begin{split}
\forall \linkVariableab \in E:\;&\;\state{\linkVariableab} = \INACT \\
& \Leftrightarrow\; \exists \linkVariableac, \linkVariablecb :   \trianglePredicate{\linkVariableab, \linkVariableac, \linkVariablecb} \wedge\ \ktcPredicate{\linkVariableab, \linkVariableac, \linkVariablecb}\\ 
&\text{with}\\
\trianglePredicate{\linkVariableab, \linkVariableac, \linkVariablecb} &= \linkVariableab \text{ is in a triangle with classified links } \linkVariableac, \linkVariablecb\\
\ktcPredicate{\linkVariableab, \linkVariableac, \linkVariablecb} &= \weight{\linkVariableab} \geq \max\left(\weight{\linkVariableac}, \weight{\linkVariablecb}\right) \\
& \wedge\; \weight{\linkVariableab} \geq k \cdot \min\left(\weight{\linkVariableac}, \weight{\linkVariablecb}\right).
\end{split}
\end{align}
The \emph{directed-triangle predicate} \trianglePredicate{} reflects the structural condition that an inactive link \linkVariableab must be part of a triangle.
The \emph{\ktc predicate \ktcPredicate{}} specifies the condition that \linkVariableab must be the weight-maximal link and at least \ktcParameterK~times larger than the weight-maximal link among \linkVariableac and \linkVariablecb.

\paragraph{Example: Incremental \TC using \ktc}
\Cref{fig:example-ktc} shows the evolution of a sample topology (\Cref{fig:example-ktc-A}).
For conciseness, we show a link and its reverse link as a single double-headed line.
The topology is first optimized by invoking \ktc ($k=2$) (\Cref{fig:example-ktc-B}).
Then, two links \linkName{79} and \linkName{97} are added (\eg, because an obstacle between \nodeNameLong{7} and \nodeNameLong{9} has moved out of the way), and node \nodeName{10} is removed (\eg, because its battery is empty).
The resulting topology is shown in \Cref{fig:example-ktc-C}.
The \CE handling has unclassified the new links \linkName{79} and \linkName{97} and all links around the removed \nodeNameLong{10}.
Finally, the topology is processed by \ktc again (\Cref{fig:example-ktc-D}).
Now, the added links \linkName{79} and \linkName{97} as well as the formerly inactive links \linkName{3,11}, \linkName{11,3}, \linkName{9,11} and \linkName{11,9} are active, and the links \linkName{3,9} and \linkName{9,3} are inactive.
\begin{figure*}
    \begin{center}
        \subcaptionbox{Initial topology\label{fig:example-ktc-A}}[.4\textwidth]{\includegraphics[width=.4\textwidth]{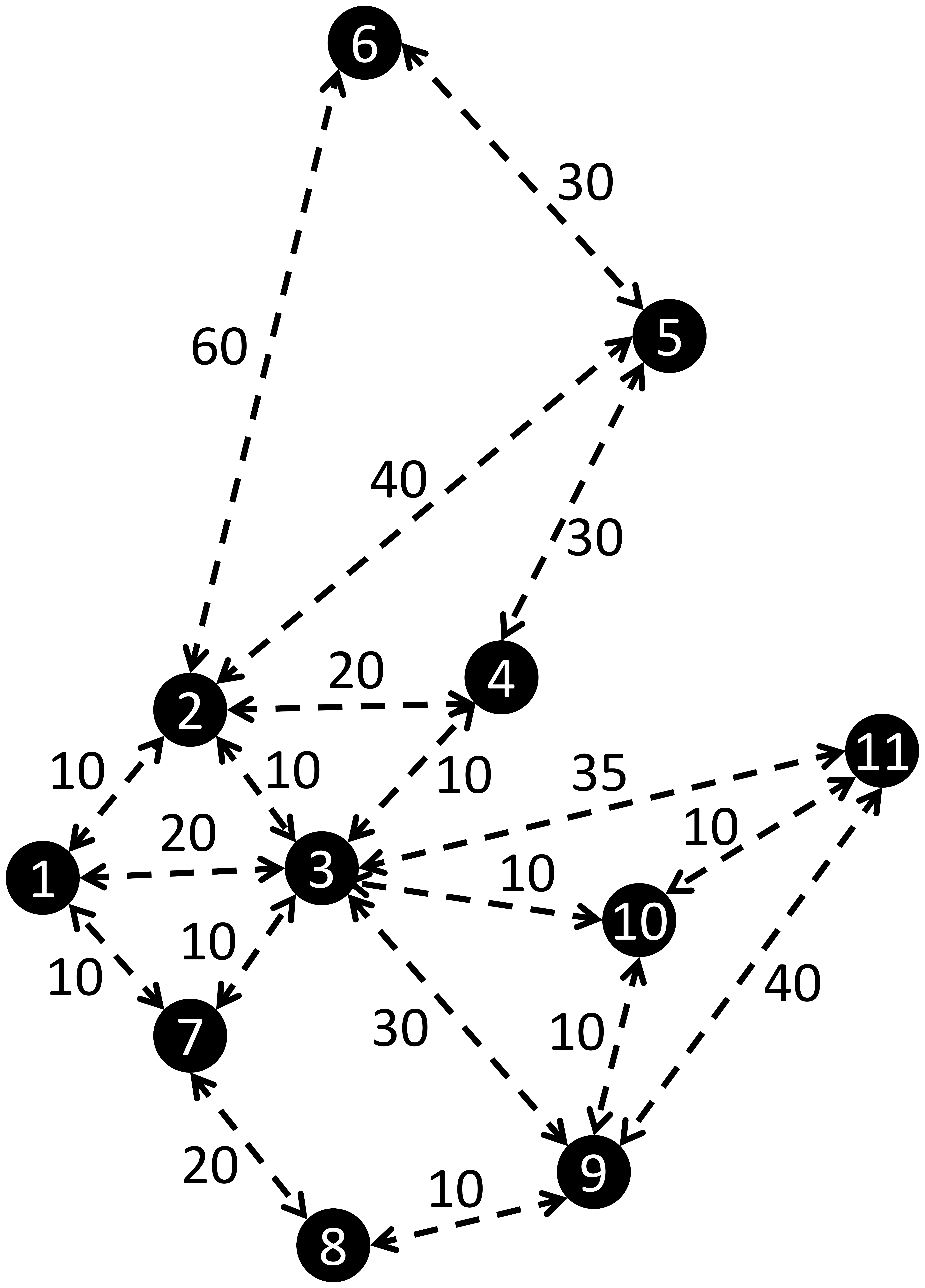}}\hspace{1em}
        \subcaptionbox{After initial \TC \label{fig:example-ktc-B}}[.4\textwidth]{\includegraphics[width=.4\textwidth]{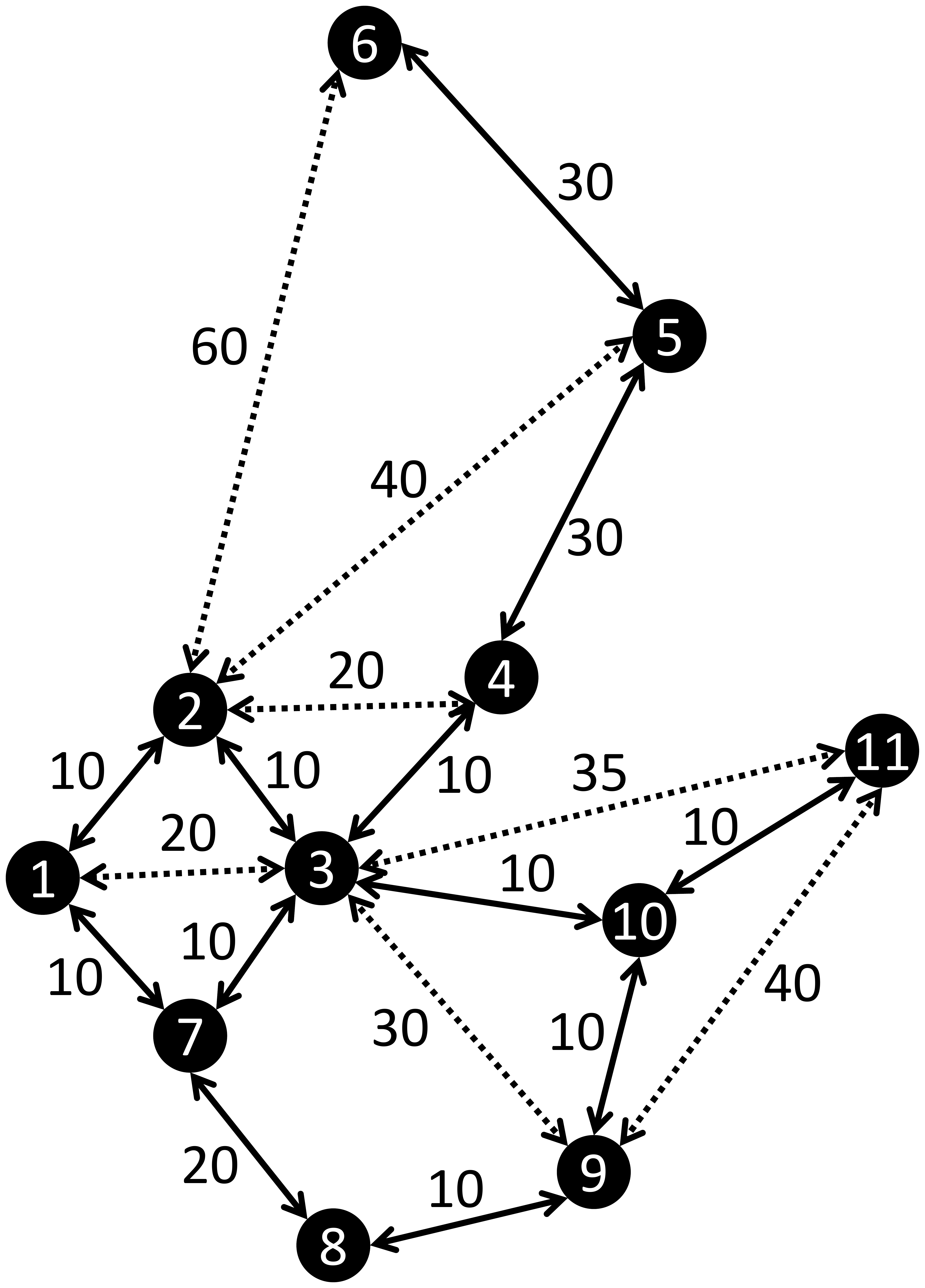}}
        
        \subcaptionbox{After addition of \linkName{79}, \linkName{97} and rem. of \nodeName{10}\label{fig:example-ktc-C}}[.4\textwidth]{\includegraphics[width=.4\textwidth]{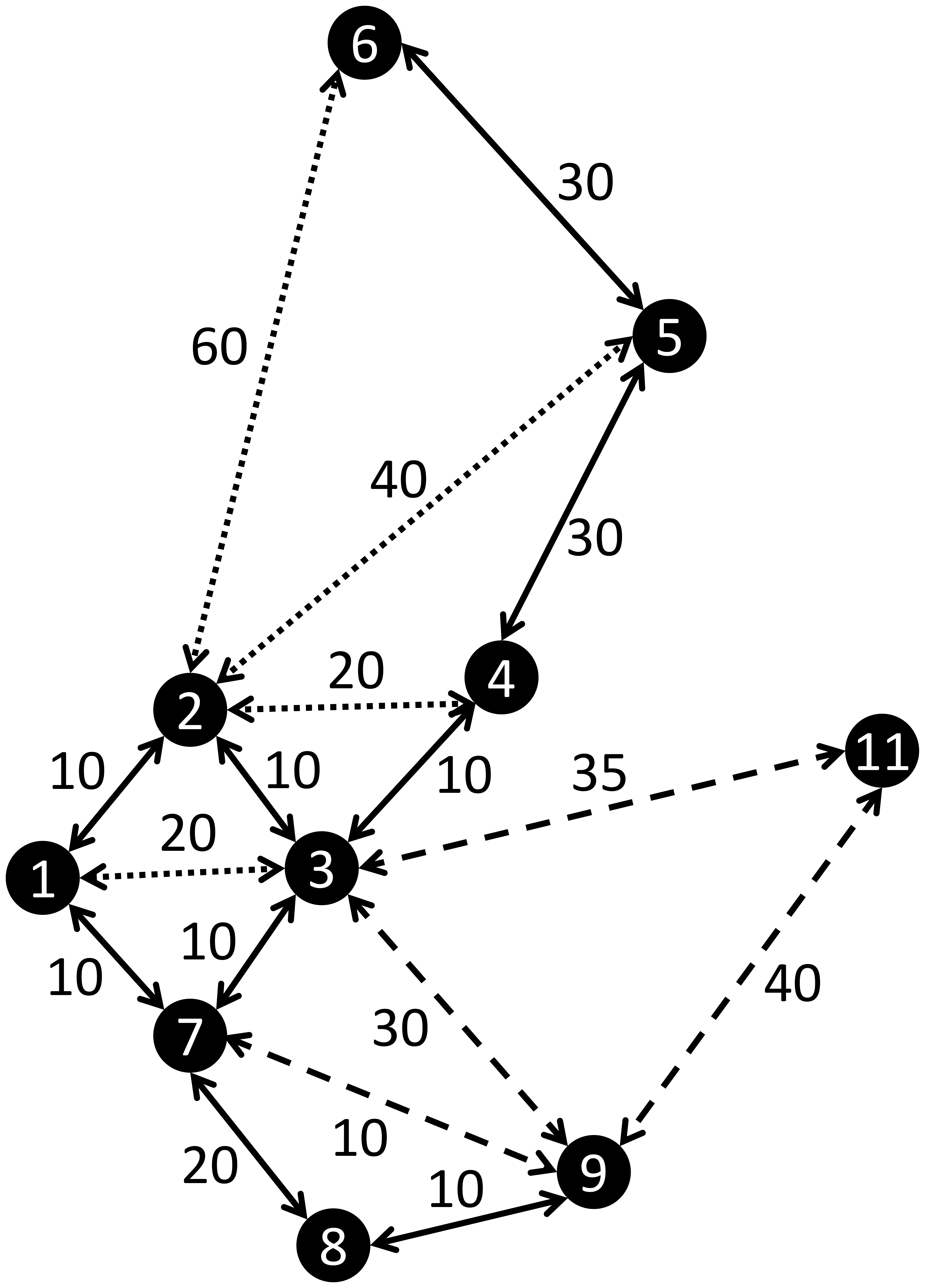}}\hspace{1em}
        \subcaptionbox{After incremental \TC\label{fig:example-ktc-D}}[.4\textwidth] {\includegraphics[width=.4\textwidth]{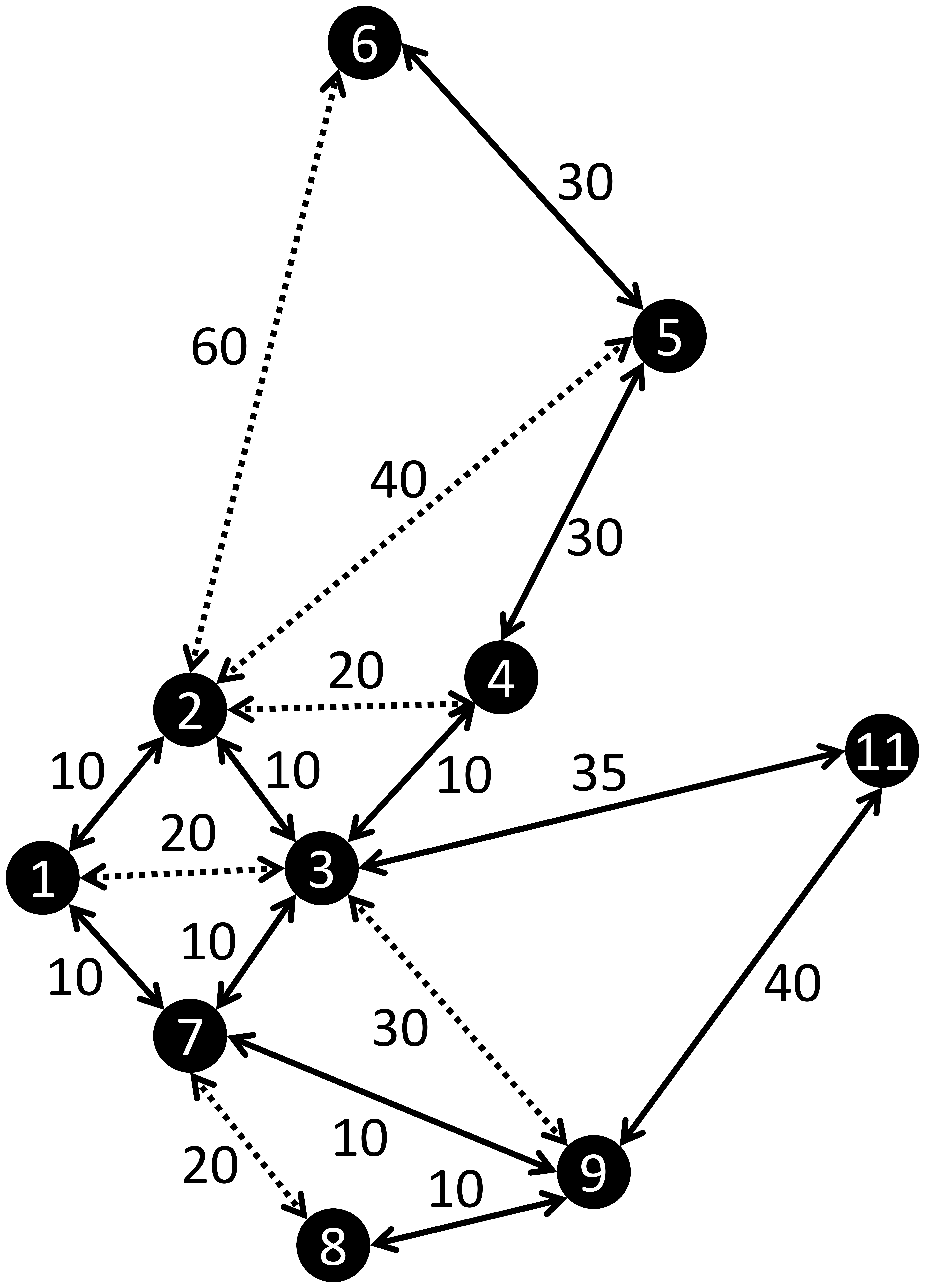}}
        
        \subcaptionbox{Legend}[.4\textwidth] {\includegraphics[width=.4\textwidth]{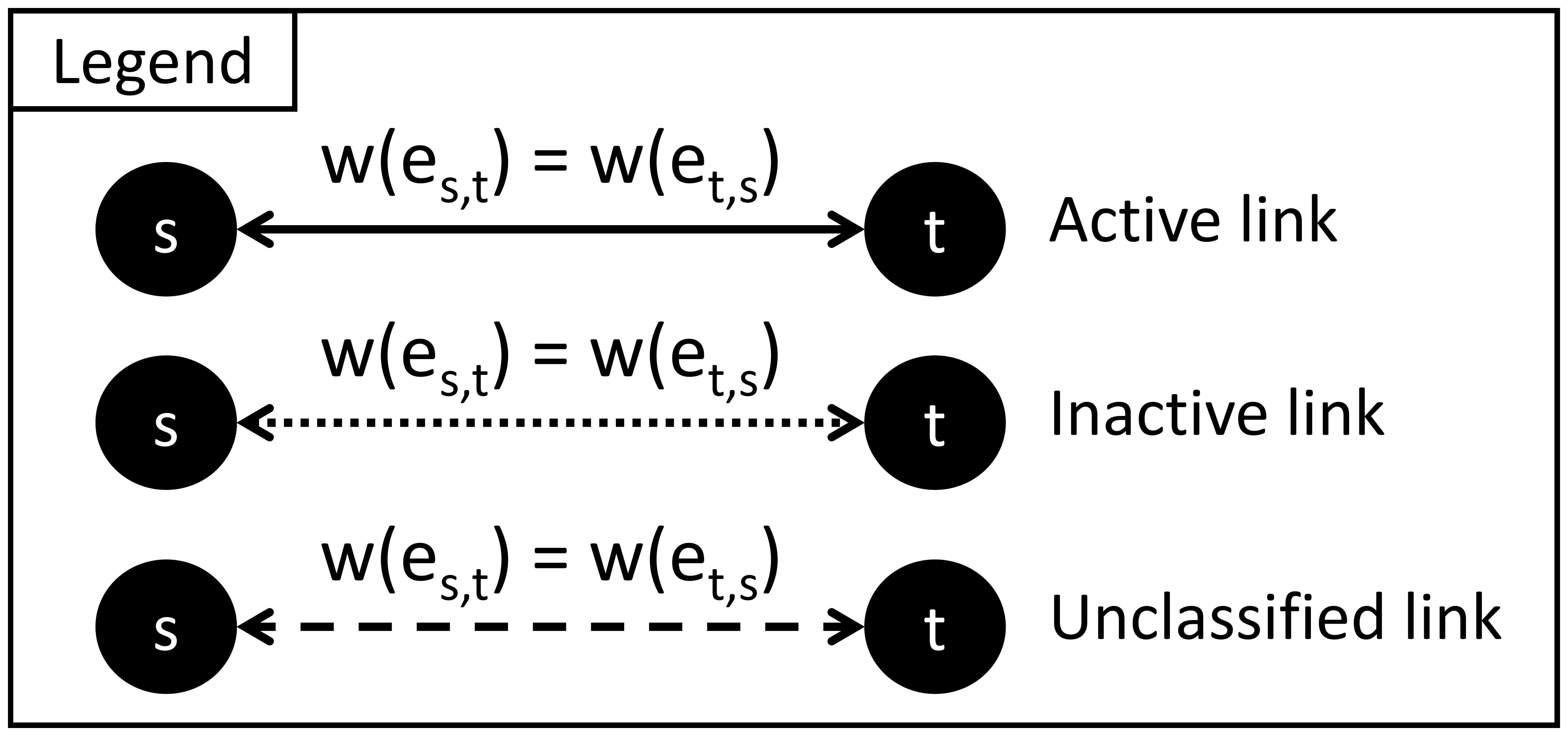}}
        
    \end{center}
    \caption{Example of incremental \TC with \ktc ($k=2$)}
    \label{fig:example-ktc}
\end{figure*}

\subsubsection{The Tie-Breaking Predicate \tieBreakingPredicate{}}
\label{sec:tie-breaking-predicate}

A recurring issue while developing \TC algorithms is that more than one link, \eg, in a triangle, may fulfill the algorithm-specific predicate, which may cause multiple links to be inactivated.
In case of \ktc, this may even lead to a disconnected output topology.
As a resort, tie breaker are applied in such situations.
For instance, a link \linkVariableab is only inactivated if it has the largest identifier compared to all other links in the triangle that fulfill the algorithm-specific predicate.
For triangles of weighted links, the \emph{\tieBreakingPredicateLong{}} is defined as follows:
\begin{align}\label{eqn:tie-breaking-predicate}
\begin{split}
\tieBreakingPredicate{\linkVariableab, \linkVariableac, \linkVariablecb} &= \left(\weight{\linkVariableab} = \weight{\linkVariableac}  
\Rightarrow \id{\linkVariableab} > \id{\linkVariableac}\right)\\
&\;\wedge\;\left(\weight{\linkVariableab} = \weight{\linkVariablecb}
\Rightarrow \id{\linkVariableab} > \id{\linkVariablecb}\right)\\
\end{split}
\end{align}
The following \Cref{eqn:ktc-inactive-link-constraint-tie-breaker} shows how the \tieBreakingPredicateLong{} can be used to compose a variant of \ktc that is guaranteed to inactivate only one weight-maximal link in each triangle.
\begin{align}\label{eqn:ktc-inactive-link-constraint-tie-breaker}
\begin{split}
\state{\linkVariableab} = \INACT 
& \Leftrightarrow \exists \linkVariableac, \linkVariablecb :   \trianglePredicate{\linkVariableab, \linkVariableac, \linkVariablecb} \wedge \ktcPredicate{\linkVariableab, \linkVariableac, \linkVariablecb} \wedge \tieBreakingPredicate{\linkVariableab, \linkVariableac, \linkVariablecb}\\ 
&\text{with}\\
\trianglePredicateOp{\linkVariableab, \linkVariableac, \linkVariablecb} &= \linkVariableab \text{ is in a triangle with classified links } \linkVariableac, \linkVariablecb\\
\ktcPredicate{\linkVariableab, \linkVariableac, \linkVariablecb} &= \weight{\linkVariableab} \geq \max\left(\weight{\linkVariableac}, \weight{\linkVariablecb}\right) \\
& \wedge\; \weight{\linkVariableab} \geq k \cdot \min\left(\weight{\linkVariableac}, \weight{\linkVariablecb}\right)\\
\tieBreakingPredicate{\linkVariableab, \linkVariableac, \linkVariablecb} &= \left(\weight{\linkVariableab} = \weight{\linkVariableac}  
\Rightarrow \id{\linkVariableab} > \id{\linkVariableac}\right)\\
&\;\wedge\;\left(\weight{\linkVariableab} = \weight{\linkVariablecb}
\Rightarrow \id{\linkVariableab} > \id{\linkVariablecb}\right)\\
\end{split}
\end{align}

\subsubsection{Maxpower Topology Control}
\label{sec:maxpower}
The \MaxpowerTC algorithm activates all links in a topology.
Its name derives from the fact that its output topology contains all links that are available if the node transmits with maximum power.
\MaxpowerTC is a generally accepted baseline for performing network evaluations.
The predicate \maxpowerPredicate{} of this algorithm is \emph{false} because it never inactivates a link:
Note that \MaxpowerTC does not even require the additional triangle-identifying predicate \trianglePredicate{}, \idest, the full specification of \MaxpowerTC looks as follows:
\begin{align*}
\forall \linkVariableab \in E: \state{\linkVariableab} &= \INACT \Leftrightarrow \maxpowerPredicate{\linkVariableab}\\ &\text{with } \maxpowerPredicate{\linkVariableab}= \text{false}.
\end{align*}

\subsubsection{XTC Algorithm}
The idea behind the XTC algorithm \cite[Sec.~3]{Wattenhofer2004} is that a large link weight indicates a low link quality.
A link \linkVariableab is inactive in the output topology of XTC if there exist links of higher quality, \idest, smaller weight, that connect the source with the target of \linkVariableab, possibly via multiple intermediate links.
This is equivalent to the following property:
A link in the output topology of the XTC algorithm is inactive if it is the weight-maximal link in some triangle;
more formally:
$\weight{\linkVariableab} > \max\left(\weight{\linkVariableac}, \weight{\linkVariablecb}\right)$.
In \cite[Sec.~4]{Wattenhofer2004}, the authors refine the XTC algorithm to using the same tie breaking predicate as in \Cref{sec:tie-breaking-predicate}:
Whenever the triangle contains multiple links of the same minimum quality (\idest, of the same maximum weight), only the link with the unique maximum ID is considered.
Therefore, we define the \xtcPredicateLong{} as
\begin{align*}
\xtcPredicate{\linkVariableab, \linkVariableac, \linkVariablecb}&= \weight{\linkVariableab} \geq \max(\weight{\linkVariableac}, \weight{\linkVariablebc})\\
&\;\wedge\; \tieBreakingPredicate{\linkVariableab, \linkVariableac, \linkVariablecb}.
\end{align*}

\subsubsection{Gabriel Graph Algorithm}
The output graph of the Gabriel Graph (GG) algorithm~\cite{Rodoplu1999,Wang08} conforms to the following geometric definition.
A graph is a GG if for each link \linkVariableab, the circle with diameter \linkVariableab and center between \nodeVariablea and \nodeVariableb contains no nodes apart from \nodeVariablea and \nodeVariableb~\cite{Gabriel1969}.
The original formulation is position-based, \idest, each node requires knowledge about its latitude and longitude.
By Thales' theorem~\cite[p.~50]{Agricola2008}, the following equivalent formulation can be obtained.
In each triangle, a link is inactive if its squared weight is smaller than the sum of the squared weights of the other links \linkVariableac and \linkVariablecb:
\begin{align*}
\ggPredicate{\linkVariableab, \linkVariableac, \linkVariablecb} = \weightsquared{\linkVariableab} > \weightsquared{\linkVariableac} +  \weightsquared{\linkVariablecb}.
\end{align*}

\subsubsection{Relative Neighborhood Algorithm}
The output topology of the Relative Neighborhood Graph (RNG) algorithm~\cite{Karp2000,Wang08} is an RNG, which is defined as follows.
In each triangle, the weight of the weight-maximal link \linkVariableab in the triangle must be less than or equal to the weight of the other links \linkVariableac and \linkVariablecb.
This means that a link \linkVariableab is inactive if it is part of a triangle with shorter links \linkVariableac and \linkVariablecb:
\begin{align*}
\rngPredicate{\linkVariableab, \linkVariableac, \linkVariablecb} &= \weight{\linkVariableab} > \weight{\linkVariableac} \wedge \weight{\linkVariableab} > \weight{\linkVariablecb}
\end{align*}
Note that the RNG predicate is similar to XTC, which may lead to the impression that the output topology of XTC is (almost) identical to RNG.
However, this is true only if 
\begin{inparaenum}
\item 
the link weight (as used by XTC) correlates strictly negatively with the Euclidean distance (used by RNG), and
\item 
link-distances are unique because RNG applies the $>$-operator while XTC applies the $\geq$-operator with ID-based tie breaking.
\end{inparaenum}

\subsubsection{Local \ktc Algorithm}
\lktc~\cite{Stein2016b} is a variant of \ktc that is tailored to many-to-one communication scenarios (\eg, data collection).
Here, the hop count attribute \hopcount{\nodeVariablea} stores the length (in hops) of the shortest path from \nodeVariablea to a dedicated base station node.
In the output topology of \lktc, a link \linkVariableab is inactive if
\begin{inparaenum}
\item \linkVariableab fulfills \ktcPredicate{} and
\item if its inactivation does not extend the length of the path to the base station by more than a factor $a$:
\end{inparaenum}
\begin{align}\label{eqn:lStarkTCPredicate}
\begin{split}
\lktcPredicate{\linkVariableab, \linkVariableac, \linkVariablecb} &= \ktcPredicate{\linkVariableab, \linkVariableac, \linkVariablecb}  \\
& \wedge \min(\hopcount{\nodeVariablea}, \hopcount{\nodeVariableb}, \hopcount{\nodeVariablec}) \geq 0\\
& \Big(\hopcount{\nodeVariablea} = \hopcount{\nodeVariableb} \Rightarrow true\\
& \wedge \hopcount{\nodeVariablea} > \hopcount{\nodeVariableb} \Rightarrow \frac{\hopcount{\nodeVariablec} + 1}{\max(1, \hopcount{\nodeVariablea})} < a\\
& \wedge \hopcount{\nodeVariablea} < \hopcount{\nodeVariableb} \Rightarrow \frac{\hopcount{\nodeVariablec} + 1}{\max(1, \hopcount{\nodeVariableb})} < a
\Big)
\end{split}
\end{align}
The second line of \Cref{eqn:lStarkTCPredicate} ensures that the hop count is defined for each participating node.
The third line covers the case that \nodeVariablea and \nodeVariableb have the same distance to the base station.
The fourth line considers the case when \nodeVariablea is farther away from the base station than \nodeVariableb.
In this case, we may estimate the path length after inactivating \linkVariableab as $\hopcount{\nodeVariablec} + 1$.
As \nodeVariablea may be the base station itself (\idest, $\hopcount{\nodeVariablea} = 0$), we ensure that the denominator is always at least 1.
The fifth line is symmetric to the fourth line.
More details of the algorithm can be found in \cite[p.5]{Stein2016b}.

\subsubsection{Yao Graph Algorithm}
\label{sec:yao-algorithm}
The Yao graph algorithm~\cite{Yao1982} is the only location-dependent \TC algorithm considered in this paper.
This means that it requires information about the latitude and longitude of each node.
The Yao graph algorithm separates the environment of a node into \emph{cone}s of uniform angle.
If we denote the cone count with \coneCount, each cone covers an angle of $\frac{\SI{360}{\degree}}{\coneCount}$.
A link \linkVariableac is inactive if a link \linkVariableac in the same cone exists that has a smaller weight:
\newcommand{\alphaCone}{\ensuremath{\alpha_\text{\,cone}}}
\begin{align*}
\yaoPredicate{\linkVariableab, \linkVariableac, \linkVariablecb}&= \weight{\linkVariableab} > \weight{\linkVariableac}\\
&\;\wedge\;\exists x \in \{1, 2, \dots, \coneCount\}: \\
&\Big(\alphaCone \cdot (x - 1) \leq \angleCmd{\linkVariableab}  < \alphaCone \cdot x\\
&\;\wedge\;\alphaCone \cdot (x - 1) \leq \angleCmd{\linkVariableac}  < \alphaCone \cdot x
\Big) \\
&\text{ with } \alphaCone = \frac{360\degree}{\coneCount}\\	
\end{align*}

\subsubsection{\ektc Algorithm}\label{sec:ektc}

As the last \TC algorithm considered in this paper, we derive a novel, energy-aware variant of \ktc, called \ektc.
Its distinctive feature is that it considers the remaining energy of nodes.
We begin with an illustrative example that highlights a situation in which the output topology of \ktc is suboptimal \wrt network lifetime.
Afterwards, we present \ektc and show that it improves the network lifetime of the sample topology.

\paragraph{Network Lifetime}
In the \WSN community, extending the lifetime of a network is a key optimization goal.
There are many alternative definitions of network lifetime~\cite{Yunxia2005}.
We apply a definition that is tailored to the per-node lifetime.
A node \nodeVariablea is \emph{alive} if its remaining energy is positive, \idest, $\energy{\nodeVariablea}{} > 0$.
Likewise, a node \nodeVariablea is \emph{dead} if its battery is empty, \idest, $\energy{\nodeVariablea}{} = 0$.
The \emph{$d$-lifetime $L_d$} of a network is defined as the first point in time at which at least $d$ nodes are dead.\footnote{The variable $d$ alludes to the metric \emph{dead node count}.}
The following values of $d$ are of special interest:
\begin{itemize}
    \item $d = 1$ because the network is fully intact before $L_1$ and 
    \item $d = |V|$ because the network is no longer operational after $L_{|V|}$.
\end{itemize}
As a shorthand, $L_{x \si{\percent}}$ denotes the point in time when $x \si{\percent} \cdot |V|$ nodes are dead, \idest, $L_{x \si{\percent}} = L_{x \si{\percent} \cdot |V|}$.
For an energy-aware \TC algorithm, it is important to estimate the remaining lifetime of the topology.
This allows the \TC algorithm to proactively relieve nodes that would otherwise fail soon.
The \emph{expected $d$-lifetime \expectedRemainingLifetime{d}{G} of a topology $G$} estimates the $d$-lifetime of the topology.
For simplicity, we focus on \expectedRemainingLifetime{1}{G} in this paper.
The \emph{\expectedPowerLong{}{\linkVariableab}} for each link \linkVariableab represents the power that is required to reach \nodeVariableb from \nodeVariablea.
According to Friis' free space propagation model, \expectedPower{}{\linkVariableab} grows at least proportionally to the squared distance (here: weight) of \linkVariableab~\cite{Fri46}:
\begin{align*}
    \expectedPower{}{\linkVariableab} &\propto \weightsquared{\linkVariableab}
\end{align*}
We may estimate the \emph{\expectedRemainingLifetimeLong{}{\linkVariableab} of a node \nodeVariablea \wrt a link \linkVariableab} as follows:
\begin{align}\label{eqn:ExpectedRemainingLifetimePerLink}
\expectedRemainingLifetime{}{\linkVariableab} &= \frac{\energy{\nodeVariablea}{}}{\expectedPower{}{\linkVariableab}}
\end{align}
Here, we estimate the number of messages that can be transmitted with the remaining energy \energy{\nodeVariablea}{} of \nodeVariablea.
\Cref{eqn:ExpectedRemainingLifetimePerLink} presumes that transmitting a message consumes energy only at the sending node.
On real hardware, transmitting a message will also consume energy at the receiving node.
In this paper, we neglect this additional cost for simplicity, which is common in the network community (\eg, \cite{XH12}).
We define the \emph{\expectedRemainingLifetimeLong{}{\nodeVariablea} of a node \nodeVariablea} as the minimum \expectedRemainingLifetimeLong{}{\linkVariableab} of its outgoing links:
\begin{align*}
\expectedRemainingLifetime{}{\nodeVariablea} &= \min_{\linkVariableab \in E} \expectedRemainingLifetime{}{\linkVariableab}
\end{align*}
Finally, we lift this definition to topologies:
The \emph{\expectedRemainingLifetimeLong{1}{G} of a topology $G$} is the minimum \expectedRemainingLifetimeLong{}{\nodeVariablea} of its nodes:
\begin{align*}
\expectedRemainingLifetime{1}{G} &= \min_{\nodeVariablea \in V} \expectedRemainingLifetime{1}{\nodeVariablea}
\end{align*}

\begin{comment}
\paragraph{Fairness}
Another metric that typically correlates with the network lifetime is the fairness of energy consumption.
A standard metric for assessing the fairness of the current state of the topology is the standard deviation of the expected remaining lifetime per node~\cite{Safwat2003}.
The \emph{unfairness \unfairness{G_t} of the current topology $G_t$ at time $t$} is then defined as follows ($N = \cardinality{V_t}$ denotes the number of nodes in $G_t$):
\begin{align*}
\unfairness{G_t} &= \sqrt{\frac{1}{N-1} \cdot \sum_{\nodeVariablea \in V_t} \left( \expectedRemainingLifetime{1}{\nodeVariablea} - \overline{\expectedRemainingLifetime{1}{}} \right)^2} \text{ , where } \overline{\expectedRemainingLifetime{1}{}} = \frac{1}{N} \sum_{\nodeVariablea \in V_t} \expectedRemainingLifetime{1}{\nodeVariablea}.
\end{align*}
The larger the unfairness, the larger is the variation in expected lifetime among the nodes.
The goal of developing \ektc is to reduce the unfairness of the topology, \eg, compared to \ktc.
To compare two \TC algorithms \wrt (un)fairness, we compare the unfairness of their output topologies for identical input topologies at fixed points in time.
\end{comment}

\paragraph{Motivating example with \ktc}

In the following, we analyze the \remainingLifetimeLong{1}{} of the sample topology shown in \Cref{fig:ktc-example-with-energy-consumption-A}.
Each node \nodeVariable is annotated with its remaining energy \energy{\nodeVariable}{}, and each link \linkVariableab is annotated with its weight \weight{\linkVariableab} and its \expectedRemainingLifetimeLong{1}{\linkVariableab}.
We simulate the behavior of the network over a number of discrete time steps until the first node runs out of energy, using the following workload.
In each time step, the remaining energy of a node decreases by the greatest required transmission power among all of its active outgoing links.
This is a simplified simulation, \eg, of a Gossiping protocol \cite{Jelasity2011} that broadcasts a message in each time step from a node to all its neighbors.
For instance, in each time step, the remaining energy of node \nodeName{4} decreases by 9.
\begin{figure}
    \begin{center}
       
        \subcaptionbox{Initial topology\label{fig:ktc-example-with-energy-consumption-A} }[.32\textwidth]{\includegraphics[width=.32\textwidth]{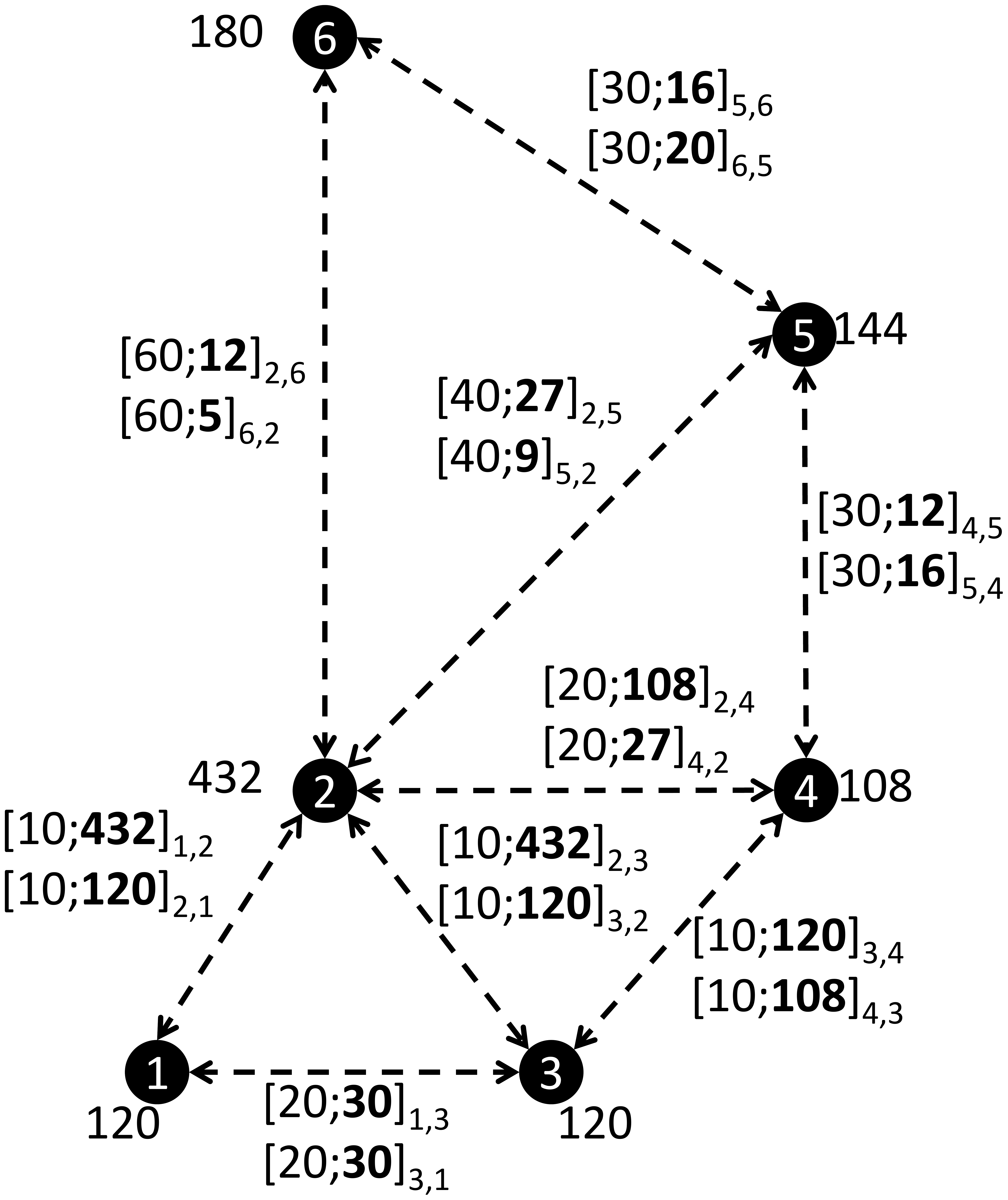}}
        \subcaptionbox{After executing \ktc\label{fig:ktc-example-with-energy-consumption-B}}[.32\textwidth]{\includegraphics[width=.32\textwidth]{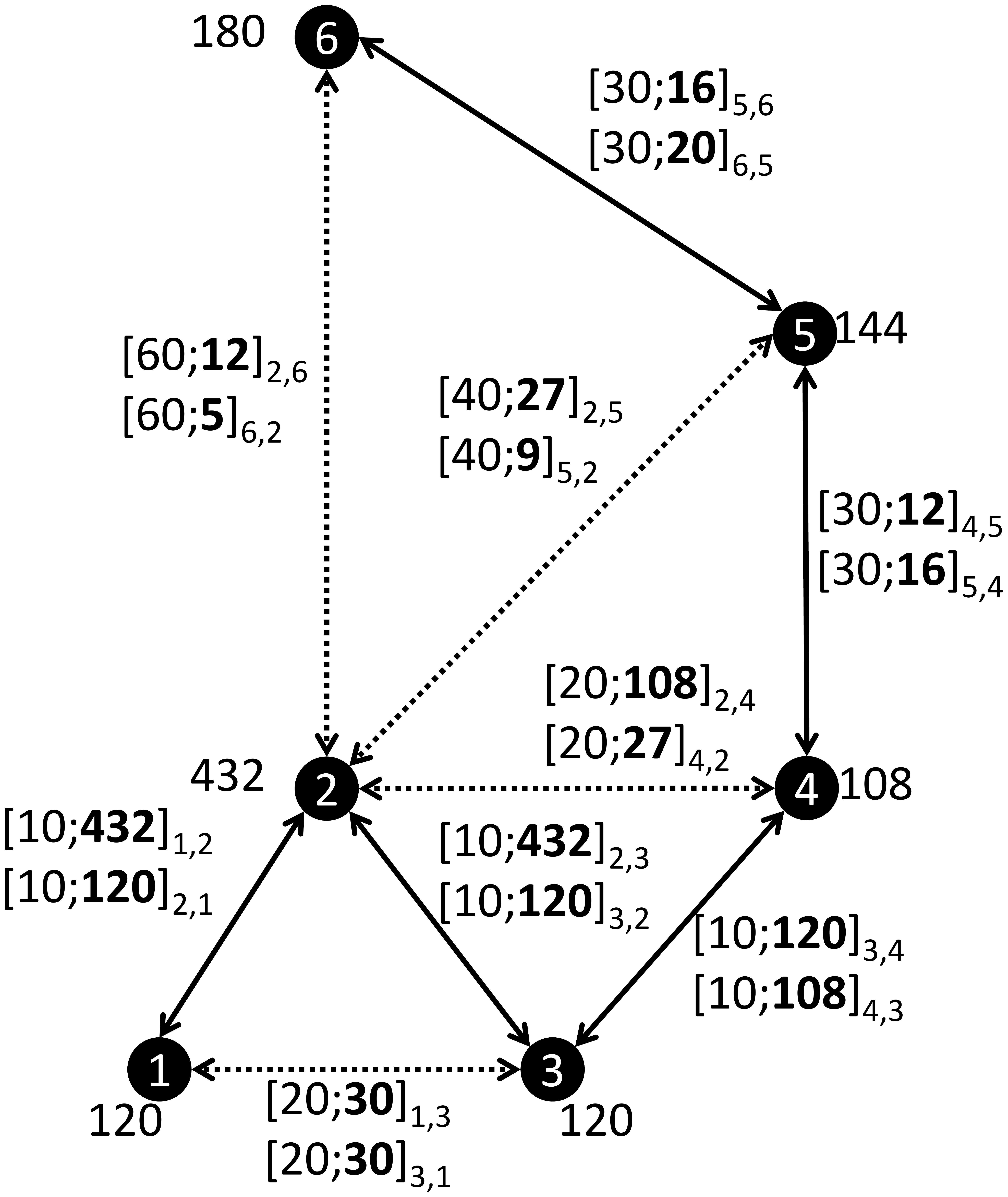}}
        \subcaptionbox{After 12 time steps\label{fig:ktc-example-with-energy-consumption-C}}[.32\textwidth]{\includegraphics[width=.32\textwidth]{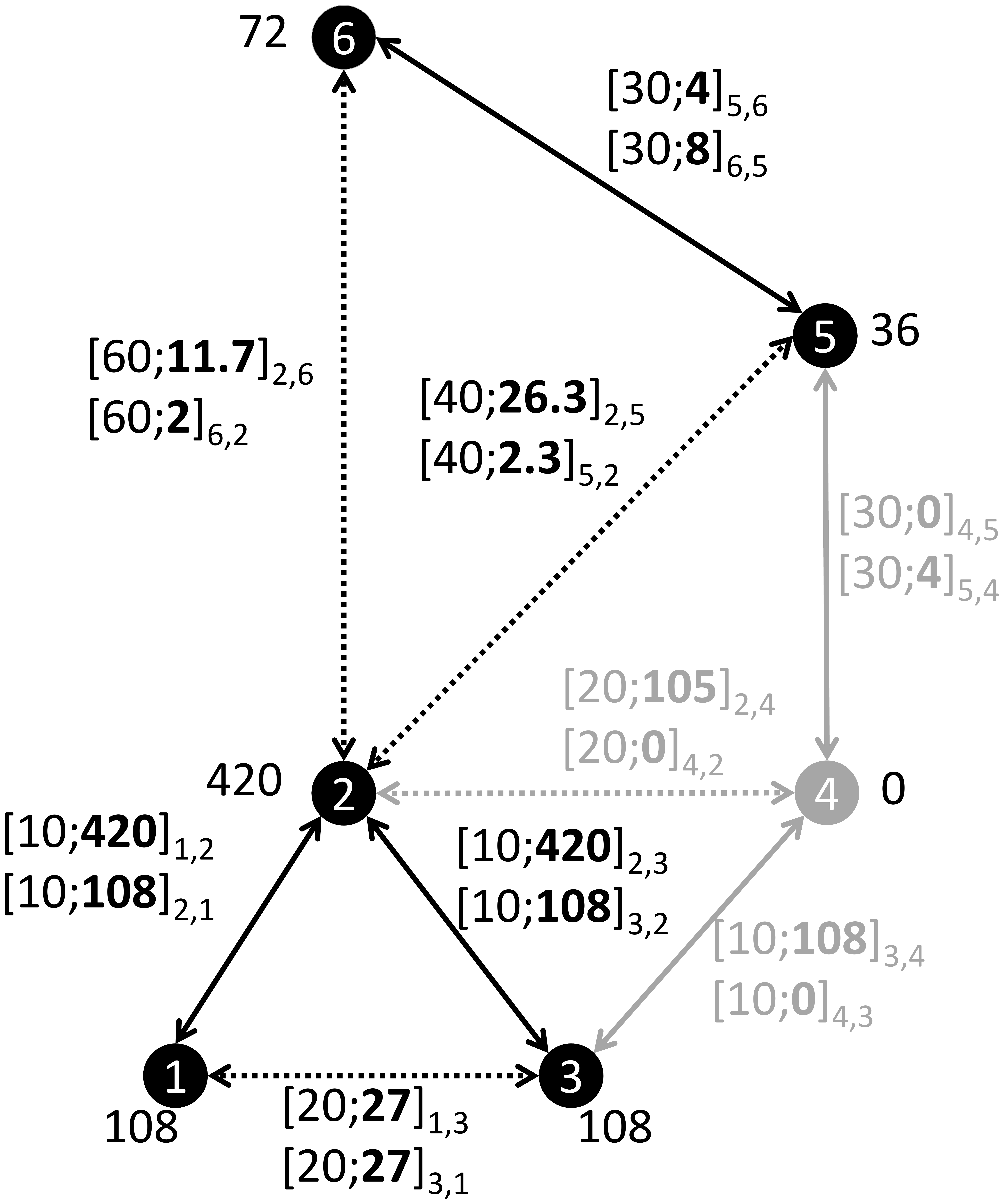}}

        \vspace{1em}
        \subcaptionbox{Legend\label{fig:ktc-example-with-energy-consumption-legend}}[.32\textwidth]{\includegraphics[width=.25\textwidth]{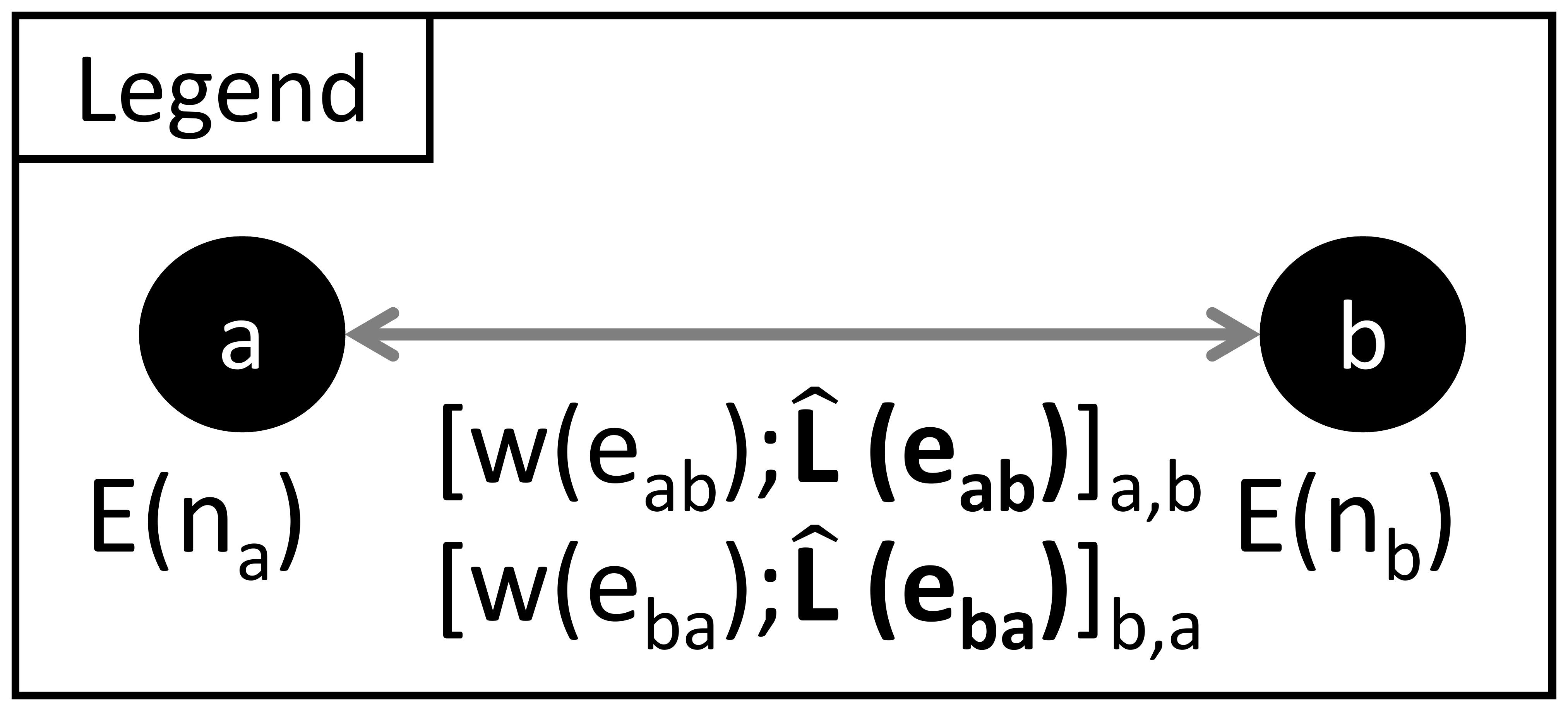}}
    \end{center}
    \caption{Applying \ktc to the sample topology ($k = 2$, $\remainingLifetime{1}{} = 12$)}
    \label{fig:ktc-example-with-energy-consumption}
\end{figure}
Only one execution of \ktc ($k = 2$) is required because the link weights are constant (\Cref{fig:ktc-example-with-energy-consumption-B}).
The example shows that \nodeName{4} is the first node to run out of energy after 12 time steps, \idest, $\remainingLifetime{1}{G} = 12$ (\Cref{fig:ktc-example-with-energy-consumption-C}).

\paragraph{Specification of \ektc}
The problem of unbalanced energy consumption is well-known in the \WSN literature.
A number of \TC algorithms have been proposed that explicitly take the remaining energy of nodes into account (\eg,~\cite{XH12,KMD07}).
Inspired by \cite{XH12}, we propose to modify the predicate of \ktc to take  the expected remaining lifetime of nodes into account.
We call this energy-aware variant \emph{\ektc}.
After executing \ektc, a link is inactive if and only if this link is part of a triangle in which it has the minimum expected remaining lifetime among the links in the triangle and if its expected remaining lifetime is at least \ktcParameterK times shorter than the maximum expected remaining lifetime of the other links in the triangle:
\begin{align}\label{eqn:ektc-inactive-link-condition}
\begin{split}
\ektcPredicate{\linkVariableab, \linkVariableac, \linkVariablecb} \;& = \;
\expectedRemainingLifetime{1}{\linkVariableab} \leq \min(\expectedRemainingLifetime{1}{\linkVariableac}, \expectedRemainingLifetime{1}{\linkVariablecb})\\
\;&\wedge \; \expectedRemainingLifetime{1}{\linkVariableab} \leq k \cdot \max(\expectedRemainingLifetime{1}{\linkVariableac}, \expectedRemainingLifetime{1}{\linkVariablecb}).
\end{split}
\end{align}

\paragraph{Motivating example with \ektc}

\Cref{fig:ektc-example} illustrates the processing of the same topology as in \Cref{fig:ktc-example-with-energy-consumption} with \ektc ($k=2$).
\begin{figure}
    \begin{center}
        \hfill
        \subcaptionbox{Initial topology\label{fig:ektc-example-A} }[.32\textwidth]{\includegraphics[width=.32\textwidth]{figures/pdf/example-battery-depletion-ktc-A_cropped.pdf}}
        \subcaptionbox{After executing \ektc \label{fig:ektc-example-B}}[.32\textwidth]{\includegraphics[width=.32\textwidth]{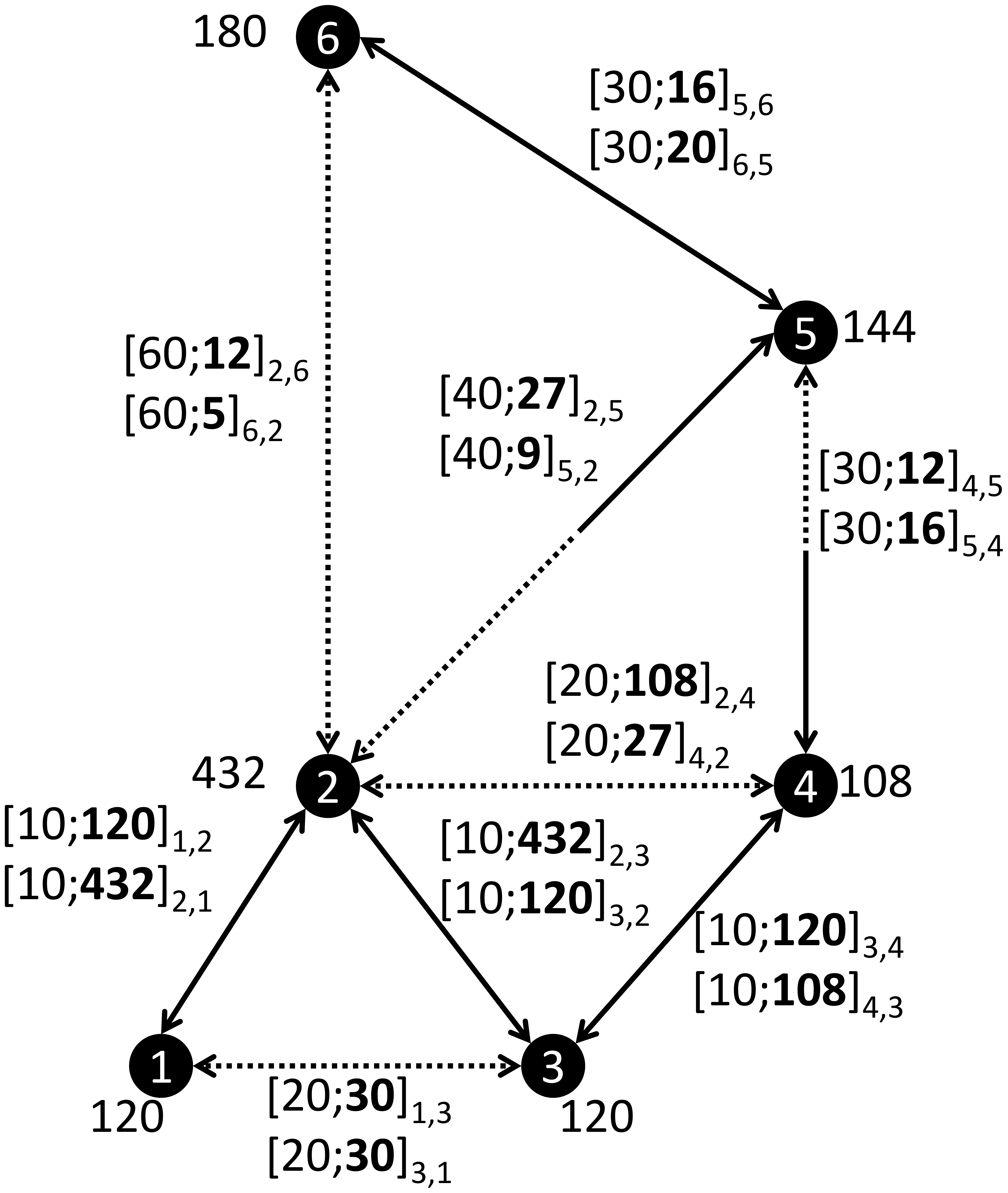}}
        \subcaptionbox{After 12 time steps\label{fig:ektc-example-C}}[.32\textwidth]{\includegraphics[width=.32\textwidth]{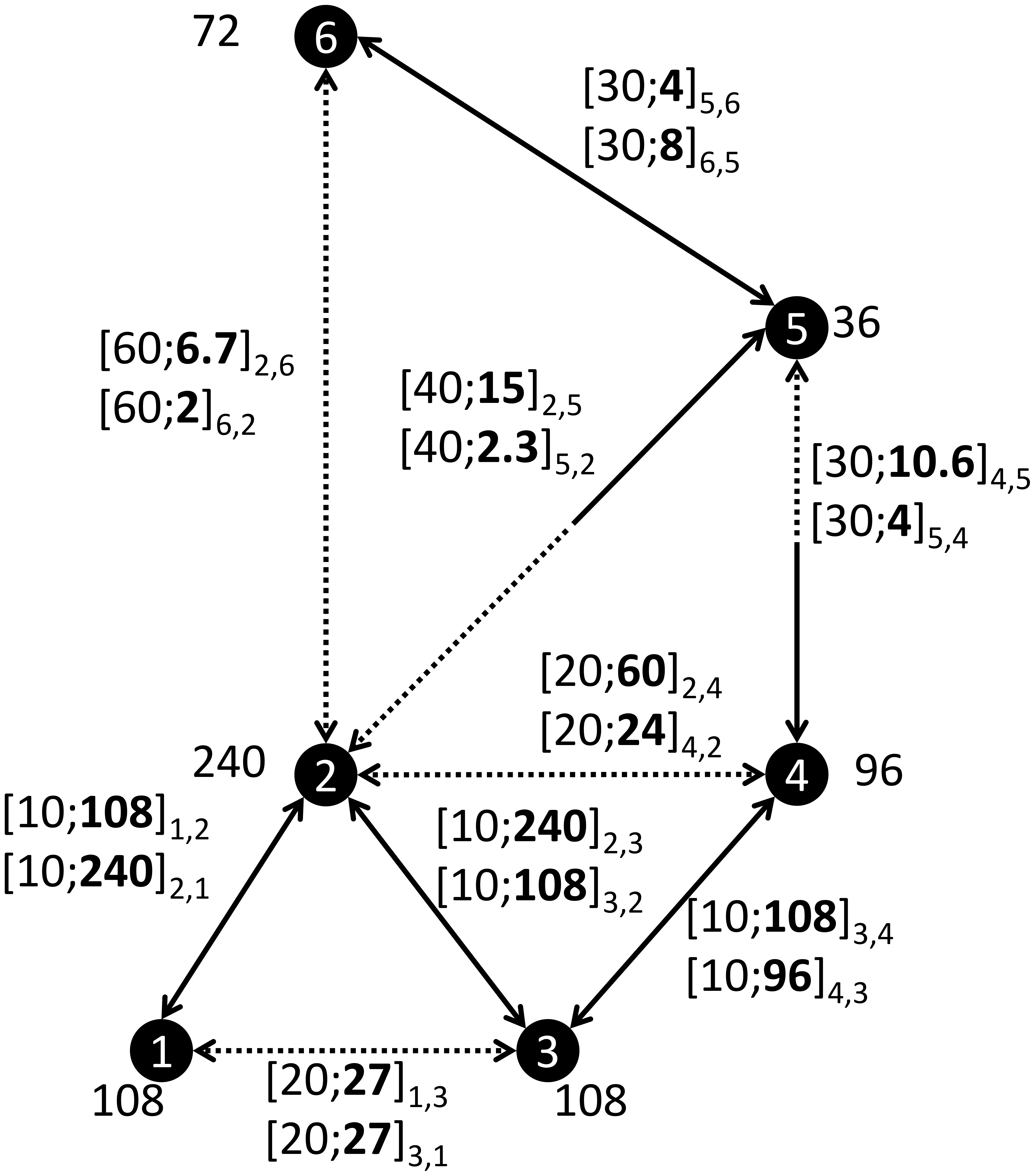}}
        
        \vspace{1em}
        \subcaptionbox{Legend\label{fig:ektc-example-with-energy-consumption-legend}}[.32\textwidth]{\includegraphics[width=.25\textwidth]{figures/pdf/example-battery-depletion-legend_cropped.pdf}}
    \end{center}
    \caption{Applying \ektc to the sample topology ($k = 2$, $\remainingLifetime{1}{} = 16$)}
    \label{fig:ektc-example}
\end{figure}
The expected remaining lifetime of a link changes in each time step and opposite links may have different states.
To establish comparability with \ktc, we invoke \ektc only once in the beginning (\Cref{fig:ektc-example-B}).
After 12 time steps, the remaining energy of node \nodeName{4} is now~96 (\Cref{fig:ektc-example-C}), and the minimal remaining energy among all nodes is~36.
Therefore, executing \ektc increases the \remainingLifetimeLong{1}{} of the sample topology.

\paragraph{Lifetime preservation with \ektc}
In fact, the benefit of applying \ektc in the previous example can be generalized:
For each triangle of links, consisting of \linkVariableab, \linkVariableac, and \linkVariablecb, \ektc preserves the expected lifetime of \nodeVariablea without decreasing the lifetime of \nodeVariableb and \nodeVariablec, compared to applying \MaxpowerTC.
The expected lifetime of \nodeVariablea even increases if the inactivated link \linkVariableab has the minimum expected lifetime among all outgoing links of \nodeVariablea.

\subsubsection{The Minimum-Weight Predicate \minWeightPredicate{}}
\label{sec:min-weight-predicate}
In the following, we introduce the novel \minWeightPredicateLong{}, which can be combined freely with all specified \TC algorithms.
It serves to reduce the memory footprint and the runtime of the \TC algorithm.
Working memory is a highly limited resource on wireless sensor nodes.
For this reason, keeping the entire neighborhood of a sensor node in working memory may be infeasible if the topology is dense.
Fortunately, it is often unnecessary to store links to close neighbors because the energy consumption is typically predominated by links to distant neighbors.
Additionally, reducing the size of the processed neighborhood may speed up the planning step.

The following \emph{minimum-weight predicate} \minWeightPredicate{} formalizes this reduction step.
The parameter \weightThreshold represents the configurable minimal weight of a link to be included in the considered neighborhood.
\begin{align}\label{eqn:minimum-weight-predicate}
\begin{split}
\minWeightPredicate{\linkVariableab, \linkVariableac, \linkVariablecb} 
&= \min\left(\weight{\linkVariableab}, \weight{\linkVariableac}, \weight{\linkVariablecb}\right) \geq \weightThreshold\\
\end{split}
\end{align}
This predicate may now be used to compose new variants of the previously specified \TC algorithms.
For instance, a modified version of \ktc with minimum-weight predicate is
\begin{align*}
\ktcMinWeightPredicate{\linkVariableab, \linkVariableac, \linkVariablecb} = \ktcPredicate{\linkVariableab, \linkVariableac, \linkVariablecb} \wedge \minWeightPredicate{\linkVariableab, \linkVariableac, \linkVariablecb}.\\ 
\end{align*}

\subsection{Summary of \TC Algorithms}
\Cref{fig:tc-algorithms-feature-model} and \Cref{tab:attribute-predicates-all} summarize the results of this section.
\Cref{fig:tc-algorithms-feature-model} illustrates the configuration options of a single sensor node:
the possible \TC algorithms, the minimum-weight optimization, and the relevant node and link properties.
Black solid lines indicate the hierarchical decomposition relation between configuration options.
Gray lines indicate dependencies from \TC algorithms to properties.
Dashed black lines indicate the relationship between the components of the considered \TC algorithms and their corresponding predicate.

The output topology of any \TC algorithm must fulfill the complete-classification predicate $\completeClassificationPredicate$ and the A-connectivity predicate $\AConnectivityPredicate{}$ (\Cref{sec:general-properties}).
The triangle-identifying predicate \trianglePredicate{} is common to all considered algorithms except for \MaxpowerTC.
In contrast, the predicate $\tcPredicate{A}{} \in \{\ktcPredicate{}, \maxpowerPredicate{}, \xtcPredicate{}, \ggPredicate{}, \rngPredicate{}, \lktcPredicate{}, \yaoPredicate{}, \ektcPredicate{}\}$ describes the algorithm-specific attribute constraints.
The auxiliary predicates \tieBreakingPredicate{} and \minWeightPredicate{} describe the ID-based tie-breaking (\Cref{sec:tie-breaking-predicate}) and the minimum-weight optimization (\Cref{sec:min-weight-predicate}).
\begin{figure}
    \begin{center}
        \includegraphics[width=\textwidth]{./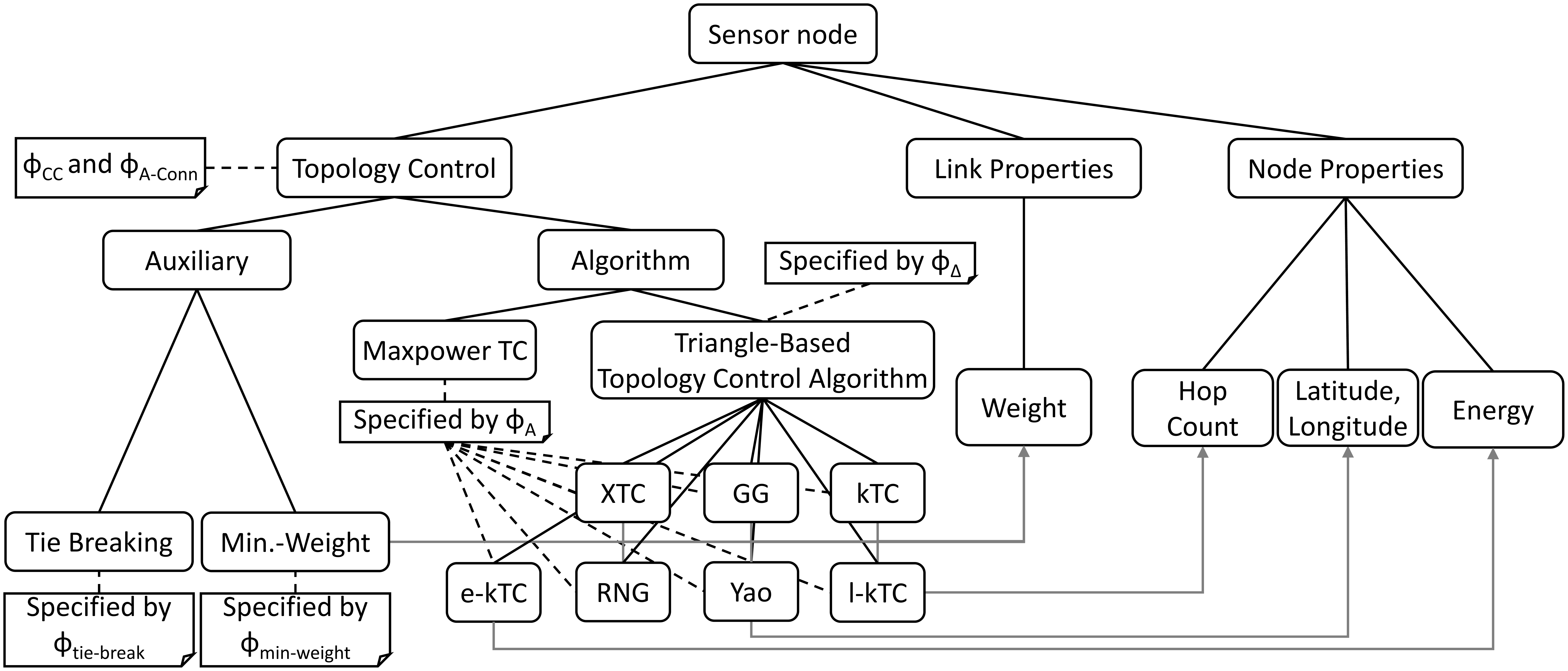}
        \caption{Overview of configuration options of a sensor node}
        \label{fig:tc-algorithms-feature-model}
    \end{center}
\end{figure}
\Cref{tab:attribute-predicates-all} summarizes the algorithm-specific predicates \tcPredicate{A}{}.
For simplicity, we write \tcPredicate{}{} instead of \tcPredicate{A}{} in the following.
\begin{table}
    \centering
    \caption{Attribute predicates of the considered \TC algorithms}
    \label{tab:attribute-predicates-all}
    \begin{tabular}{p{4cm}p{6.8cm}}
        \toprule
        \textbf{Algorithm $A$}&
        \textbf{Algorithm-Specific Predicate} \tcPredicate{A}{\linkVariableab, \linkVariableac, \linkVariablecb}
        \\
        \midrule
        \MaxpowerTC 
        &
        {\emph{false}}
        \\
        \midrule
        XTC \cite{Wattenhofer2004}
        &
        {
            $\begin{aligned}
            &\weight{\linkVariableab} \geq \max(\weight{\linkVariableac}, \weight{\linkVariablecb})\\
            &\;\wedge\;\left(\weight{\linkVariableab} = \weight{\linkVariableac}  
            \Rightarrow \id{\linkVariableab} > \id{\linkVariableac}\right)\\
            &\;\wedge\;\left(\weight{\linkVariableab} = \weight{\linkVariablecb}
            \Rightarrow \id{\linkVariableab} > \id{\linkVariablecb}\right)\\
            \end{aligned}$}
        \\
        \midrule
        GG \cite{Wang08}
        &
        {$\begin{aligned}
            \weightsquared{\linkVariableab} > \weightsquared{\linkVariableac} +  \weightsquared{\linkVariablecb}
            \end{aligned}$}\\
        \midrule
        RNG~\cite{Karp2000}
        &
        {$\begin{aligned}
            \weight{\linkVariableab} > \max(\weight{\linkVariableac}, \weight{\linkVariablecb})
            \end{aligned}$}\\
        \midrule
        \ktc \cite{SWBM12}
        &
        {$\begin{aligned}
            &\weight{\linkVariableab} \geq \max(\weight{\linkVariableac}, \weight{\linkVariablecb})\\
            &\;\wedge\;\weight{\linkVariableab} \geq k \cdot \min(\weight{\linkVariableac}, \weight{\linkVariablecb})\\
            &\;\wedge\;\left(\weight{\linkVariableab} = \weight{\linkVariableac}  
            \Rightarrow \id{\linkVariableab} > \id{\linkVariableac}\right)\\
            &\;\wedge\;\left(\weight{\linkVariableab} = \weight{\linkVariablecb}
            \Rightarrow \id{\linkVariableab} > \id{\linkVariablecb}\right)\\
            \end{aligned}$}\\
        \midrule
        \lktc \cite{Stein2016b}
        &
        {$\begin{aligned}
        &\ktcPredicate{\linkVariableab, \linkVariableac, \linkVariablecb}  \\
        &\wedge \min(\hopcount{\nodeVariablea}, \hopcount{\nodeVariableb}, \hopcount{\nodeVariablec}) \geq 0\\
        &\Big(\hopcount{\nodeVariablea} = \hopcount{\nodeVariableb} \Rightarrow true\\
        &\wedge \hopcount{\nodeVariablea} > \hopcount{\nodeVariableb} \Rightarrow \frac{\hopcount{\nodeVariablec} + 1}{\max(1, \hopcount{\nodeVariablea})} < a\\
        &\wedge \hopcount{\nodeVariablea} < \hopcount{\nodeVariableb} \Rightarrow \frac{\hopcount{\nodeVariablec} + 1}{\max(1, \hopcount{\nodeVariableb})} < a
        \Big)\end{aligned}$
        }\\
        \midrule
        Yao graph~\cite{Yao1982}
        &
        {$\begin{aligned}
            &\weight{\linkVariableab} > \weight{\linkVariableac}\;\wedge\\
            &\exists x \in \{1, 2, \dots, \coneCount\}: \\
            &\big(\; \frac{\SI{360}{\degree}}{\coneCount} \cdot (x - 1) \leq \angleCmd{\linkVariableab}  < \frac{\SI{360}{\degree}}{\coneCount} \cdot x \\	
            &\;\wedge\;\frac{\SI{360}{\degree}}{\coneCount} \cdot (x - 1) \leq \angleCmd{\linkVariableac}  < \frac{\SI{360}{\degree}}{\coneCount} \cdot x\;\big)\\	
            \end{aligned}$}\\
        \midrule
        \ektc \lbrack this paper\rbrack&
        \multicolumn{1}{l}{
            $\begin{aligned}
            &\expectedRemainingLifetime{1}{\linkVariableab} < \min(\expectedRemainingLifetime{1}{\linkVariableac}, \expectedRemainingLifetime{1}{\linkVariablecb})\\
            &\expectedRemainingLifetime{1}{\linkVariableab} \leq k \cdot \max(\expectedRemainingLifetime{1}{\linkVariableac}, \expectedRemainingLifetime{1}{\linkVariablecb})\\
            \end{aligned}$}\\
        \bottomrule
    \end{tabular}
\end{table}

\section{Characterizing Valid Topologies with Graph Constraints}
\label{sec:constraints}

In this section, we introduce graph patterns and graph constraints to specify the desired formal properties of topologies.
Compared to first-order logic predicates, the main advantage of graph constraints is that they can be constructively combined with \GT rules to produce refined constraint-preserving \GT rules.
The specification using graph constraints allows us to prove that all considered \TC algorithms produce connected output topologies.
As one main contributions of this paper, we lift the proof of connectivity to entire families of \TC algorithms.

\subsection{Graph Constraint Concepts}
A \emph{pattern} is a graph consisting of node and link variables together with a set of attribute constraints. A \emph{node (link) variable} serves as a placeholder for a node (link) in a topology.
An \emph{attribute constraint} is a predicate over attributes of node and link variables. 
A \emph{match \match{} of a pattern \pattern in a topology \topology} injectively maps the node and link variables of \pattern to the nodes and links of \topology, respectively, such that all attribute constraints are fulfilled.
Additionally, a match must preserve the end nodes of link variables, \idest, map the incident node variables of each link variable to the incident nodes of the corresponding link.

A \emph{graph constraint} \constraint{x} consists of a \emph{premise} pattern \premise{x}{} and a \emph{conclusion} \conclusion{x}{} that consists of zero or more \emph{conclusion pattern}s \conclusion{x,y}{}.
The premise is a subgraph of each conclusion pattern, and the attribute constraints of each conclusion pattern jointly imply the attribute constraints of the premise.
A graph constraint with no (at least one) conclusion patterns is called \emph{negative (positive) graph constraint}.
A \emph{graph constraint \constraint{x} is fulfilled on a topology \topology} (alternatively: a topology \topology fulfills a graph constraint \constraint{x}) if every match of its premise \premise{x}{} can be extended to a match of at least one conclusion pattern \conclusion{x,y}{}.
This implies that a topology fulfills a negative graph constraint \constraint{x} if it does not contain \emph{any} matches of \premise{x}{}.
Note that this requirement is sufficient but not necessary for positive graph constraints.
A \emph{graph constraint \constraint{x} is fulfilled at a match \match{} of a pattern \pattern in a topology \topology} if every match of \premise{x}{} that maps all node (link) variables that appear also in \pattern to the same nodes (links) in \topology can be extended to a match of at least one conclusion pattern \conclusion{x,y}{}.
A \emph{match~$\match{}^{\prime}$ of a pattern~$\pattern^\prime$ in \topology extends a match~\match{} of a pattern~\pattern} if every node (link) variable that appears in $\pattern^\prime$ and in \pattern is bound to the same node (link), respectively.

\paragraph{No-unclassified-links constraint \unclassifiedLinkConstraint}
The \unclassifiedLinkConstraintLong is equivalent to the complete-classification property $\completeClassificationPredicate$, which must be fulfilled by every \TC algorithm (\Cref{eqn:no-unclassified-links}).
Its premise \premiseUnclassifiedLinkConstraint matches any unclassified link \linkVariableab, and its conclusion is empty because it is a negative constraint.
This means that \unclassifiedLinkConstraint can only be fulfilled on a topology if this topology does not contain any unclassified links.
\begin{figure}
    \begin{center}
        \includegraphics[width=.3\textwidth]{./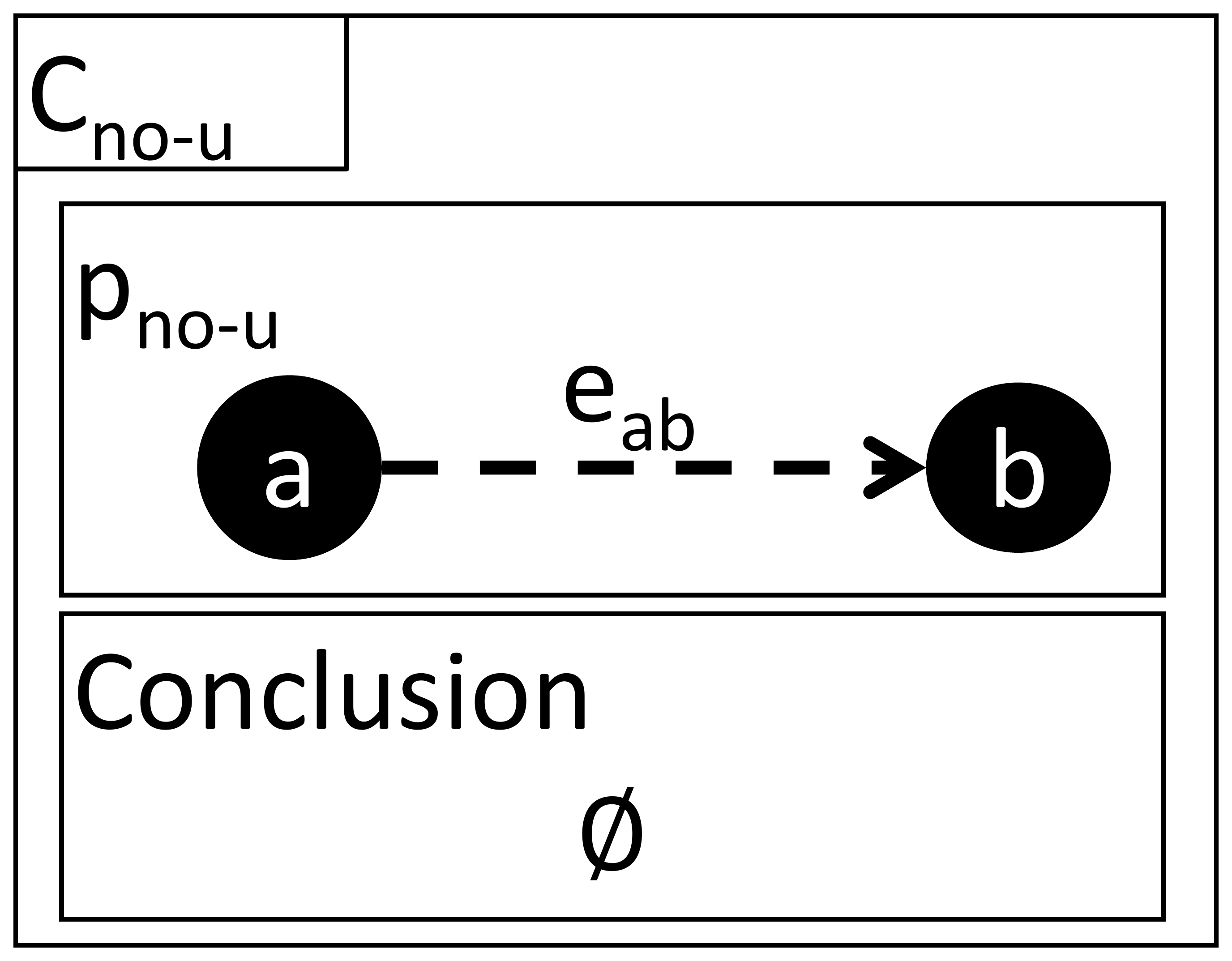}
        \caption{No-unclassified-links constraint \unclassifiedLinkConstraint}
        \label{fig:unclassified-link-constraint}
    \end{center}
\end{figure}

\paragraph{Algorithm-specific graph constraints}
\Cref{fig:generic-graph-constraints} shows two graph constraints that describe the algorithm-specific properties of a triangle-based \TC algorithm.
The \emph{\inactiveLinkConstraintKTCLong} (\Cref{fig:generic-graph-constraints-inactive}) states that each inactive link \linkVariableab must be part of a triangle together with classified links \linkVariableac and \linkVariablecb for which \tcPredicate{}{\linkVariableab, \linkVariableac, \linkVariablecb} is true.
Symmetrically, the \emph{\activeLinkConstraintKTCLong} (\Cref{fig:generic-graph-constraints-active}) states that each active link \linkVariableab may not be part of a triangle together with classified links \linkVariableac and \linkVariablecb for which \tcPredicate{}{\linkVariableab, \linkVariableac, \linkVariablecb} is true.
\begin{figure}
    \begin{center}
        \begin{subfigure}[t]{.3\textwidth}
            \includegraphics[width=.95\textwidth]{./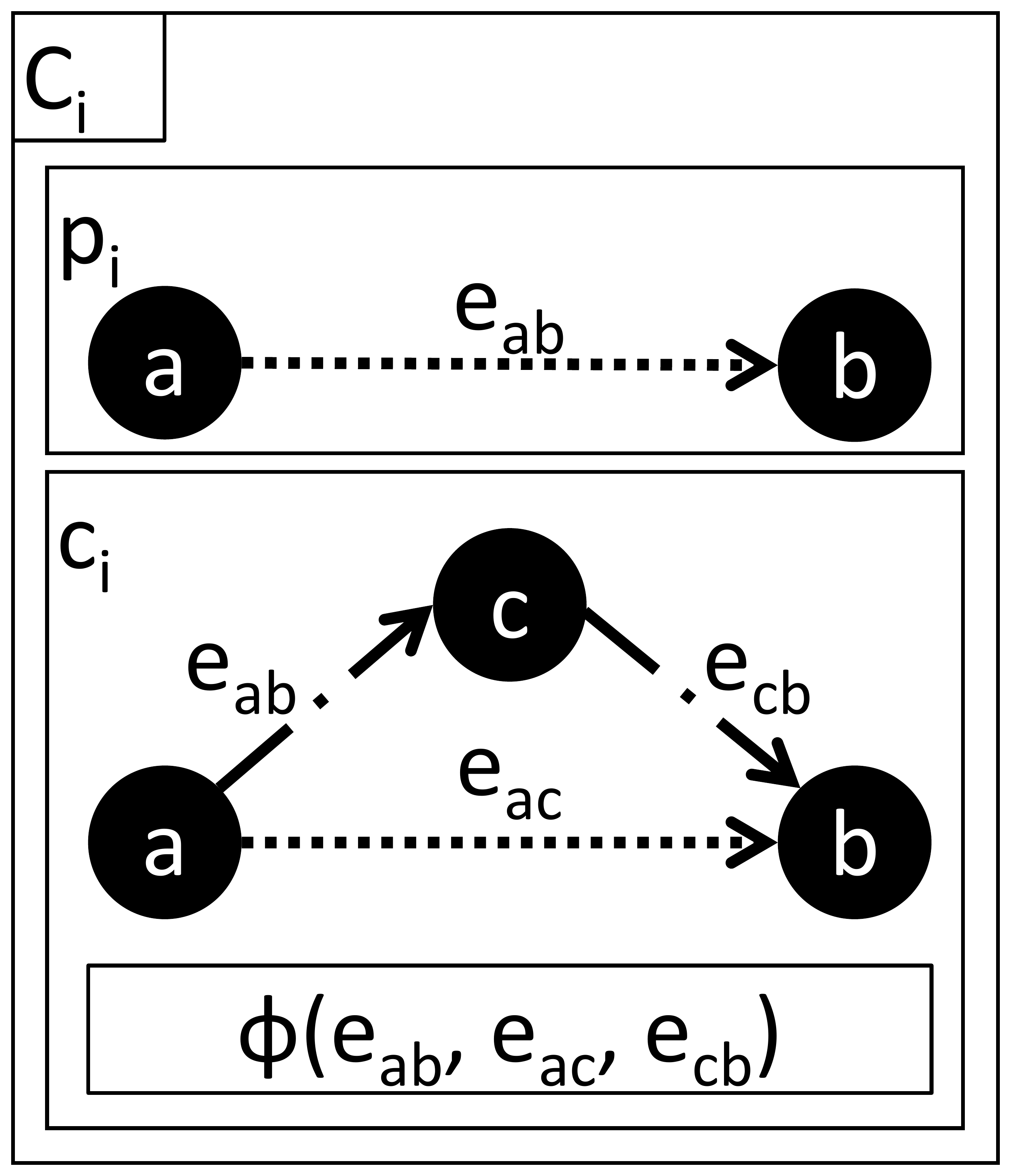}
            \subcaption{\xmakefirstuc{\inactiveLinkConstraintKTCLong}}
            \label{fig:generic-graph-constraints-inactive}
        \end{subfigure}
        \begin{subfigure}[t]{.3\textwidth}
            \includegraphics[width=.95\textwidth]{./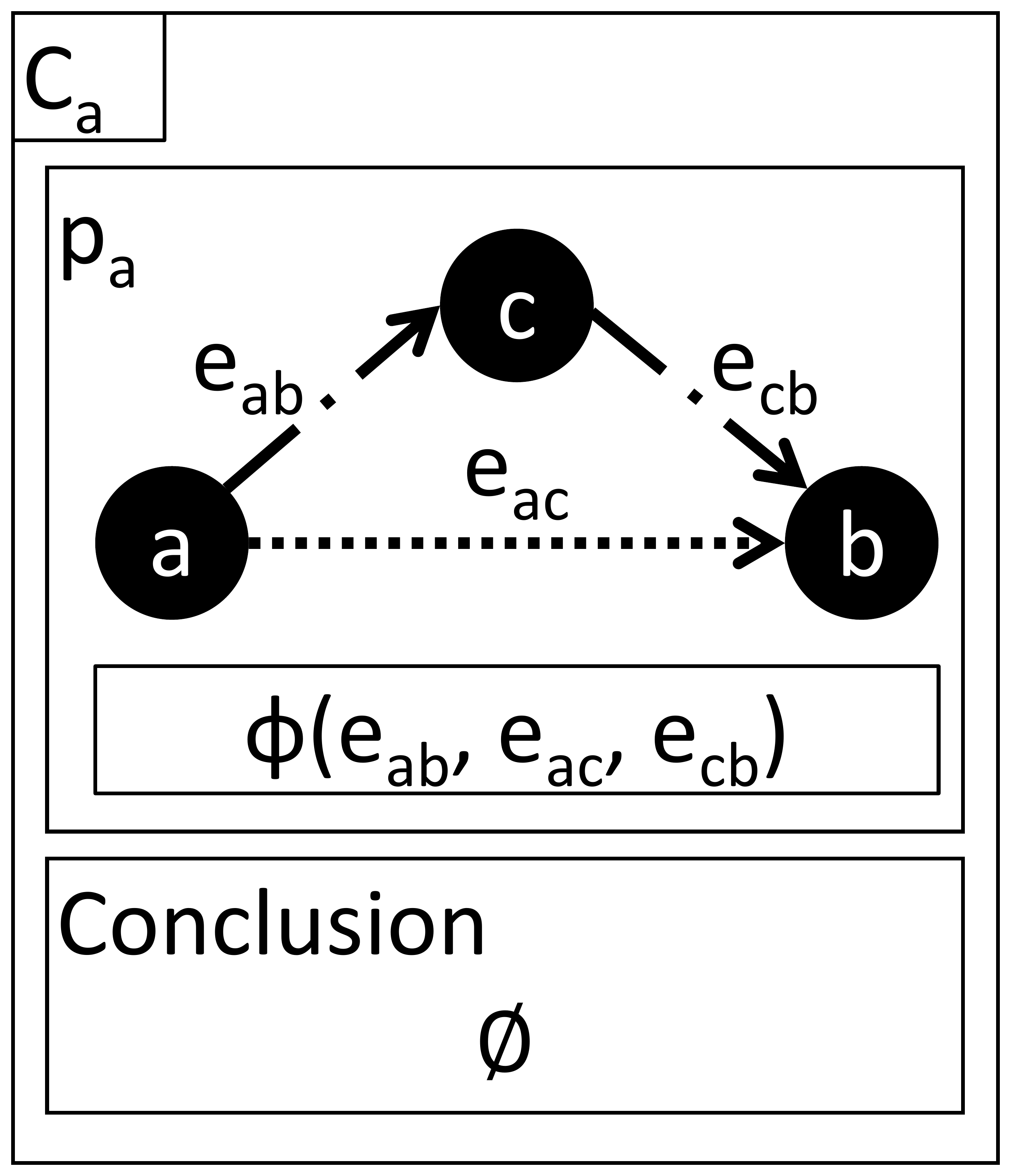}
            \subcaption{\xmakefirstuc{\activeLinkConstraintKTCLong}}
            \label{fig:generic-graph-constraints-active}
        \end{subfigure}
        
        \caption{
            Abstract algorithm-specific graph constraints for triangle-based \TC algorithms
        }
        \label{fig:generic-graph-constraints}
    \end{center}
\end{figure}

\subsection{Consistency of Topologies}
In the following, we categorize topologies according to which of the specified graph constraints they fulfill.
We distinguish between the following three levels of \emph{consistency}:
\begin{itemize}
\item 
A topology is \emph{strongly consistent} if it fulfills the \unclassifiedLinkConstraintLong, the \inactiveLinkConstraintKTCLong, and the \activeLinkConstraintKTCLong.
A valid output topology of a \TC algorithm must be strongly consistent.

\item 
A topology is \emph{weakly consistent} if it fulfills the \inactiveLinkConstraintKTCLong and the \activeLinkConstraintKTCLong.
For instance, an entirely unclassified topology is weakly consistent because it contains matches of the premises of neither \inactiveLinkConstraintKTC nor \activeLinkConstraintKTC.

\item 
A topology is \emph{inconsistent} if it is neither weakly nor strongly consistent, \idest, if it fails to fulfill at least one of the algorithm-specific constraints \inactiveLinkConstraintKTC and \activeLinkConstraintKTC.
\end{itemize}
In the following, we require that the topology is weakly consistent before and strongly consistent after invoking a \TC algorithm, \idest, weak consistency is the precondition and strong consistency is the postcondition of every \TC algorithm.

\subsection{Connectivity of Topologies}
\label{sec:connectivity}
Next, we show that the specified graph constraints already allow us to prove the important property that the output topology of a \TC algorithm must be A-connected (see also \Cref{sec:general-properties}).
A-connectivity is a hard constraint that may be violated by \CEs because \CEs may unclassify links.
For such situations, we need a second, softer constraint for defining connectivity in the presence of \CEs.
A topology is \emph{A-U-connected} if the subgraph consisting of its active and unclassified links is connected.
According to this definition, A-connectivity implies A-U-connectivity.
The idea behind A-U-connectivity is that unclassified links should be treated as if they were active.
The \CE handlers of a \TC algorithm must ensure that the topology remains A-U-connected.
The following \Cref{thm:strong-connectivity-follows-from-strong-consistency-and-weak-connectivity,thm:strong-connectivity-after-tc-algorithm} show that the output topology of the considered \TC algorithms is A-connected if the input topology is A-U-connected and weakly consistent.
This property must be proved \wrt the \TC algorithm $A$ because weak or strong consistency is always evaluated in the context of the current \TC algorithm $A$.
\begin{theorem}\label{thm:strong-connectivity-follows-from-strong-consistency-and-weak-connectivity}
    For each considered \TC algorithm $A$, a strongly consistent and A-U-connected topology is also A-connected.
\end{theorem}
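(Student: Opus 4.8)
The plan is to obtain A-connectivity from A-U-connectivity using only the \unclassifiedLinkConstraintLong{}, which is one of the three constraints whose conjunction defines strong consistency. The argument is essentially definitional; it touches neither the \inactiveLinkConstraintKTC{} nor the \activeLinkConstraintKTC{}. Since only these latter two depend on the \TC algorithm $A$ (through \tcPredicate{A}{}), while \unclassifiedLinkConstraint{} does not, the proof goes through identically for every considered $A$; the phrase ``for each $A$'' in the statement merely records that strong consistency is parameterised by $A$.

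Concretely, I would proceed in three short steps. First, unfold strong consistency: if $G$ is strongly consistent it fulfils the negative graph constraint \unclassifiedLinkConstraint{}, whose premise \premiseUnclassifiedLinkConstraint{} matches an arbitrary unclassified link; by the semantics of negative graph constraints, $G$ therefore contains no match of \premiseUnclassifiedLinkConstraint{}, i.e., $G$ has no unclassified link and every link of $G$ has state \ACT or \INACT. Second, conclude that the subgraph of $G$ formed by its active and unclassified links coincides with the subgraph formed by its active links alone. Third, spell out A-U-connectivity in the same per-pair form as A-connectivity: for every pair $(\nodeVariablea, \nodeVariableb) \in V \times V$ for which an input path $P_{\text{in}}(\nodeVariablea, \nodeVariableb)$ exists, there is a path from \nodeVariablea to \nodeVariableb all of whose links are active or unclassified; by the second step this path uses only active links, which is exactly the witness demanded by $\AConnectivityPredicate(G(V,E))$ in \Cref{eqn:active-link-connectivity}. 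Hence $G$ is A-connected.

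I do not expect a genuine obstacle. The only point requiring care is reconciling the definition of A-U-connectivity (``the subgraph of active and unclassified links is connected'') with the per-pair, input-relative phrasing of $\AConnectivityPredicate$, so that the two notions are directly comparable; this is the reading under which the earlier remark that ``A-connectivity implies A-U-connectivity'' holds, and under it the theorem is immediate from the definitions and from the semantics of negative graph constraints.
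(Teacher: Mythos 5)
Your argument is formally sound for the theorem as literally stated, but it takes a genuinely different route from the paper, and the difference is not cosmetic. You derive A-connectivity from the A-U-connectivity hypothesis plus the \unclassifiedLinkConstraintLong alone: once there are no unclassified links, the active-or-unclassified subgraph \emph{is} the active subgraph, and the hypothesis hands you the required active paths. The paper, by contrast, essentially does not use the A-U-connectivity hypothesis at all; its proof is an induction over the \emph{inactive} links, ordered by the algorithm-specific link order \linkOrder{A} (\Cref{tab:algorithm-specific-link-order}), which uses the \inactiveLinkConstraintKTCLong to place every inactive link in a triangle with two \linkOrder{A}-smaller classified links and thereby constructs an active detour around each inactive link; this shows that the endpoints of \emph{every} link are joined by an active path, from which A-connectivity follows by splicing detours into any input path. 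What each approach buys: yours is shorter, but it makes the \inactiveLinkConstraintKTCLong (and hence the whole algorithm-specific predicate machinery) irrelevant to connectivity and shifts all of the real work into the hypothesis that the \emph{strongly consistent} topology is A-U-connected --- a hypothesis that, in the intended application (\Cref{thm:strong-connectivity-after-tc-algorithm}), is only known for the \emph{input} topology, not for the output topology produced after the \TC algorithm has inactivated links. Showing that those inactivations preserve A-U-connectivity is exactly the content of the paper's induction, so with your proof of this theorem the downstream corollary would reopen the gap you avoided here. The paper's version is also the one that justifies the quantification ``for each considered \TC algorithm $A$'': the induction must be re-instantiated per algorithm via \linkOrder{A}, whereas in your proof that quantifier is, as you observe, vacuous.
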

\begin{proof-sketch}
    Let $A$ be the \TC algorithm whose \activeLinkConstraintKTCLong and \inactiveLinkConstraintKTCLong are fulfilled.
    From strong consistency follows that the topology fulfills the \unclassifiedLinkConstraintLong.
    Therefore, it suffices to show the following claim:
    The end nodes of each link are connected by a path of active links in the output topology.
    This trivially holds for the end nodes of active links.
    By induction, we show that the claim also holds for all inactive links.
    
    Let $e_{i_1} \linkOrder{A} e_{i_2} \linkOrder{A} \dots \linkOrder{A} e_{i_k}$ be a strict ordering of the inactive links in the topology such that in each match \match{} of the conclusion \conclusionInactiveLinkConstraint of the \inactiveLinkConstraintKTCLong, the link corresponding to \linkVariableab is larger (\wrt \linkOrder{A}) than the links corresponding to \linkVariableac and \linkVariablecb, \idest, $\match{\linkVariableac} \linkOrder{A} \match{\linkVariableab} \wedge \match{\linkVariablecb} \linkOrder{A} \match{\linkVariableab}$.    
    Corresponding definitions of \linkOrder{A} for all considered algorithms in this paper are shown in \Cref{tab:algorithm-specific-link-order}.
    
    \emph{Induction start:} The minimal inactive link \wrt \tcPredicate{A}{}, $e_{i_1}$, is part of a triangle with two active links that connect the end nodes of link $e_{i_1}$ because the \inactiveLinkConstraintKTCLong is fulfilled.
    Thus, the claim holds for link $e_{i_1}$.
    
    \emph{Induction step:} We now consider an inactive link $e_{i_{\ell+1}}$ with $1 \leq \ell \leq k{-}1$, which is part of a triangle with two classified links, $e_1$ and $e_2$.
    We assume that only $e_1$ is inactive.%
    \footnote{If both links are active, the claim follows trivially. 
        If both links are inactive, the argument applies for each link individually.}
    Link $e_{1}$ appears before $e_{i_{\ell+1}}$ in \linkOrder{A}, \idest, there is some $s \leq \ell$ such that $e_1 := e_{i_s}$.
    Since the claim has been proved for all inactive links that appear before $e_{i_{\ell+1}}$ in \linkOrder{A}, there is a path of active links between the end nodes of $e_{i_s}$.
    A path of active links between the end nodes of link $e_{\ell+1}$ can be constructed by joining the two paths between the end nodes of $e_1$ and $e_2$.
\end{proof-sketch}
\begin{table}
    \centering
    \caption{Algorithm-specific link order \linkOrder{A} for the considered algorithms}
    \label{tab:algorithm-specific-link-order}
    \begin{tabular}{p{.4\textwidth}p{.5\textwidth}}
        \toprule
        \textbf{Algorithm $A$}&
        \textbf{Link Order \linkOrder{A}}
        \\
        \midrule
        {
            \MaxpowerTC,
            XTC \cite{Wattenhofer2004},
            GG \cite{Rodoplu1999,Wang08},
            RNG \cite{Karp2000}, 
            kTC \cite{SWBM12},
            l-\ktc \cite{Stein2016b},
            Yao graph~\cite{Yao1982}
        }
        &{
            $\begin{aligned}
            \linkVariableab \linkOrder{A} \linkVariablecd &\Leftrightarrow \weight{\linkVariableab} < \weight{\linkVariablecd} \\
            &\;\vee\;
            \big(
            \weight{\linkVariableab} = \weight{\linkVariablecd}  \wedge 
            \id{\linkVariableab} < \id{\linkVariablecd}
            \big)
            \end{aligned}$}\\
        \midrule
        \ektc (\Cref{sec:ektc})&{
            $\begin{aligned}
            \linkVariableab \linkOrder{A} \linkVariablecd &\Leftrightarrow \expectedRemainingLifetime{1}{\linkVariableab} < \expectedRemainingLifetime{1}{\linkVariablecd} \\
            &\;\vee\;
            \left(
            \expectedRemainingLifetime{1}{\linkVariableab} = \expectedRemainingLifetime{1}{\linkVariablecd}  \wedge 
            \id{\linkVariableab} < \id{\linkVariablecd}
            \right)
            \end{aligned}$}\\
        \bottomrule
    \end{tabular}
\end{table}
\begin{theorem}\label{thm:strong-connectivity-after-tc-algorithm}
The output topology of each considered \TC algorithm $A$ is A-connected if the input topology is A-U-connected and weakly consistent.
\end{theorem}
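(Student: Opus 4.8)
The plan is to derive the statement from \Cref{thm:strong-connectivity-follows-from-strong-consistency-and-weak-connectivity} by showing that the output topology of $A$ is both strongly consistent and A-U-connected. Strong consistency is immediate: weak consistency is the precondition and strong consistency the postcondition of every \TC algorithm, so running $A$ on the weakly consistent input yields a strongly consistent output; in particular the output contains no unclassified links and satisfies the \activeLinkConstraintKTCLong and the \inactiveLinkConstraintKTCLong. It therefore remains to show that the output is A-U-connected, and since the input is A-U-connected by assumption, it suffices to prove that every \LSM carried out by $A$ preserves A-U-connectivity.

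A single \LSM activates, unclassifies, or inactivates one link. Activating a link leaves the subgraph induced by the active and unclassified links unchanged, and unclassifying a link can only add an edge to it; hence these two cases preserve A-U-connectivity trivially. The only delicate case is the inactivation of a link \linkVariableab, which removes \linkVariableab from the active-and-unclassified subgraph. Here one uses that $A$ inactivates \linkVariableab only when \linkVariableab lies in a triangle with classified links \linkVariableac and \linkVariablecb satisfying \tcPredicate{A}{\linkVariableab, \linkVariableac, \linkVariablecb}, and that for each considered algorithm this predicate---together with the respective tie-breaking of \Cref{tab:algorithm-specific-link-order}---forces \linkVariableac \linkOrder{A} \linkVariableab and \linkVariablecb \linkOrder{A} \linkVariableab. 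Running the induction along \linkOrder{A} exactly as in the proof of \Cref{thm:strong-connectivity-follows-from-strong-consistency-and-weak-connectivity}, but on the active-and-unclassified subgraph of the current topology and treating unclassified links as if they were active, produces a path of active or unclassified links joining \nodeVariablea and \nodeVariableb that does not use \linkVariableab. Hence the inactivation does not disconnect the active-and-unclassified subgraph, and A-U-connectivity is preserved. Combining this with strong consistency of the output and applying \Cref{thm:strong-connectivity-follows-from-strong-consistency-and-weak-connectivity} yields A-connectivity of the output.

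I expect the inactivation case to be the main obstacle. Unlike the situation in \Cref{thm:strong-connectivity-follows-from-strong-consistency-and-weak-connectivity}, the intermediate topologies traversed by $A$ are in general only weakly consistent, so they still contain unclassified links and the clean argument \enquote{no unclassified links, hence every inactive link is spanned by active links} is not available verbatim. One has to set up an invariant that survives the whole run---such as \emph{the active-and-unclassified subgraph has the same connected components as the full edge set, and the end nodes of every inactive link are joined by a path of active links}---and re-verify that each activation, unclassification, and inactivation step maintains it, including in the starting state, where the weakly consistent input may already contain inactive links. A secondary point that must be checked per algorithm is that \linkOrder{A} is indeed a strict total order on the relevant links and that the witnessing triangle of a newly inactivated link always consists of links that appear before it in \linkOrder{A}: for GG and RNG this follows from the strict inequalities in their predicates, and for the remaining algorithms from the ID-based tie-breaking.
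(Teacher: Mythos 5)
Your proposal is correct and follows the same overall route as the paper: reduce to \Cref{thm:strong-connectivity-follows-from-strong-consistency-and-weak-connectivity} by observing that strong consistency of the output is guaranteed by the postcondition contract of every \TC algorithm. The difference is one of thoroughness rather than strategy. The paper's proof is a two-line sketch that stops there---it never verifies that the output topology is still A-U-connected before invoking \Cref{thm:strong-connectivity-follows-from-strong-consistency-and-weak-connectivity}, whereas the bulk of your argument is devoted to exactly that verification, by checking that every \LSM (and in particular every inactivation, via the witnessing triangle and an induction along \linkOrder{A} on the active-and-unclassified subgraph of the weakly consistent intermediate topologies) preserves A-U-connectivity. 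You have correctly identified the one hypothesis of \Cref{thm:strong-connectivity-follows-from-strong-consistency-and-weak-connectivity} that the paper's sketch leaves unjustified, and your treatment of the inactivation case is sound. It is worth noting, though, that the induction in the paper's proof of \Cref{thm:strong-connectivity-follows-from-strong-consistency-and-weak-connectivity} only ever uses strong consistency (the A-U-connectivity hypothesis is never invoked there), which is presumably why the authors regarded the reduction as immediate; so your extra invariant, while it makes the appeal to the stated theorem rigorous, is discharging a hypothesis that turns out not to be load-bearing in the cited proof. Your closing caveats---that \linkOrder{A} must be a strict order and that the witnessing triangle of an inactive link must precede it in \linkOrder{A}---are the same well-definedness conditions the paper delegates to \Cref{tab:algorithm-specific-link-order}.
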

\begin{proof-sketch}
By definition, a \TC algorithm turns a weakly consistent input topology into a strongly consistent output topology.
According to \Cref{thm:strong-connectivity-follows-from-strong-consistency-and-weak-connectivity}, a strongly consistent and A-U-connected topology is A-connected.
Therefore, the claim follows.
\end{proof-sketch}

\section{Specifying Topology Control with Programmed Graph Transformation}
\label{sec:gratra}

In this section, we specify \TC operations and \CEs using graph transformation (\GT) rules and \TC algorithms using programmed graph transformation~\cite{FNTZ98}, which carries out basic topology modifications by applying graph transformation rules whose execution order is defined by an explicit control flow.

\paragraph{Programmed Graph Transformation Concepts}

A \emph{graph transformation (\GT) rule}~\cite{EEPT06,Rozenberg1997} consists of a \emph{left-hand side
(\LHS{})} pattern, a \emph{right-hand side (\RHS{})} pattern, and \emph{application conditions (ACs)}.
A \emph{positive (negative) application condition (\PAC{} (\NAC{}))} is a positive (negative) graph constraint.
To enable meaningful attribute assignments, a predicate in the attribute constraints of the \RHS{} pattern can only be an equation with an attribute of a single node or link variable on its left side.
A \emph{\GT rule is applicable to a topology \topology} if a match of the \LHS{} pattern exists in the topology that fulfills all application conditions.
The \emph{application of a \GT rule at a match of its \LHS{} pattern to a topology \topology produces a topology $\topology'$} as follows.
\begin{enumerate}
\item All nodes (links) of the topology that map to a node (link) variable of the \LHS{} pattern and have a corresponding node (link) variable in the \RHS{} pattern are \emph{preserved}.
\item All nodes (links) of the topology that are assigned to a node (link) variable of the \LHS{} pattern and lack a corresponding node (link) variable in the \RHS{} pattern are \emph{removed},
\item For each node (link) variable in the \RHS{} that is missing in the \LHS{}, a corresponding node (link) is \emph{added} to the topology.
\item Node (link) attributes are \emph{assigned} in such a manner that the attribute constraints of the \RHS{} pattern are fulfilled (operator $=$).
\end{enumerate}

\paragraph{Specifying control flow using programmed \GT}
Control flow is specified in our approach with Story-Driven Modeling~\cite{FNTZ98}, an activity-diagram-based notation in which each activity node contains a graph transformation rule.
A \emph{regular activity node} (denoted by a single framed, rounded rectangle) with one unlabeled outgoing edge applies the contained \GT rule once to one arbitrary match.
A regular activity node with an outgoing \guardSuccess and \guardFailure edge applies the contained \GT rule and follows the \guardSuccess edge if the rule is
applicable at an arbitrary match, and it follows the \guardFailure
edge if the rule is inapplicable.
A \emph{foreach activity node} (denoted by a double-framed rounded rectangle) applies the contained \GT rule to all matches and traverses along the optional outgoing edge labeled with \guardSuccess for each match. When all the matches have been completely processed, the control flow continues along the \guardFailure outgoing edge.
\RHS{} node and link variables are \emph{bound} by a successful rule application and can be reused in subsequent activity nodes.

\paragraph{Topology control rules}
Three of the five \emph{\TC rules}, shown in \Cref{fig:topology-control-rules}, represent \LSMs inside the \GT specification of a \TC algorithm:
the \activationRuleLong, 
the \inactivationRuleLong, 
and the \unclassificationRuleLong.
The remaining two \TC rules are the \findUnclassifiedLinkRuleLong and the \findClassifiedLinkRuleLong, which serve to identify (but not to modify) unclassified and classified links and whose \LHS{} and \RHS{} are identical.
\begin{figure}
    \begin{center}
        \includegraphics[width=\textwidth]{./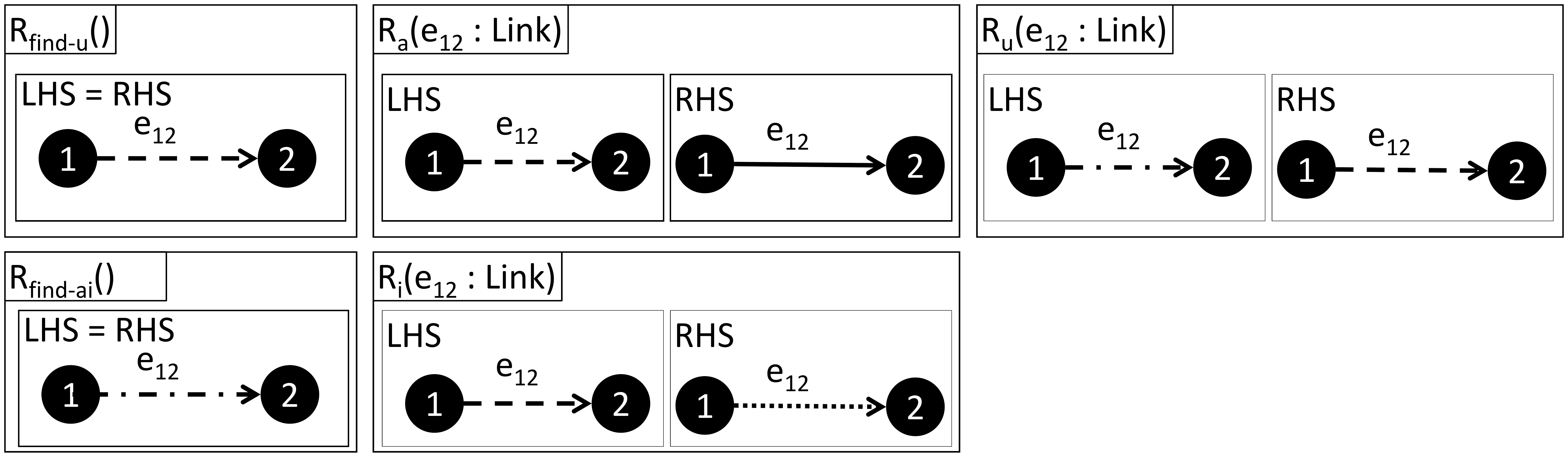}
        \caption{\TC rules}
        \label{fig:topology-control-rules}
    \end{center}
\end{figure}

\paragraph{Context event rules}
Four of the nine \emph{\CE rules}, as shown in \Cref{fig:context-event-rules}, specify structural modifications of a topology: the \nodeAdditionRuleLong, 
the \nodeRemovalRuleLong, 
the \linkAdditionRuleLong, 
and the \linkRemovalRuleLong.
The remaining five \CE rules specify attribute value modifications: 
the \weightModificationRuleLong, 
the \hopCountModificationRuleLong,
the \energyModificationRuleLong, 
the \latModificationRuleLong, and
the \longModificationRuleLong.
\begin{figure}
    \begin{center}
        \includegraphics[width=.99\textwidth]{./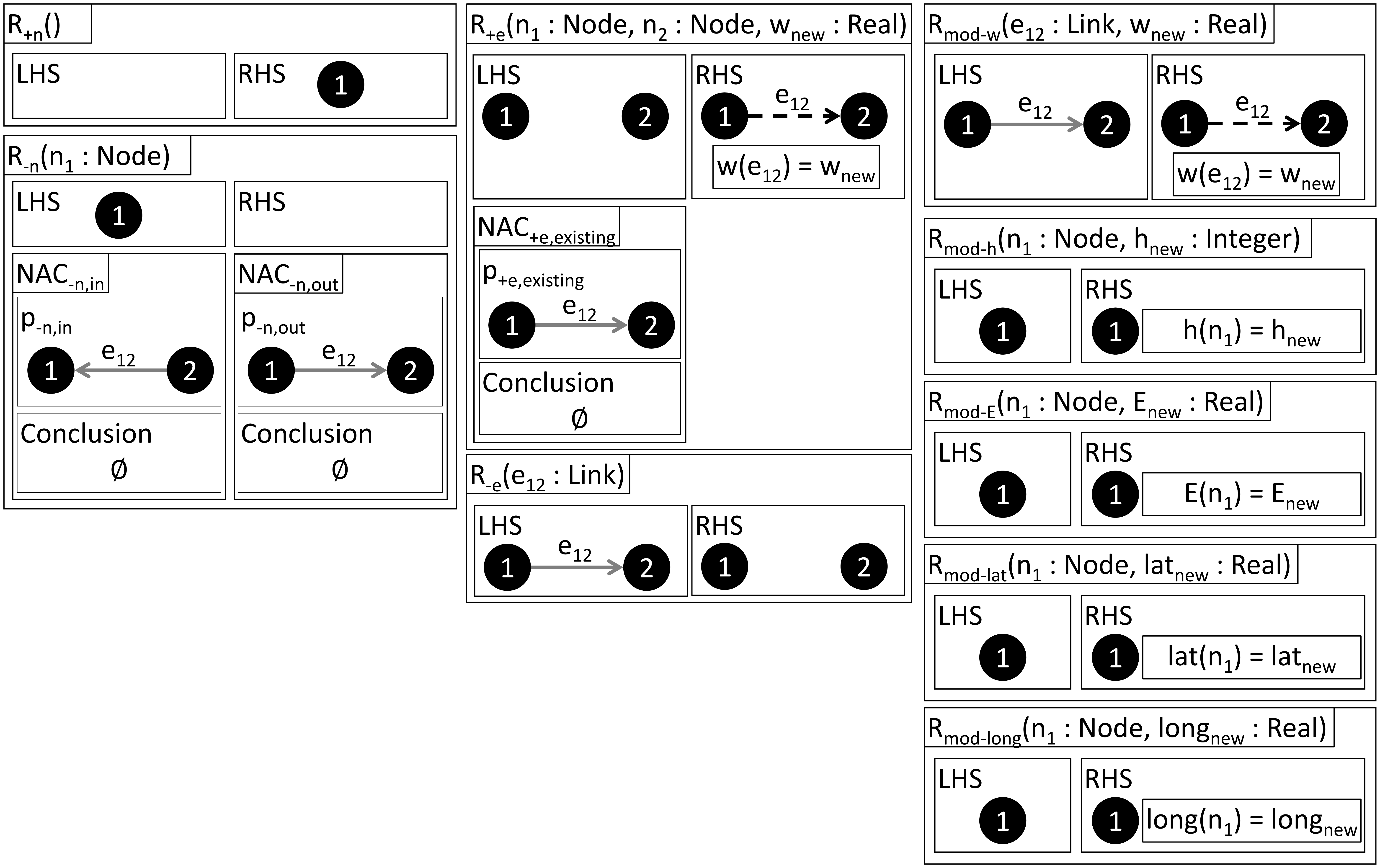}
        \caption{Context event rules}
        \label{fig:context-event-rules}
    \end{center}
\end{figure}

\paragraph{Specification of the \MaxpowerTC algorithm}
\Cref{fig:MaxpowerTCAlgorithm} shows a specification of the \MaxpowerTC algorithm (see \Cref{sec:maxpower}), which serves as starting point for the subsequent development steps.
Link variable \linkVariableOneTwo is bound by the \findUnclassifiedLinkRuleLong and reused in the activity nodes containing the \activationRuleLong and the \inactivationRuleLong.
Note that \inactivationRule is only shown here for completeness, even though it is unreachable because \activationRule is always applicable.
\begin{figure}
	\begin{center}
		\includegraphics[width=.95\textwidth]{./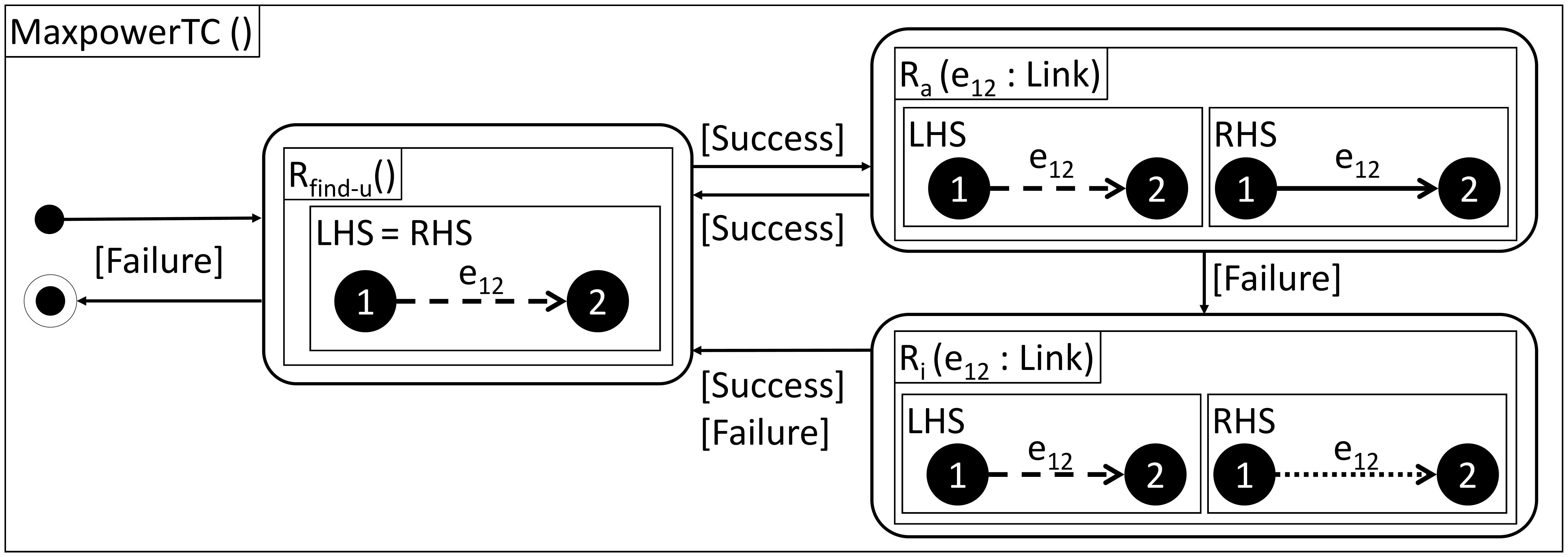}
		\caption{Specification of the \MaxpowerTC algorithm}
		\label{fig:MaxpowerTCAlgorithm}
	\end{center}
\end{figure}

\section{Refining the Graph Transformation Rules to Preserve the Graph Constraints}
\label{sec:refinement}

The fourth step of our design methodology combines the \TC and \CE rules with the graph constraints to produce a refined version of the \TC algorithm, which preserves all graph constraints.
This step is split into three substeps, as shown in \Cref{fig:overview-refinement}.
\begin{figure}
\begin{center}
\resizebox{1.0\textwidth}{!}{%
    \begin{tikzpicture}[node distance = 4.95cm]
    
    \sffamily
    \newcommand{\boundellipse}[3]% center, xdim, ydim
    {(#1) ellipse (#2 and #3)}
    
    % Define nodes
    \node [text width=1.6cm, align = center](c1) at (0,0) {Algorithm-\\specific\\constraints\\ \lbrack\Cref{fig:generic-graph-constraints}\rbrack};
    \draw[bigArrow] ([shift=({0.7cm,0mm})]c1.east) -- ++(4.15,0) --  ++(0,-1.25);	
    \node[bigArrowHead,text height=8.5ex, rotate=270](arrow1) at(5.8,-1.15) {};	
    
    \draw \boundellipse{0,0}{1.25}{1.075};
    
    \node [text width=1.6cm, align = center](c2) at(0,-3.0) {TC rules\\ \lbrack\Cref{fig:topology-control-rules}\rbrack};
    \draw[bigArrow] ([shift=({0.7cm,0mm})]c2.east) -- ++(14.5,0);
    \node[bigArrowHead,text height=8.5ex](arrow2) at(16.1,-3.01) {};
    \draw \boundellipse{0,-3.0}{1.25}{1.075};
    
    \node [text width=1.6cm, align = center](c3) at(0,-5.6) {Context\\event rules\\ \lbrack\Cref{fig:context-event-rules}\rbrack};
    \draw[bigArrow] ([shift=({0.7cm,0mm})]c3.east) -- ++(14.5,0);
    \node[bigArrowHead,text height=8.5ex](arrow2) at(16.1,-5.58) {};
    \draw \boundellipse{0,-5.6}{1.25}{1.075};
    
    \node [text width=1.6cm, align = center](c4) at(0,-8.2) {TC\\algorithm\\ \lbrack\Cref{fig:MaxpowerTCAlgorithm}\rbrack};
    \draw[bigArrow] ([shift=({0.7cm,0mm})]c4.east) -- ++(14.5,0);
    \node[bigArrowHead, text height=8.5ex](arrow2) at(16.1,-8.2) {};
    \draw \boundellipse{0,-8.2}{1.25}{1.075};
    
    \node[rc, minimum height=4.65cm, text width=4.55cm,node distance =2.75cm](r1) at (5.75, -4.28){Refine GT rules by adding\\application conditions to\\ensure constraint preservation\\of refined GT rules\\ \lbrack\Cref{sec:rule-refinement}\rbrack};
    
    \node[rc, minimum height=2.0cm, text width=4.55cm,node distance =2.75cm](r2) at (12.75, -5.60){Derive context event\\handlers to ensure permanent\\weak consistency\\ \lbrack\Cref{sec:deriving-context-event-handlers}\rbrack};
    
    \node[rc, minimum height=2.0cm, text width=4.55cm,node distance =2.75cm](r3) at (12.75, -8.20){Enforce termination by\\introducing \NAC{} handlers\\ \lbrack\Cref{sec:enforcing-termination}\rbrack};
    
    \node[rc, minimum height=7.28cm, text width=1.55cm,node distance =2.75cm](r4) at (17.75, -5.58){Complete\\TC\\algorithm\\ \lbrack\Cref{fig:algorithm-final}\rbrack};

\end{tikzpicture}}%
\end{center}
\caption{Overview of refinement step (\Cref{sec:refinement})}
\label{fig:overview-refinement}
\end{figure}
\begin{itemize}
\item
In \Cref{sec:rule-refinement}, we enrich the \TC and \CE rules step-by-step with additional application conditions that are derived from the graph constraints to ensure that the refined rules preserve all graph constraints.
After this step, the \TC algorithm preserves weak consistency.
\item 
In \Cref{sec:deriving-context-event-handlers}, we systematically transform the additional application conditions of the \CE rules into equivalent \GT-based handler operations.
These handler operations ensure that the topology is constantly weakly consistent.
This step is necessary because the applicability of \CE rules may not be restricted because they present uncontrollable effects of the environment.
The derivation of \CE handlers is one of the main contributions of this paper.
\item 
In \Cref{sec:enforcing-termination}, we refine the structure of the \TC algorithm to enforce its termination.
Shared application conditions of the refined \activationRuleLong and \inactivationRuleLong may lead to a non-termination of the \TC algorithm.
We systematically transform the shared application conditions into handler operations that ensure the termination of the \TC algorithm.
Finally, we prove that the resulting algorithm terminates.
\end{itemize}

\subsection{Refinement of \TC and Context Event Rules}
\label{sec:rule-refinement}
During this first refinement step, we
\begin{inparaenum}
\item 
analyze which of the specified \TC and \CE rules preserve or violate the \inactiveLinkConstraintKTCLong and the \activeLinkConstraintKTCLong, which constitute weak consistency, and 
\item 
apply a well-known static analysis technique~\cite{HW95} that produces additional application conditions that prevent the \GT rules from violating weak consistency.
\end{inparaenum}

\subsubsection{Analysis of Consistency Preservation}
The examples shown in \Cref{fig:constraint-violation-examples} illustrate situations in which applying one of the \GT rules violates weak consistency.
Every example is shown in the form of a weakly consistent initial topology, a rule application, and a constraint-violating final topology.
\begin{figure}
\newcommand{\subfigWidth}[0]{.7\textwidth}
\newcommand{\subfigPathPrefix}[0]{./figures/pdf/ConstraintViolationExamples}

\subcaptionbox{Applying \activationRule violates \activeLinkConstraintKTC.\label{fig:constraint-violation-example-actrule-actconstraint1}}[\textwidth]{\includegraphics[width=\subfigWidth]{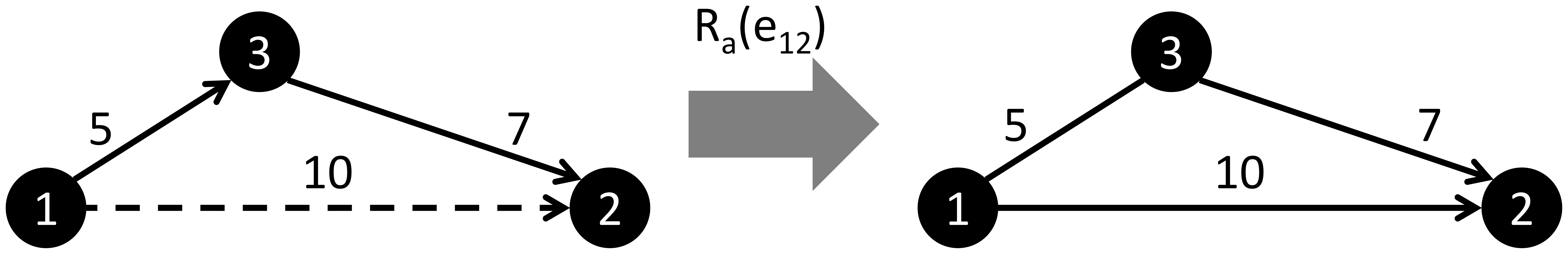}}

\subcaptionbox{Applying \activationRule violates \activeLinkConstraintKTC.\label{fig:constraint-violation-example-actrule-actconstraint2}}[\textwidth]{\includegraphics[width=\subfigWidth]{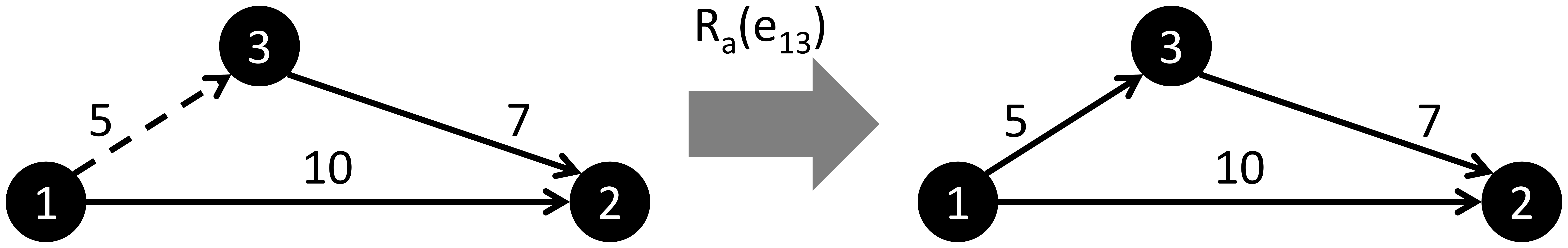}}

\subcaptionbox{Applying \inactivationRule violates \inactiveLinkConstraintKTC and \activeLinkConstraintKTC.\label{fig:constraint-violation-example-inactrule}}[\textwidth]{\includegraphics[width=\subfigWidth]{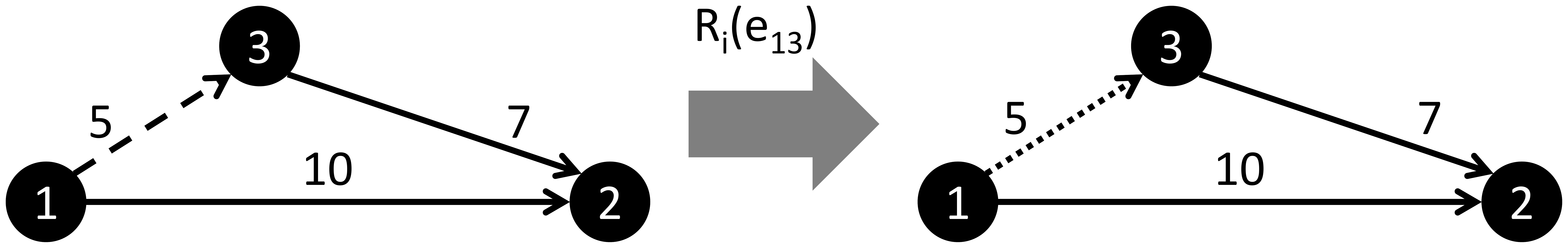}}

\subcaptionbox{Applying \unclassificationRule violates \inactiveLinkConstraintKTC.\label{fig:constraint-violation-example-unclrule}}[\textwidth]{\includegraphics[width=\subfigWidth]{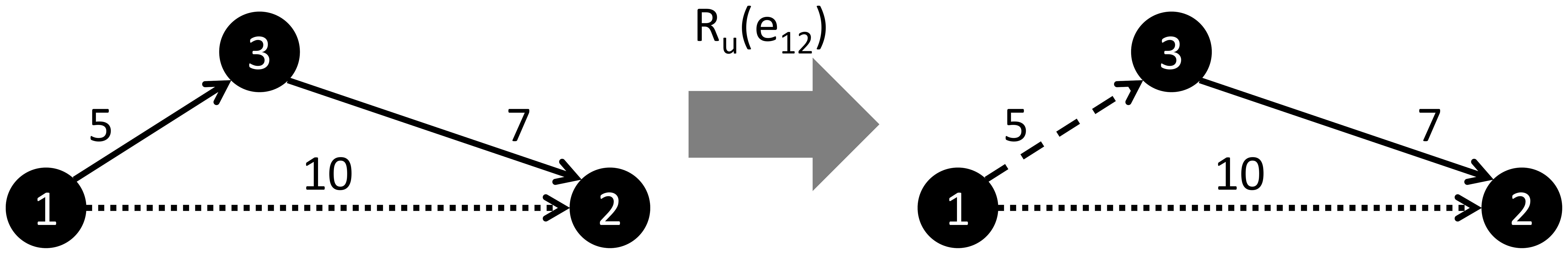}}

\subcaptionbox{Applying \linkRemovalRule violates \inactiveLinkConstraintKTC.\label{fig:constraint-violation-example-linkremrule}}[\textwidth]{\includegraphics[width=\subfigWidth]{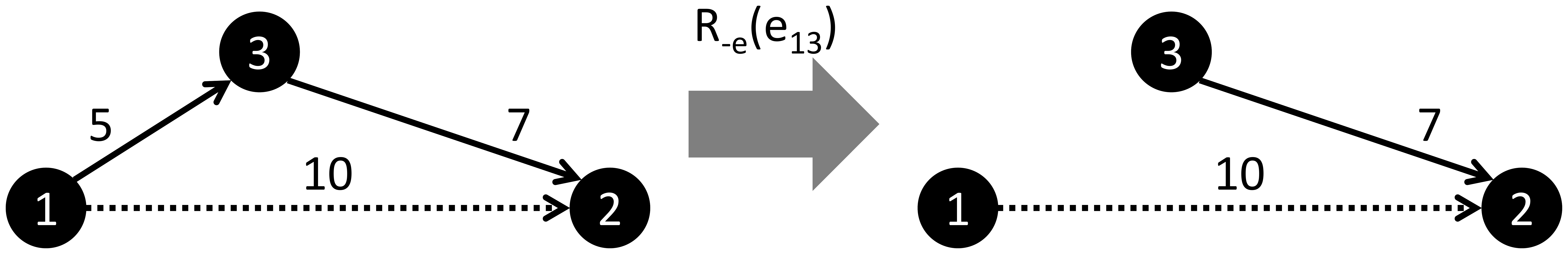}}

\subcaptionbox{Applying \weightModificationRule violates \inactiveLinkConstraintKTC.\label{fig:constraint-violation-example-modweightrule}}[\textwidth]{\includegraphics[width=\subfigWidth]{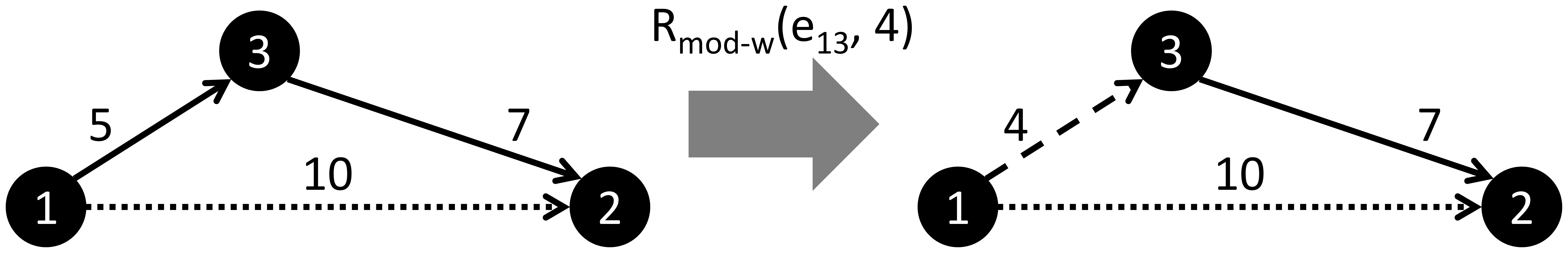}}

\caption{Sample constraint violations caused by applying \GT rules for \ktc with $k=2$}
\label{fig:constraint-violation-examples}
\end{figure}
These findings can be generalized to the results in \Cref{tab:constraint-preservation-analysis}.
\begin{table}
    \begin{center}
        \caption{Overview of constraint preservation (\OK: constraint-preserving, \notOK: constraint-violating, *: depends on \TC algorithm, $\dagger$: assuming unclassification of incident unclassified links)}
        \label{tab:constraint-preservation-analysis}
        \begin{tabular}{l|cc|p{4cm}}
            \toprule
            \textbf{Rule}
            & \textbf{\inactiveLinkConstraintKTC}
            & \textbf{\activeLinkConstraintKTC}
            & \textbf{Remarks}\\
            \midrule
            \activationRule&\OK&\notOK& \Cref{fig:constraint-violation-example-actrule-actconstraint1,fig:constraint-violation-example-actrule-actconstraint2}\\
            \inactivationRule&\notOK&\notOK&\Cref{fig:constraint-violation-example-inactrule}\\
            \unclassificationRule&\notOK&\OK&\Cref{fig:constraint-violation-example-unclrule}\\
            \findUnclassifiedLinkRule, \findClassifiedLinkRule&\OK&\OK&No modification\\
            \midrule
            \nodeAdditionRule, \nodeRemovalRule&\OK&\OK&Added/removed node is isolated\\
            \linkAdditionRule&\OK&\OK&Added link is unclassified\\
            \linkRemovalRule&\notOK&\OK&\Cref{fig:constraint-violation-example-linkremrule}\\
            \weightModificationRule&$\text{\notOK}^*$&\OK&\Cref{fig:constraint-violation-example-modweightrule}, nalogous to \unclassificationRule\\
            \latModificationRule, \longModificationRule&$\text{\notOK}^*$&$\text{\OK}^{*\dagger}$&Analogous to \unclassificationRule\\
            \energyModificationRule, \hopCountModificationRule&$\text{\notOK}^*$&$\text{\OK}^{*\dagger}$&Analogous to \unclassificationRule\\
            \bottomrule
        \end{tabular}
    \end{center}
\end{table}
\Cref{fig:constraint-violation-examples} provides examples for each entry in the table that corresponds to a constraint-violating \GT rule (denoted by a cross mark, \notOK).
In the following, we discuss why the remaining \GT rules are guaranteed to preserve the graph constraints, as indicated by the nine check marks (\OK) in the table.
\begin{itemize}
\item 
The \activationRuleLong preserves the \inactiveLinkConstraintKTCLong because applying \activationRule neither creates a new match of the premise of \inactiveLinkConstraintKTC nor destroys a match of the conclusion \conclusionInactiveLinkConstraint of \inactiveLinkConstraintKTC.
\item
The \unclassificationRuleLong preserves the \activeLinkConstraintKTCLong because unclassifying a link may never result in a new match of the \premiseActiveLinkConstraintLong of \activeLinkConstraintKTC.
\item 
The \findUnclassifiedLinkRuleLong and the \findClassifiedLinkRuleLong do not modify the topology and, therefore, preserve both constraints.
\item 
The \nodeAdditionRuleLong (\nodeRemovalRuleLong) only adds (removes) an isolated node to (from) the topology, which may neither produce a new match of \premiseActiveLinkConstraint or \premiseInactiveLinkConstraint nor destroy a match of \conclusionInactiveLinkConstraint.
\item 
The \linkAdditionRuleLong preserves \inactiveLinkConstraintKTC and \activeLinkConstraintKTC because the added link is unclassified.
\item 
The \linkRemovalRuleLong preserves the \activeLinkConstraintKTCLong because removing a link cannot establish a new match of its premise \premiseActiveLinkConstraint.
\end{itemize}
The situation is more difficult for the attribute modification rules.
We may assume that the sensor node is configured to suppress \CEs that modify attributes irrelevant for the current \TC algorithm.
Therefore, the following explanations only hold if the considered \TC algorithm depends on the particular attribute.
Independent of the considered \TC algorithm, the \weightModificationRuleLong preserves the \activeLinkConstraintKTCLong but does not preserve the \inactiveLinkConstraintKTCLong because this rule unclassifies the modified link and, thereby, behaves similar to the \unclassificationRuleLong.
For instance, whenever the hop count of a node \nodeVariablea is modified (by applying the \hopCountModificationRuleLong), we may need to evaluate for each incoming and outgoing link of \nodeVariablea whether its state needs to be updated.
A conservative resolution strategy is to unclassify all incident links of a node whenever one of its attribute values changes.
Under this assumption, the \hopCountModificationRuleLong, \energyModificationRuleLong,  the \latModificationRuleLong, and the \longModificationRuleLong behave equivalently to a sequence of applications of the \unclassificationRuleLong.

\subsubsection{Refinement of Graph Transformation Rules}
In this section, we show how the identified constraint-violating \GT rules can be refined to preserve the graph constraints.
In total, we have identified twelve problematic pairs of \GT rules and graph constraints, where a \GT rule may violate a particular graph constraint.
These pairs serve as input for the constructive refinement algorithm that has first been presented by Heckel and Wagner~\cite{HW95} for purely structural graph constraints and later extended by Deckwerth and Varró~\cite{DV14} to support attribute constraints.
The fundamental idea of translating global constraints into so-called \emph{weakest preconditions} of an algorithm dates back to Dijkstra~\cite{Dijkstra1976}.
\Cref{fig:overview-refinement-procedure} shows an overview of the refinement of a \GT rule \GTrule{x} and a graph constraint \constraint{x}.
\begin{figure}
    \begin{center}
        \includegraphics[width=.95\textwidth]{./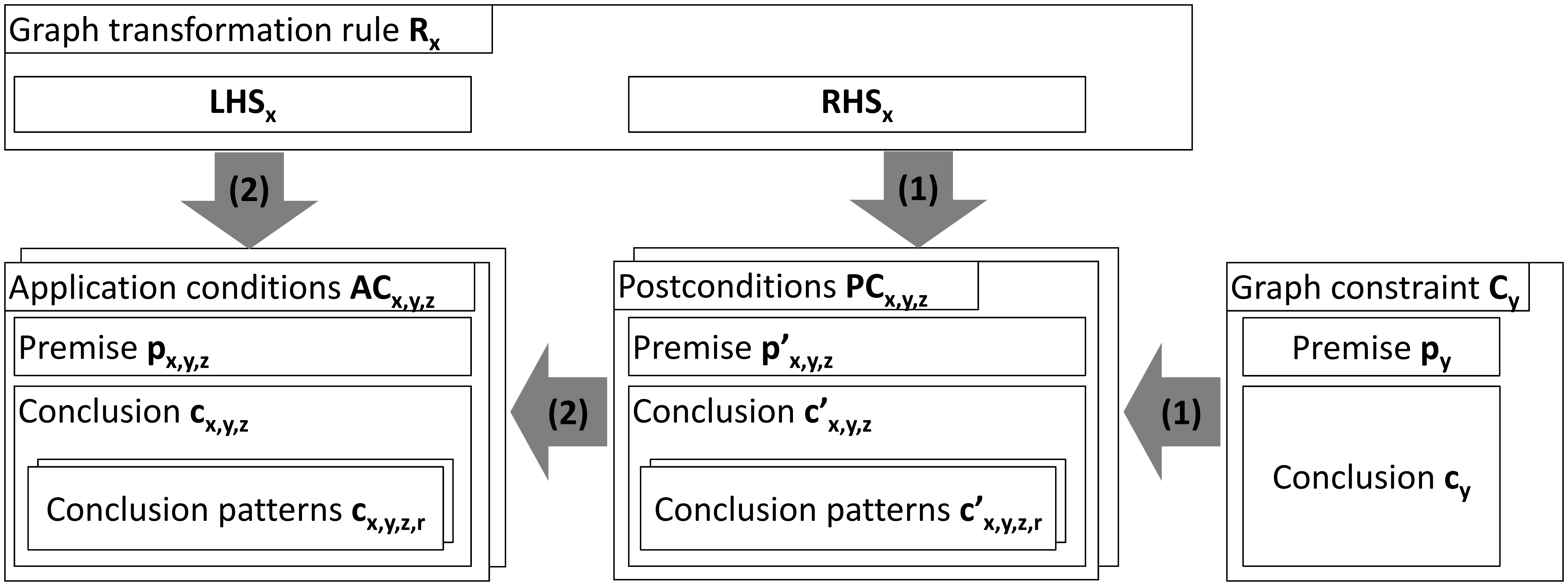}
        \caption{Refinement procedure for one \GT rule \GTrule{x} and one graph constraint \constraint{y}}
        \label{fig:overview-refinement-procedure}
    \end{center}
\end{figure}
\begin{enumerate}[(1)]
\item 
The (global) graph constraint \constraint{y} is combined with \RHS{x}, which results in a set of graph constraints that act as additional \emph{postconditions} of \GTrule{x}.
If the set of postconditions is empty, then \GTrule{x} already preserves \constraint{y}.
This is the case for all pairs of \GT rules and graph constraints that are labeled with a check mark (\OK) in \Cref{tab:constraint-preservation-analysis}.
Postconditions are similar to application conditions of a \GT rule \GTrule{x}, but they are checked \emph{after} applying \GTrule{x}.
If a postcondition is violated, the application of \GTrule{x} has to be rolled back by the \GT engine.
Avoiding this rollback is the purpose of the following transformation (2).
\item 
The postconditions are transformed into an equivalent set of application conditions by applying \GTrule{x} in reverse order to the  postconditions.
These application conditions block any application of \GTrule{x} that would violate its corresponding postcondition.
If applying \GTrule{x} to a particular postcondition in reverse order is impossible, then \GTrule{x} never violates this postcondition.
\end{enumerate}
The following explanations are deliberately simplified and shown for the following rule-constraint pairs: $(\activationRule, \activeLinkConstraintKTC), (\inactivationRule, \inactiveLinkConstraintKTC)$.
A detailed, formal description of the following steps can be found in \cite{Kluge2016,HW95}.
The presented steps are analogous for all other pairs of \GT rules and graph constraints.

First, we define a concept that is crucial for the refinement algorithm:
A \emph{gluing \gluing{\ell, r}{} of two patterns \leftPattern and \rightPattern} is a pattern that represents a possible way of overlapping \leftPattern and \rightPattern.
We label node variables in gluings with uppercase letters and the original node variable(s) in \leftPattern and \rightPattern.
For instance, the node variable $A[x,y]$ of a gluing originates from node variables $x$ of \leftPattern and $y$ of \rightPattern.
In a valid gluing,
\begin{inparaenum}
\item 
every node (link) variable has one or two original node (link) variables in \leftPattern and/or \rightPattern,
\item 
at least one node variable in \gluing{\ell, r}{} has original node variables in both \leftPattern and \rightPattern,
\item 
the link variable mappings are compatible, \idest, if \nodeVariablea and \nodeVariableb are the original node variables of \nodeVariableA and \nodeVariableB, then \linkVariableab is the original link variable of \linkVariableAB, and
\item 
the attribute constraints of \gluing{\ell, r}{} are the conjunction of the attribute constraints of \leftPattern and \rightPattern.
\end{inparaenum}

\subsubsection{Refinement of \ActivationRuleLong Based on the \ActiveLinkConstraintKTCLong}
We begin with the refinement of the \activationRuleLong based on the negative \activeLinkConstraintKTCLong.
The refinement step can be performed faster for negative graph constraints because its empty conclusion maps to empty conclusions of the postcondition and the application conditions.
We obtain the set of postconditions by calculating all twelve gluings $\gluingAA{z}{\prime}, 1 \leq z \leq 12$  of the \RHS{} of the \activationRuleLong{} and the premise \premiseActiveLinkConstraint of the \activeLinkConstraintKTCLong (\Cref{fig:refinement-actrule-actconstraint-gluings}).
\begin{figure}
    \begin{center}
        \includegraphics[width=\textwidth]{./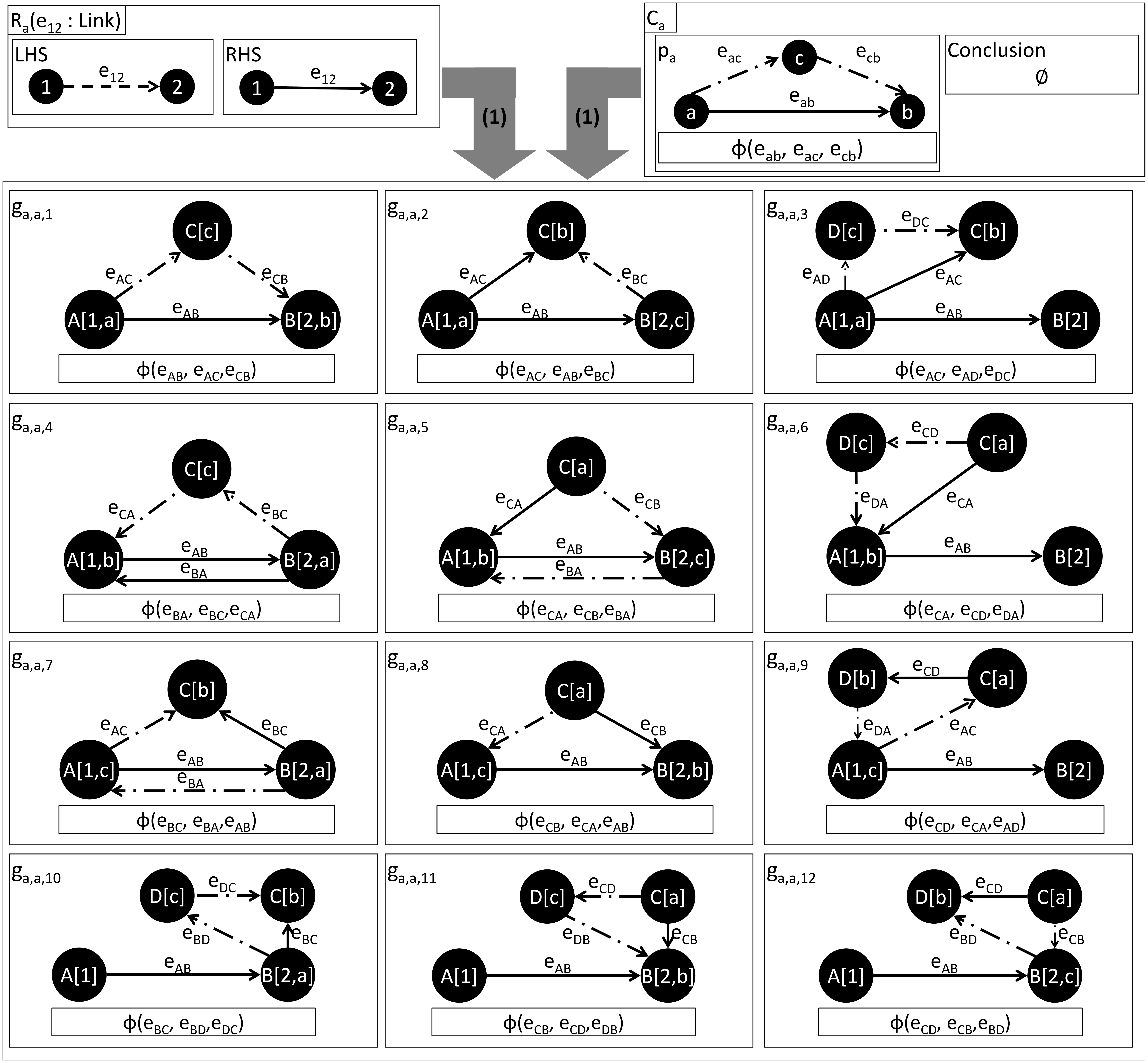}
        \caption{All twelve possible gluings of \RHSActivationRule and \premiseActiveLinkConstraint}
        \label{fig:refinement-actrule-actconstraint-gluings}
    \end{center}
\end{figure}
Each gluing represents a possible violation of \activeLinkConstraintKTC.
We observe that the nine gluings $\gluingAA{z}{\prime}$ for $z \in \{3, 4, 5, 6, 7, 9, 10, 11, 12\}$ represent constraint violations that are \emph{not} caused by activating \linkVariableAB.
This means that any constraint violation corresponding to these gluings already existed before applying the \activationRuleLong, which contradicts the assumption that weak consistency is fulfilled prior to invoking any \GT rule.
Therefore, we neglect the aforementioned gluings and only consider the remaining three gluings \gluingAA{1}{\prime}, \gluingAA{2}{\prime}, and \gluingAA{8}{\prime} for the transformation from postconditions to application conditions.
In step (2), we obtain three \NACs, by changing the link-state attribute condition of \linkVariableAB from $\state{\linkVariableAB} = \ACT$ to $\state{\linkVariableAB} = \UNCL$.
The resulting application conditions \NACaa{1}, \NACaa{2}, and \NACaa{8} are shown in \Cref{fig:refinement-actrule-actconstraint-nacs}.
\begin{figure}
    \begin{center}
        \includegraphics[width=.9\textwidth]{./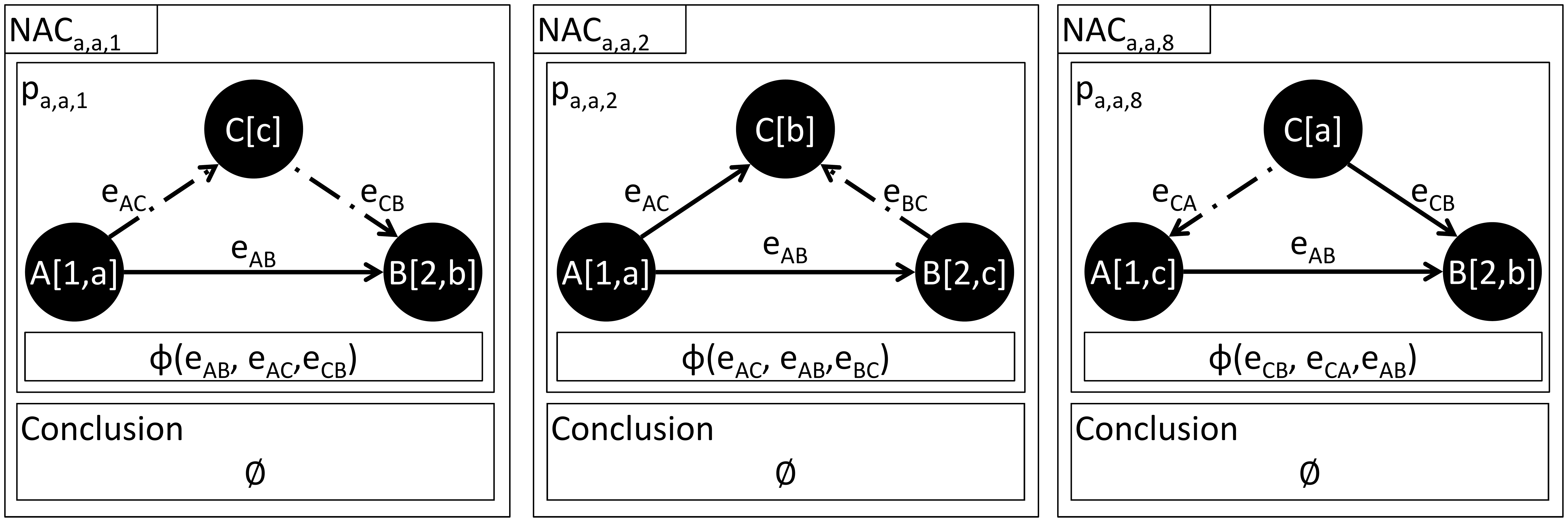}
        \caption{Application conditions resulting from the refinement of \activationRule and \activeLinkConstraintKTC}
        \label{fig:refinement-actrule-actconstraint-nacs}
    \end{center}
\end{figure}
Finally, we rename the node variables in the generated application conditions back to the original node variables of the \activationRuleLong, as shown in \Cref{fig:activation-rule-after-refinement}.

Note that \NACaa{1} prevents the constraint violation shown in \Cref{fig:constraint-violation-example-actrule-actconstraint1}, and \NACaa{2} prevents the constraint violation shown in \Cref{fig:constraint-violation-example-actrule-actconstraint2}.
Similarly, \NACaa{8} would prevent a constraint violation if \linkVariableThreeTwo (and not \linkVariableOneThree) were unclassified in \Cref{fig:constraint-violation-example-actrule-actconstraint2}.

\subsubsection{Refinement of \InactivationRuleLong Based on the \InactiveLinkConstraintKTCLong}
\label{sec:refinement-actrule-inactconstraint}
In this second example, we focus on the transformation of the constraint conclusion.
The basic idea is that the gluings, which result from combining \RHS{x} with the \premise{y}{}, serve as a basis for deriving the conclusion of the postcondition.
We obtain the conclusion pattern \conclusion{x,y,z,1}{\prime} of the postcondition \postcondition{x,y,z} by first adding images of all elements in the conclusion of \constraint{y} that are not covered by the gluing of \RHS{x}  and \premise{y}{}.
In the original presentation of the constructive approach~\cite{HW95} contains an additional step:
The conclusion pattern \conclusion{x,y,z,1}{\prime} serves as the basis to generate additional conclusion patterns $\conclusion{x,y,z,r}{\prime}, r > 1$ by merging the freshly added node variables with the existing node variables.
In our example, this step is not necessary because merging node variables results in loops or parallel links, which never occurs in the considered class of topologies.

As a concrete example, we refine the \inactivationRuleLong based on the positive \inactiveLinkConstraintKTCLong.
\Cref{fig:refinement-inactrule-inactconstraint-gluings} shows the three possible gluings $\gluingII{z}{\prime}, z \in \{1,2,3\}$ of \RHSInactivationRule the premise \premiseInactiveLinkConstraint of the \inactiveLinkConstraintKTCLong and the corresponding conclusions \conclusionII{1,1}{\prime}, \conclusionII{2,1}{\prime} and \conclusionII{3,1}{\prime}.
\begin{figure}
    \begin{center}
        \includegraphics[width=\textwidth]{./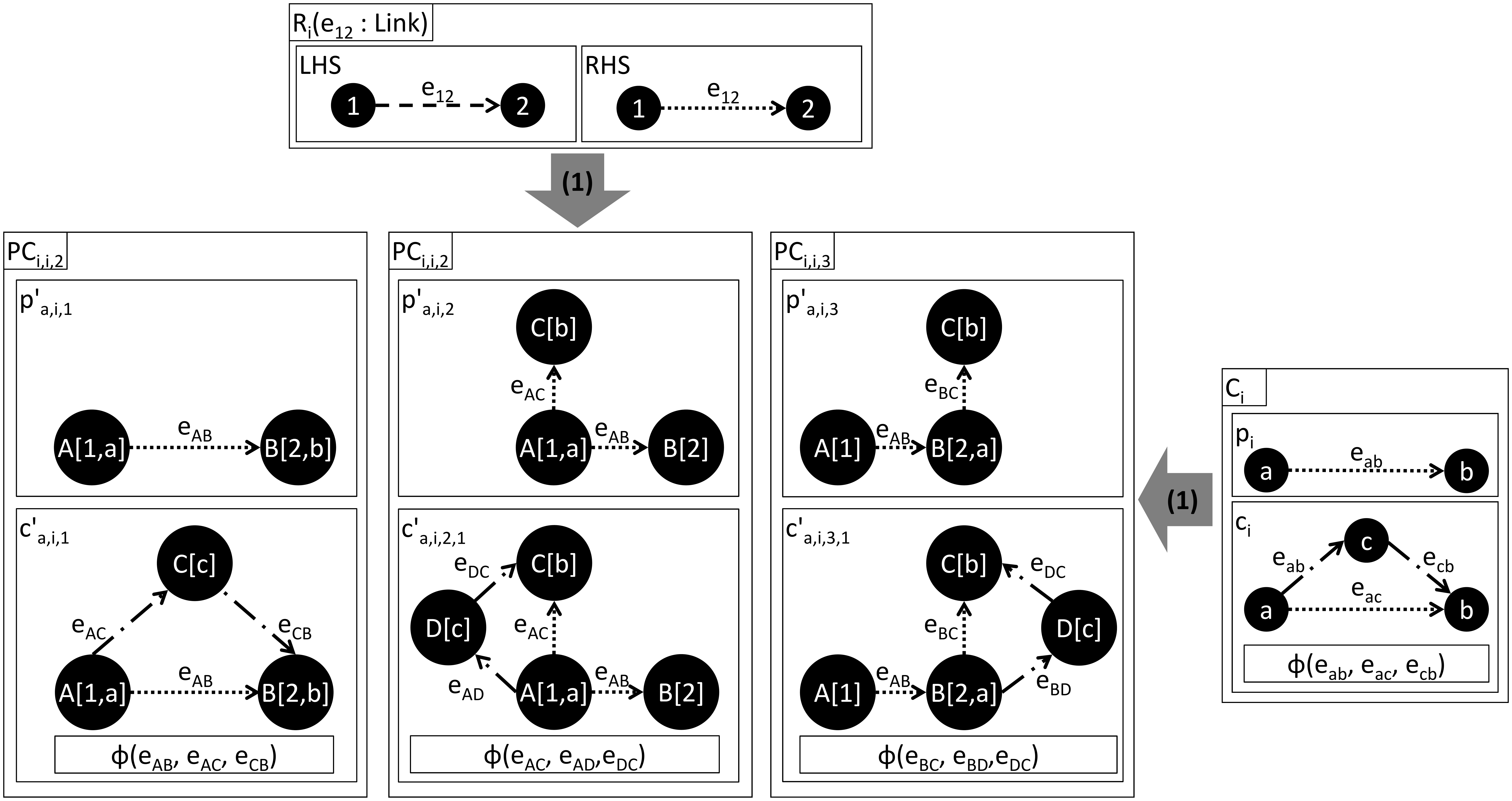}
        \caption{Derivation of the postconditions \postconditionII{1}, \postconditionII{2}, and \postconditionII{3} for \inactivationRule based on \inactiveLinkConstraintKTC}
        \label{fig:refinement-inactrule-inactconstraint-gluings}
    \end{center}
\end{figure}
For \postconditionII{2} and \postconditionII{3}, we may stop the refinement procedure here because the possible constraint violations represented by \postconditionII{2} and \postconditionII{3} are not caused by the inactivation of \linkVariableAB.
The remaining postcondition \postconditionII{1} results in a new positive application condition \PACii{1}.
\Cref{fig:inactivation-rule-after-refinement} shows the resulting application condition of \activationRule with appropriately renamed variables.

\subsubsection{Refinement of the Remaining Combinations of \GT Rules and Constraints}

We only sketch the required remaining rule refinements instead of describing them in detail.
Examples of additional rule refinement steps can be found in~\cite{Kluge2016}.
\begin{enumerate}
\item 
The refinement of $(\inactivationRule, \activeLinkConstraintKTC)$ results in two additional negative application conditions\linebreak\NACia{2} and \NACia{8} of the \inactivationRuleLong, as shown in \Cref{fig:inactivation-rule-after-refinement}, which are identical to the negative application conditions \NACaa{2} and \NACaa{8} of the \activationRuleLong.
The refined \inactivationRuleLong is no longer applicable to the left topology in \Cref{fig:constraint-violation-example-inactrule} due to \NACia{2}.

\item 
The refinement of $(\unclassificationRule, \inactiveLinkConstraintKTC)$ results in four additional positive application conditions\linebreak\PACui{2}, \PACui{4}, \PACui{5}, and \PACui{6} (\Cref{fig:unclassification-rule-after-refinement}).
These application conditions require that a link \linkVariableOneTwo may only be unclassified if it is not part of the last match of the conclusion \conclusionInactiveLinkConstraint for any incident link \linkVariableOneThree, \linkVariableThreeOne, \linkVariableTwoThree, or \linkVariableThreeTwo of its end nodes \nodeVariableOne and \nodeVariableTwo.
\begin{figure}
    \begin{center}
        \includegraphics[width=.7\textwidth]{./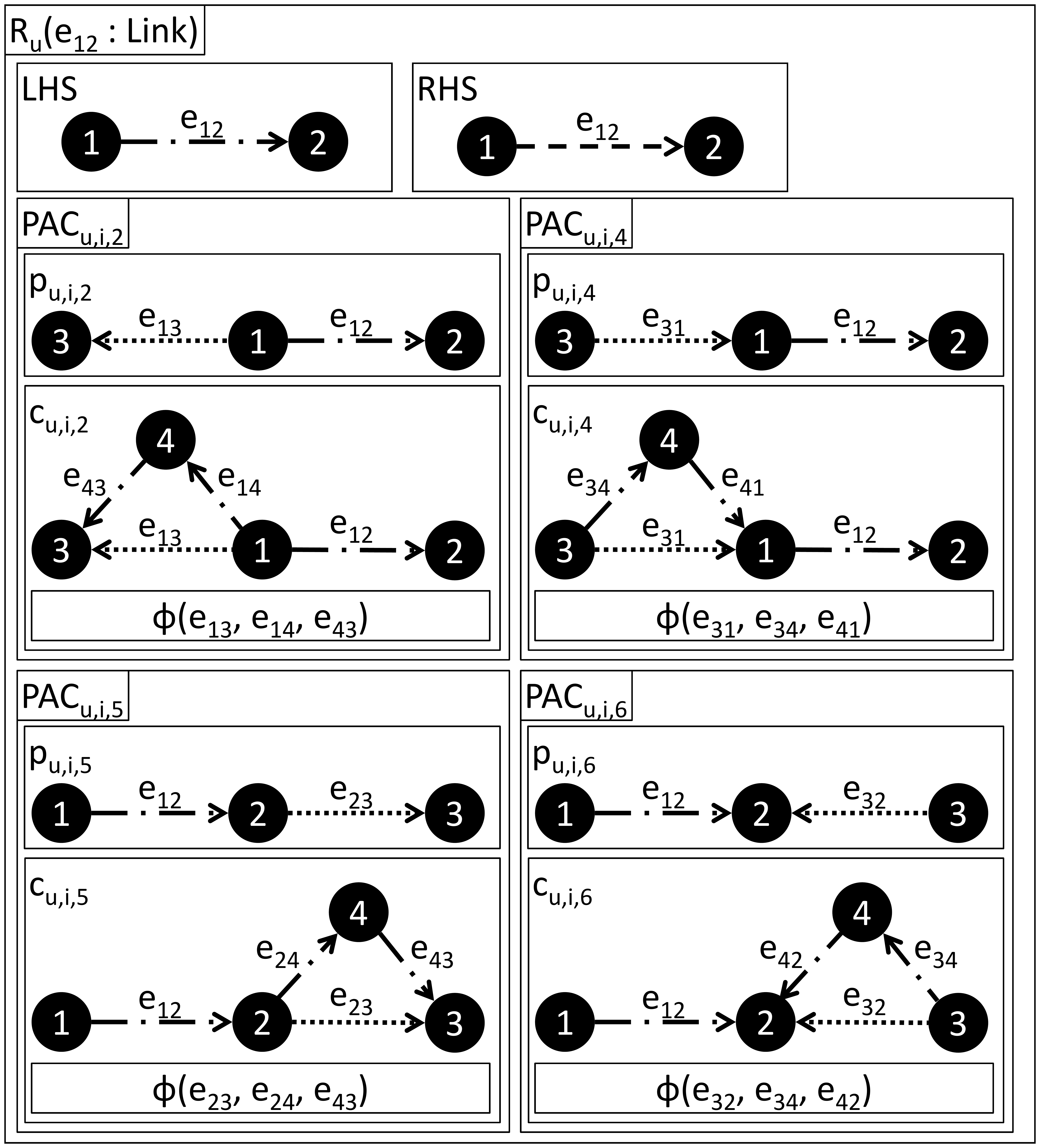}
        \caption{\xmakefirstuc{\unclassificationRuleLong} after refinement}
        \label{fig:unclassification-rule-after-refinement}
    \end{center}
\end{figure}
The refined \unclassificationRuleLong is now no longer applicable to the topology in \Cref{fig:constraint-violation-example-unclrule} due to \PACui{2}.

\item 
The refinements of $(\linkRemovalRule, \inactiveLinkConstraintKTC)$, $(\weightModificationRule, \inactiveLinkConstraintKTC)$, $(\latModificationRule, \inactiveLinkConstraintKTC)$,
$(\longModificationRule, \inactiveLinkConstraintKTC)$, $(\energyModificationRule, \inactiveLinkConstraintKTC)$, and $(\hopCountModificationRule, \inactiveLinkConstraintKTC)$ result in positive application conditions that are similar to the application conditions of the refined \unclassificationRuleLong and are not shown here for conciseness.
\end{enumerate}
The specification of the \TC algorithm after the rule refinement step is shown in \Cref{fig:algorithm-after-refinement}.
\begin{figure}
    \begin{center}
        \newcommand{\subfigWidth}{\textwidth}
        \begin{subfigure}[t]{\subfigWidth}
            \includegraphics[width=.99\textwidth]{./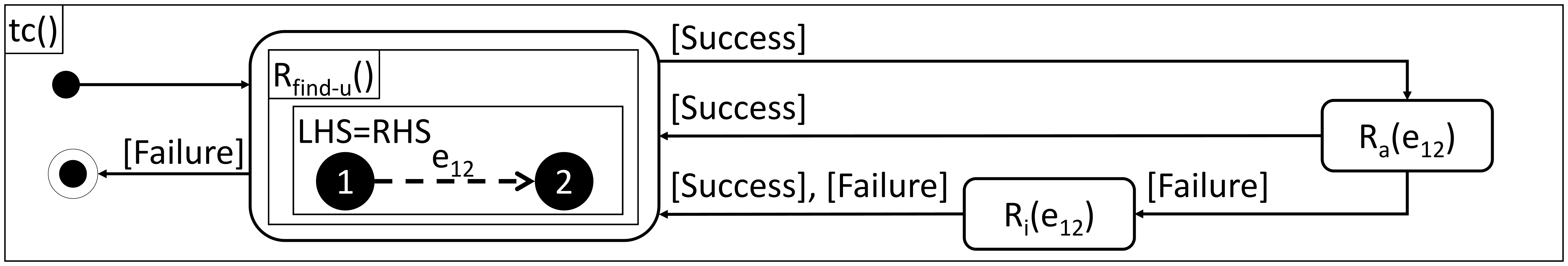}
            \caption{Control flow of \TC algorithm after refinement (Same as \Cref{fig:MaxpowerTCAlgorithm})}
            \label{fig:algorithm-after-refinement}
        \end{subfigure}
        \begin{subfigure}[t]{\subfigWidth}
             \includegraphics[width=.99\textwidth]{./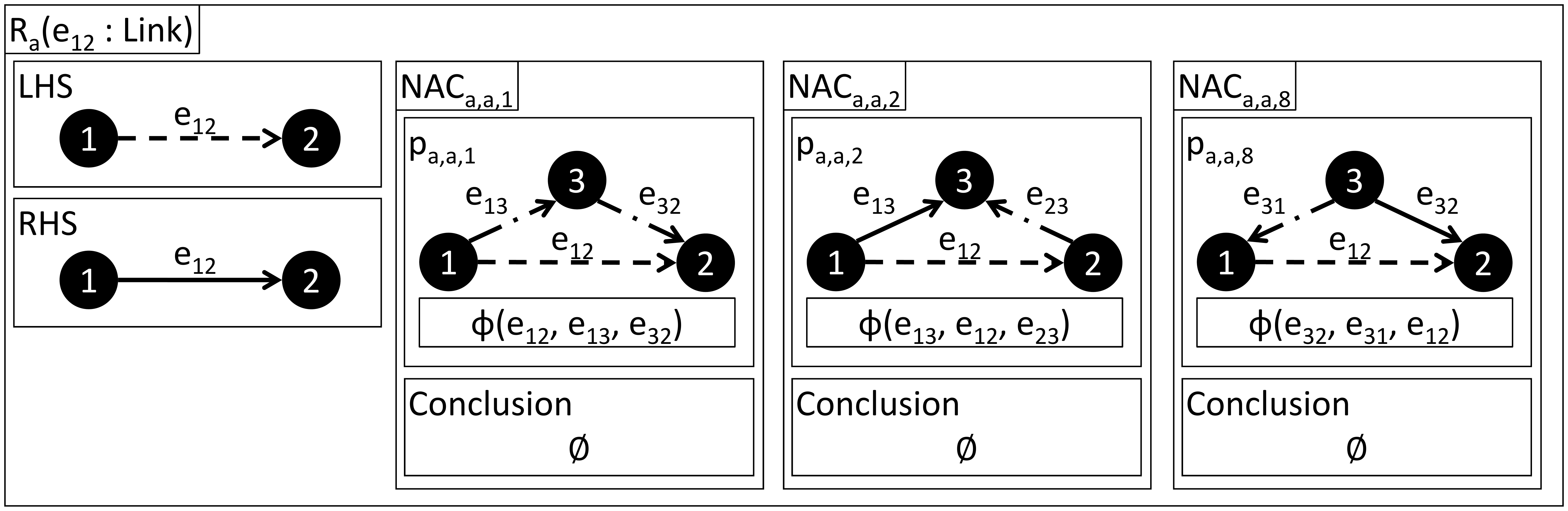}
             \caption{Refined \activationRuleLong}
             \label{fig:activation-rule-after-refinement}
        \end{subfigure}
        \begin{subfigure}[t]{\subfigWidth}
            \includegraphics[width=.99\textwidth]{./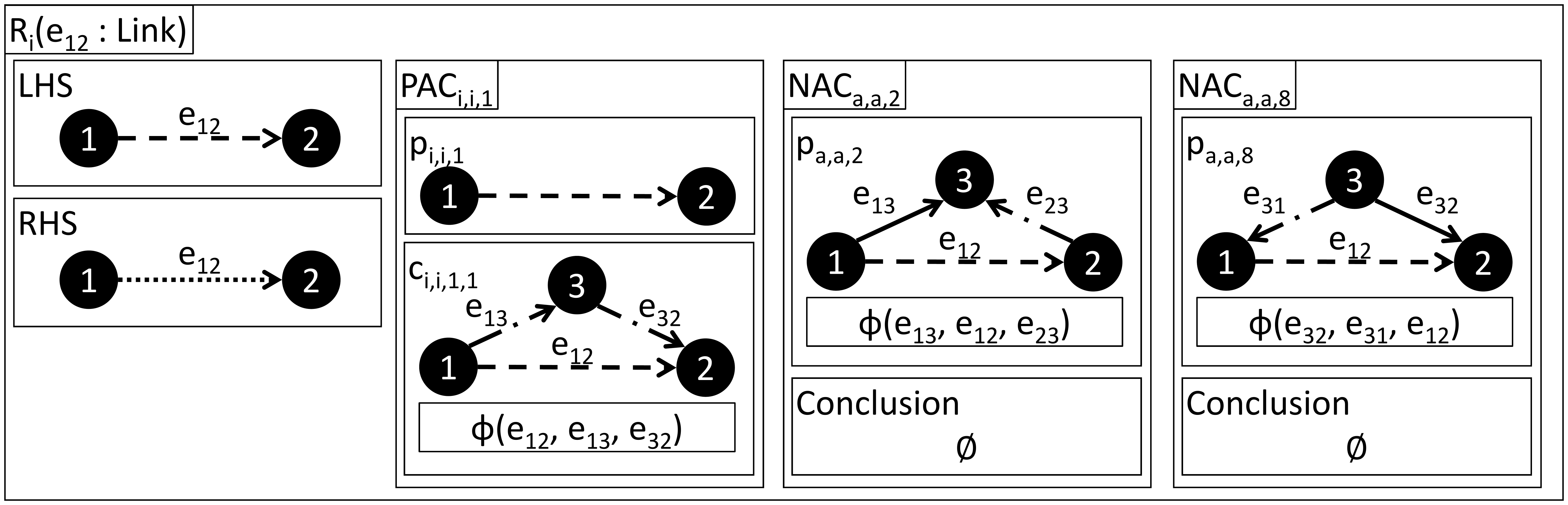}
            \caption{Refined \inactivationRuleLong}
            \label{fig:inactivation-rule-after-refinement}
        \end{subfigure}
        \caption{Overview of \TC algorithm after rule refinement}
    \end{center}
\end{figure}

\subsection{Deriving Context Event Handlers}
\label{sec:deriving-context-event-handlers}
The first refinement step resulted in additional application conditions for \TC and \CE rules.
The refined rules \unclassificationRule, \linkRemovalRule, \weightModificationRule, \latModificationRule, \longModificationRule, \energyModificationRule, and \hopCountModificationRule are now applicable in fewer situations than before.
For the \CE rules, this is problematic because \CE rules represent unrestrictable modifications of the topology caused by the environment.
For the \unclassificationRuleLong, this is problematic because its purpose is to deliberately unclassify links, which should always be possible.
Therefore, we have to restore the original applicability of these \GT rules \emph{without} sacrificing their constraint-preserving behavior.
We propose to transform each added application condition into a \emph{handler operation}, which repairs any constraint violation that results from applying the original \CE rule.
We will next describe the general idea of deriving handler operations and then illustrate the algorithm for the \unclassificationRuleLong.
The derivation of handler operations of the remaining \GT rules is analogous.

\Cref{fig:context-event-handler-sketch} shows the structure of a generic context event handler operation \handlerOperation{\GTrule{x}} for the \GTruleLong{x}.
\begin{figure}
    \begin{center}
        \includegraphics[width=\textwidth]{./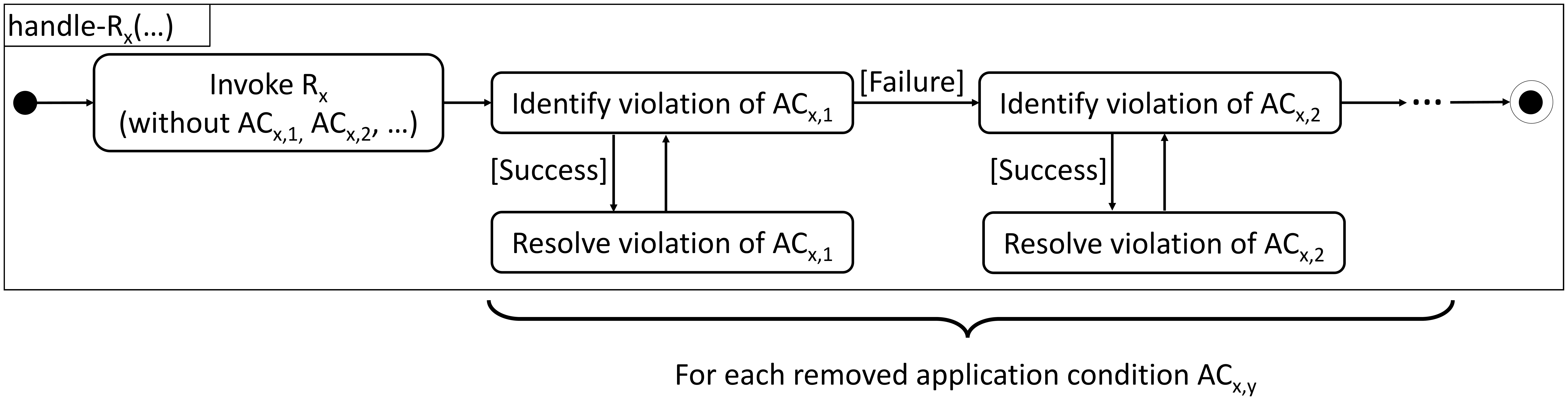}
        \caption{Structure of the handler operation \handlerOperation{\GTrule{x}} for \GT rule \GTrule{x}}
        \label{fig:context-event-handler-sketch}
    \end{center}
\end{figure}
The fundamental idea is to 
\begin{inparaenum}
\item 
first apply the original \GTruleLong{x} (\idest, without the additional application conditions \AC{x,1}, \AC{x,2}, \dots), and 
\item 
then to identify and resolve violations of \AC{x,1}, \AC{x,2}, \dots
\end{inparaenum}
The control flow of the handler operation ensures that it may only terminate if all violations have been resolved.
The most important requirement is that the violation resolution strategy shall not produce new constraint violations.
In our scenario, we propose to resolve any constraint violation by means of unclassifying links.
This approach is valid because a topology consisting exclusively of unclassified links fulfills the \inactiveLinkConstraintKTCLong and the \activeLinkConstraintKTCLong.
Therefore, a na\"{i}ve violation resolution strategy could simply unclassify all links in the topology.

We now derive the concrete handler operation \handleUnclassification for the \unclassificationRuleLong.
During the refinement step, four application conditions have been added to \unclassificationRule, which are now translated into four identify-and-resolve loops.
The invocation of the original, unrefined \unclassificationRuleLong and the violation resolution for \PACui{2} are shown in \Cref{fig:handler-operation-for-unclassification-rule}.
The violation identification rule \violationIdentificationRule{\text{PAC-u,i,2}} identifies any link \linkVariableOneThree that is \emph{not} part of a triangle together with classified links \linkVariableOneFour and \linkVariableFourThree so that the predicate \tcPredicate{}{\linkVariableOneThree, \linkVariableOneFour, \linkVariableFourThree} is fulfilled.
\begin{figure}
    \begin{center}
        \includegraphics[width=\textwidth]{./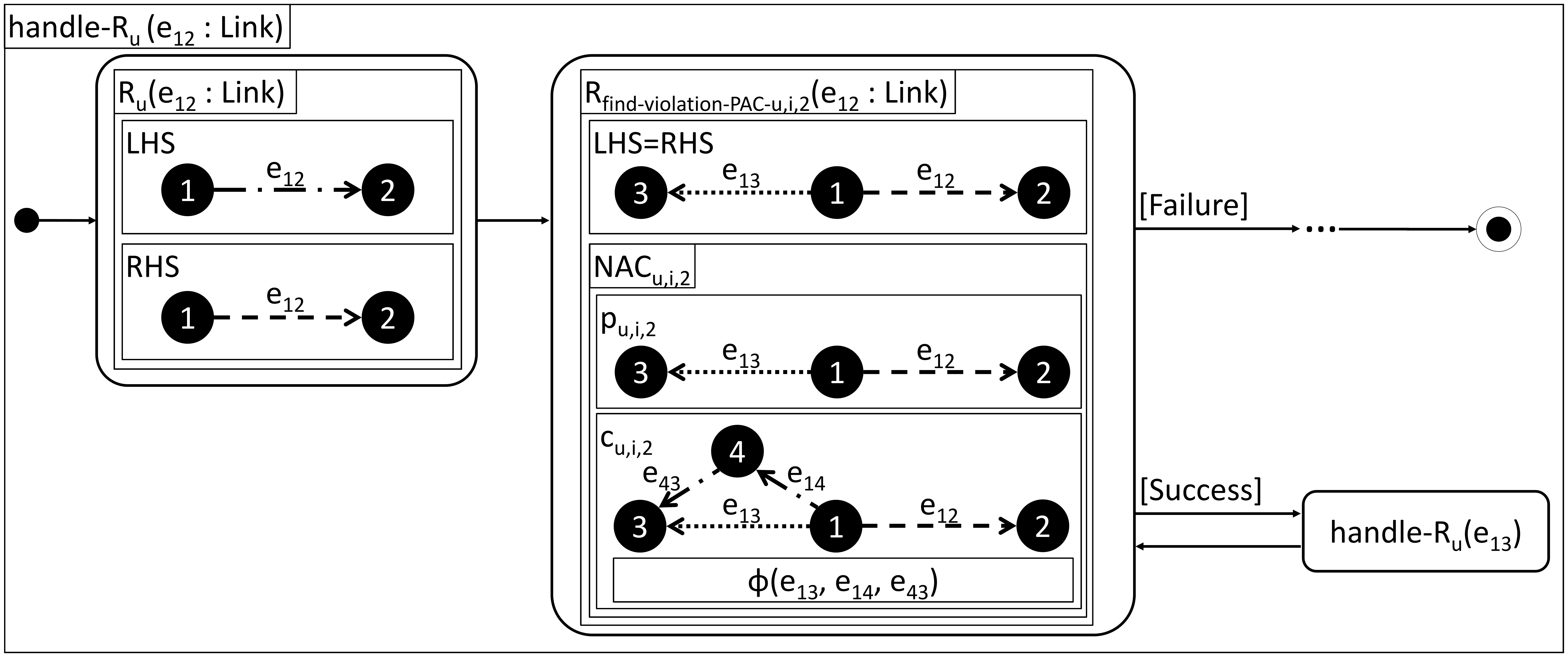}
        \caption{Handler operation \handleUnclassification for the \unclassificationRuleLong}
        \label{fig:handler-operation-for-unclassification-rule}
    \end{center}
\end{figure}
We unclassify any such link \linkVariableOneTwo by invoking \handleUnclassification recursively.
The violation-identifying and violation-resolving story nodes that correspond to \PACui{4}, \PACui{5}, and \PACui{6} are analogous and omitted here for conciseness.

\subsection{Enforcing Termination}
\label{sec:enforcing-termination}

The refined \activationRuleLong and \inactivationRuleLong share two pairs of identical \NACs.
This means that whenever \NACaa{2} prevents the application of the \activationRuleLong, \NACia{2} also prevents the application of the \inactivationRuleLong.
The same holds for \NACaa{8} and \NACia{8}.
In \Cref{fig:example-need-for-unclassification}, the top row of topology modifications shows a situation where a particular processing order of the unclassified links causes non-termination of the algorithm.
In this case, \ktc is executed with $k=2$, and \linkVariableThreeTwo and \linkVariableOneTwo are activated prior to processing \linkVariableOneThree.
Now, \linkVariableOneThree can be neither activated nor inactivated due to the negative application conditions \NACaa{2} and \NACia{2}, respectively.
This causes an infinite execution of the algorithm.

The situation can be solved by reverting link classifications.
In this example, a possible solution can be derived from the only possible strongly consistent output topology:
The link \linkVariableOneThree and \linkVariableThreeTwo are active, and \linkVariableOneTwo is inactive.
This means that \linkVariableOneTwo should be unclassified again and \linkVariableOneThree should be activated.
The two rule applications have to happen atomically, \idest,  the activation of \linkVariableOneThree must follow immediately after the unclassification of \linkVariableOneTwo.
This order can easily be implemented using programmed \GT.
Finally, \linkVariableOneTwo becomes inactive.
\begin{figure}
    \begin{center}
        \includegraphics[width=\textwidth]{./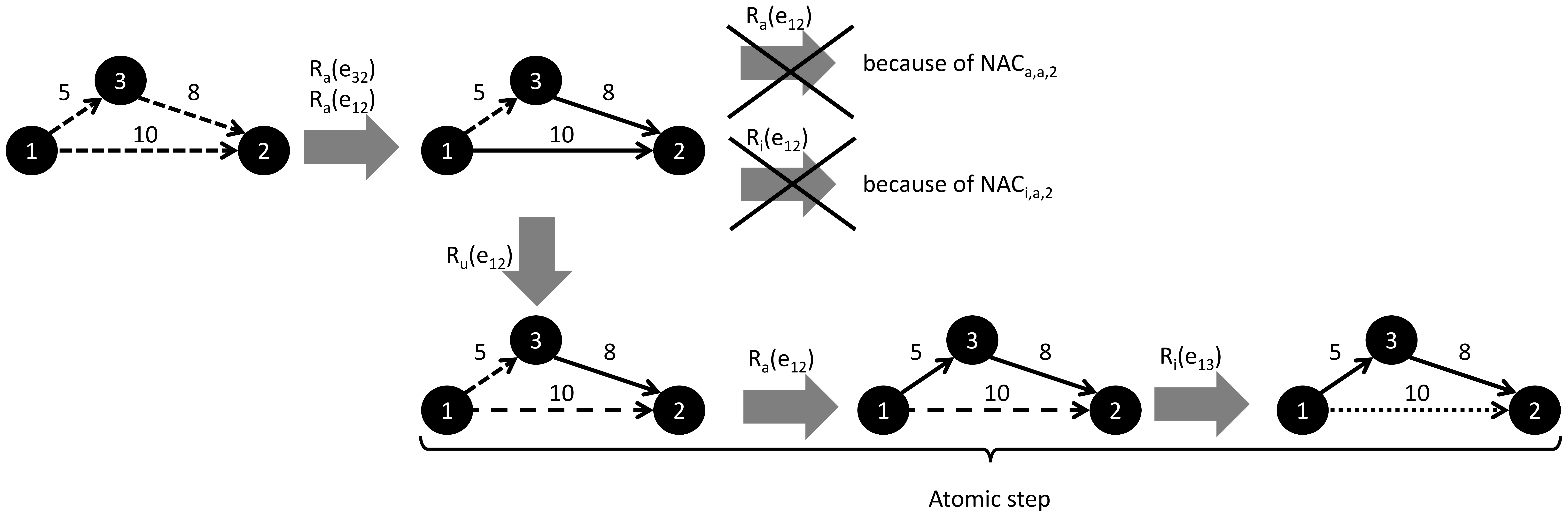}
        \caption{Example of non-terminating execution of \ktc with $k=2$ and possible solution}
        \label{fig:example-need-for-unclassification}
    \end{center}
\end{figure}

In fact, the solution for the example can be generalized, again using handler operations.
We systematically transform the shared negative application conditions of the \activationRuleLong and the \inactivationRuleLong into an appropriate \NAC{}-handling operation \handlerOperation{$\text{NAC}_{\text{aa2,ia2,aa8,ia8}}$}, which destroys all matches of the premises of the aforementioned four \NACs (\Cref{fig:algorithm-final}).
Contrary to the \CE handler operations, we decided to place the \NAC{}-handling operation in front of \activationRule and \inactivationRule because, otherwise, we would have to add invocations of the \NAC{}-handling operation in both \guardSuccess-branches.
\begin{figure}
    \begin{center}
        \newcommand{\subfigWidth}{\textwidth}
        \begin{subfigure}[t]{\subfigWidth}
            \includegraphics[width=.99\textwidth]{./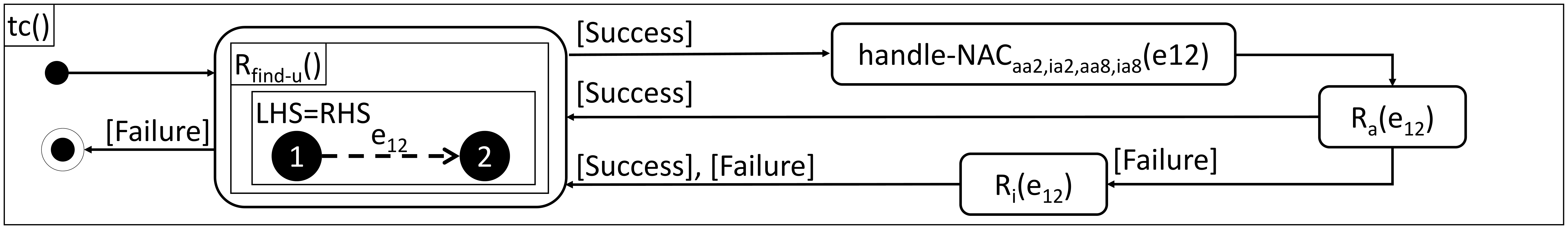}
            \caption{Control flow of \TC algorithm after refinement}
            \label{fig:TCAlgorithmFinal}
        \end{subfigure}
        
        \begin{subfigure}[t]{.45\subfigWidth}
            \includegraphics[width=.99\textwidth]{./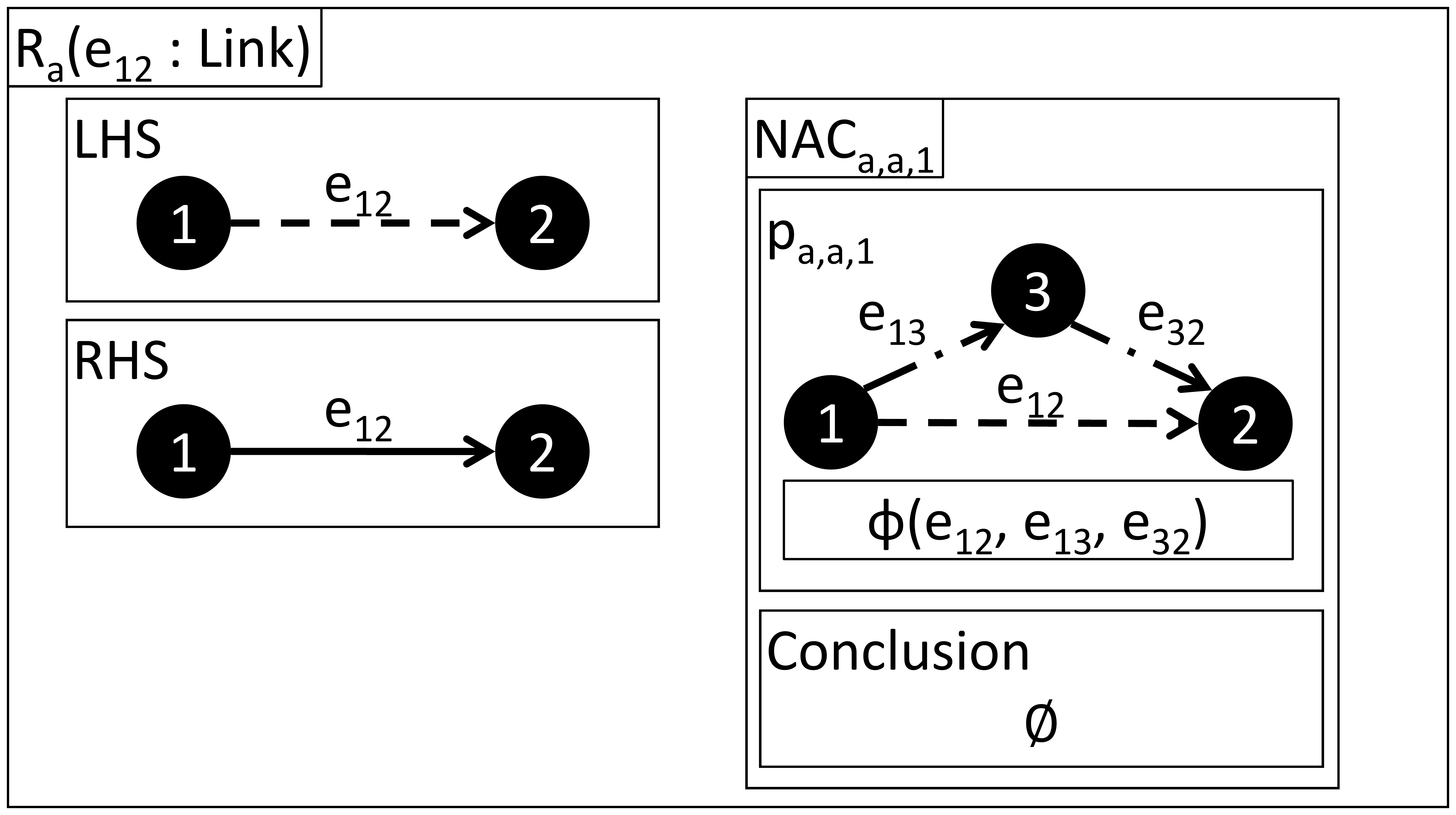}
            \caption{Final \activationRuleLong}
            \label{fig:ActivationRuleFinal}
        \end{subfigure}
        \hspace{3em}
        \begin{subfigure}[t]{.45\subfigWidth}
            \includegraphics[width=.99\textwidth]{./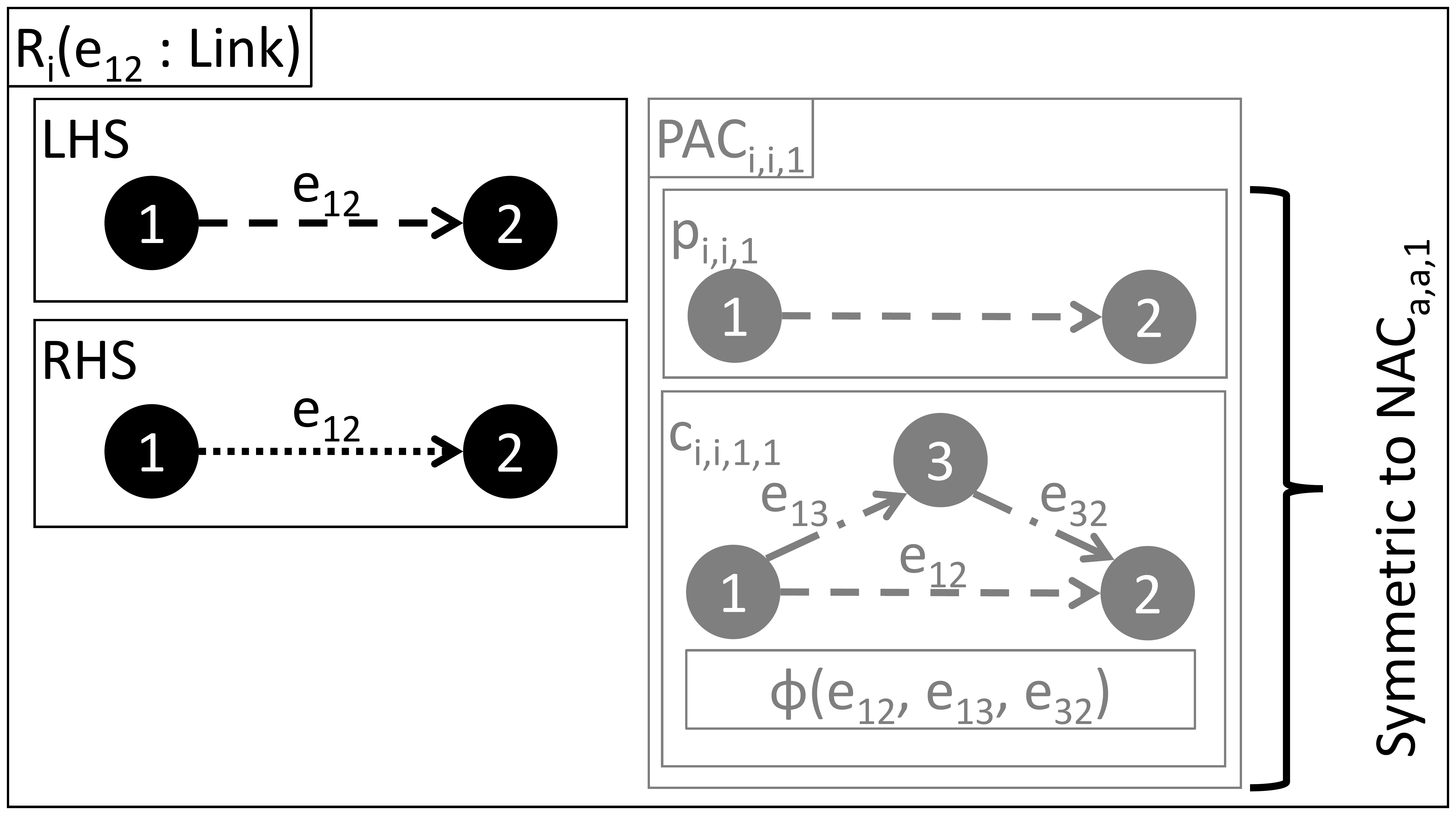}
            \caption{Final \inactivationRuleLong}
            \label{fig:InactivationRuleFinal}
        \end{subfigure}
        
        \begin{subfigure}[t]{\subfigWidth}
            \includegraphics[width=.99\textwidth]{./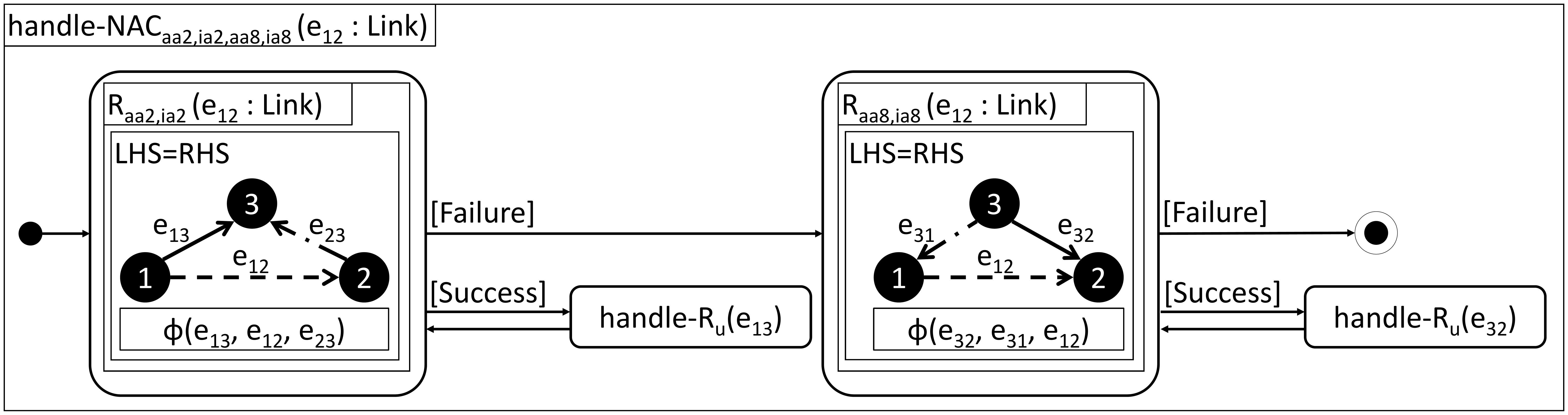}
            \caption{Final \inactivationRuleLong}
            \label{fig:NACHandlingOperationFinal}
        \end{subfigure}
        \caption{Final specification of the \TC algorithm}
        \label{fig:algorithm-final}
    \end{center}
\end{figure}

Inside the \NAC{}-handling operation, the first loop identifies all matches of the premise of \NACaa{2} and \NACia{2} and unclassifies the link \linkVariableOneThree, which is the largest link in the triangle \wrt \linkOrder{A}.
Similarly, the second loop identifies all matches of the premises of \NACaa{8} and \NACia{8} and unclassifies \linkVariableThreeTwo.
Finally, we drop \NACaa{2} and \NACaa{8} from \activationRule and \NACia{2} and \NACia{8} from \inactivationRule.

\paragraph{Proving termination}
Prior to introducing the \NAC{}-handling operation, the number of iterations of the main loop of the \TC algorithm was limited by the initial number of unclassified links.
Now, additional links may become unclassified in each iteration, which requires us to prove that the current \TC algorithm terminates.
\begin{theorem}
    The \TC algorithm $A$ with \NAC{}-handling operation (\Cref{fig:algorithm-final}) terminates for any input topology.
\end{theorem}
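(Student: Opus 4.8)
The plan is to reduce the termination claim to two independent facts: (a)~every handler operation invoked by the refined \TC algorithm terminates, and (b)~the main loop of the refined algorithm --- repeatedly: find an unclassified link, run the \NAC{}-handling operation, then apply \activationRule or \inactivationRule --- performs only finitely many iterations. Termination of the whole algorithm then follows, since (a)~bounds the work inside each iteration and (b)~bounds the number of iterations. The global fact that makes everything go through is that the \TC and \CE rules used inside the algorithm only modify link \emph{states}; node and link attribute values are never changed during the planning step. Hence the link order \linkOrder{A} of \Cref{tab:algorithm-specific-link-order} --- the same order that already underlies the connectivity proof in \Cref{thm:strong-connectivity-follows-from-strong-consistency-and-weak-connectivity} --- is a \emph{fixed} strict total order on the finite link set $E$, and every algorithm-specific predicate evaluates to a fixed truth value on each triple of links for the entire run.

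For~(a), every handler operation (\cf the schema in \Cref{fig:context-event-handler-sketch}) is a finite sequence of \texttt{foreach} loops whose bodies identify a constraint violation and then resolve it by unclassifying a single link. Unclassifying a link can only \emph{destroy} matches of a violation-identifying pattern, never create one, because such patterns require a triangle of \emph{classified} links; moreover, each resolution step turns a classified link into an unclassified one, so each loop executes at most $|E|$ resolution steps, and any recursion (for instance \handleUnclassification calling itself) descends along a strictly decreasing number of classified links and therefore bottoms out. Thus every handler invocation terminates.

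For~(b), I would first record --- using the correctness of the refinement step --- that after the \NAC{}-handling operation has run, the found link can always be classified in the same iteration: it can be activated unless it would be the maximal link (\wrt \linkOrder{A}) of a triangle of classified links for which the algorithm-specific predicate holds, and in that case \NACaa{1} blocks \activationRule while \PACii{1} admits \inactivationRule (the simultaneous blocking of \emph{both} rules that used to cause non-termination was precisely the effect of the shared \NACs \NACaa{2}/\NACia{2} and \NACaa{8}/\NACia{8}, which are now removed). So each iteration classifies at least one link. The real obstacle is that the same iteration's \NAC{}-handling operation may \emph{unclassify} further links, so ``number of unclassified links'' is not a valid termination measure. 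The fix is a well-founded measure built on \linkOrder{A}: the \NAC{}-handling operation only ever unclassifies a link that is maximal (\wrt \linkOrder{A}) in a triangle whose other two links are classified, so whenever a link becomes ``unstable'' --- unclassified, or classified while being maximal in a violating triangle --- this is triggered by the classification of some link that is strictly smaller in \linkOrder{A}. I would then prove, by induction along the finite chain $e_1 \linkOrder{A} e_2 \linkOrder{A} \cdots \linkOrder{A} e_n$ of all links ($n=|E|$, processed from its smallest element upward), that each link eventually reaches a state it never leaves again: a link can be reset (classified to unclassified) only by the \NAC{}-handling operation or by a downstream unclassification handler, and in both cases the reset is caused by a triangle whose other two links are smaller in \linkOrder{A} and hence, by the induction hypothesis, eventually frozen; after they are frozen the link's triangles no longer change, so it is reset only finitely often and then retains its final state. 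Once every link is frozen, no link can be frozen in the unclassified state --- such a link would eventually be the only unclassified link, get picked, and be classified --- so no unclassified link remains and the loop exits. Packaging this induction as a single lexicographic (or multiset) measure over $(E,\linkOrder{A})$ that strictly decreases per iteration, and verifying the decrease against every combination of which of \activationRule, \inactivationRule fires, which links the \NAC{}-handling operation unclassifies, and what the unclassification handlers cascade, is the bulk of the technical work and the step I expect to be hardest; the rest is bookkeeping.
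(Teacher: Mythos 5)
Your proof is correct and follows essentially the same route as the paper's: the decisive observation in both is that the \NAC{}-handling operation only unclassifies links that are strictly \emph{larger} (\wrt \linkOrder{A}) than the link currently being bound and classified, so that termination reduces to a well-founded argument over $(E, \linkOrder{A})$. The paper carries out exactly the ``packaging'' you anticipate at the end: it orders the vector of link states by \linkOrder{A} and shows that each iteration of the main loop strictly increases it in a lexicographic-style order (the bound link flips from unclassified to classified while all \linkOrder{A}-smaller links are untouched), which is a more compact rendering of your per-link freezing induction.
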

\begin{proof}
    \newcommand{\linkStateSequenceComp}{\ensuremath{\sqsubset}}
    Let $E$ be the link set of the processed topology.
    Let $A$ be the considered \TC algorithm, represented by the predicate \tcPredicate{A}{}.
    We consider the sequence of all link states $s_x(e_1),\hdots,s_x(e_m)$ with $m:=|E|$ after the $x$-th execution of the \findUnclassifiedLinkRuleLong, where the links are ordered according to \linkOrder{A}.
    We compare two sequences of link states, $s_{x}$ and $s_{x'}$, as follows:
    $s_{x} \linkStateSequenceComp s_{x'}$ if and only if
    \begin{inparaenum}[(i)]
        \item some link $e_{y}$ is unclassified in $s_{x}$ and classified in $s_{x'}$, and 
        \item the states of all links $e_{y'}$ with $e_{y'} \linkOrder{A} e_{y}$ are identical in $s_{x}$ and $s_{x'}$:
    \end{inparaenum}
    \begin{align*}
        s_{x} \linkStateSequenceComp s_{x'} \;:\Leftrightarrow\; 
        &\exists y \in \naturalNumbers, y \leq m: s_{x}(e_{y}) = \UNCL \wedge s_{x'}(e_{y}) \in \{\ACT,\INACT\}\\
        &\wedge\; \forall y' \in \naturalNumbers, y' < y : s_{x}(e_{y'}) = s_{x'}(e_{y'})
    \end{align*}
    Any sequence of active and inactive links is an upper bound for \linkStateSequenceComp.
    
    We now show that $s_{x-1} \linkStateSequenceComp s_{x}$ for $x > 1$.
    Let $e_{y}$ be the link that is bound by applying the \findUnclassifiedLinkRuleLong.
    The \NAC{}-handling operation unclassifies links $e_{y'}$ with $e_{y'} \linkOrder{A} e_{y}$ and thus $y' < y$.
    The \activationRuleLong or the \inactivationRuleLong activate or inactivate $e_{y}$, respectively.
    
    Therefore, $s_{x-1} \linkStateSequenceComp s_x$ because
    \begin{inparaenum}
    \item the first $y{-}1$ elements of $s_{x-1}$ and $s_x$ are identical, and
    \item $s_{x-1}(e_{y}) = \UNCL$ and $s_x(e_{y}) \in \{\ACT, \INACT\}$.
    \end{inparaenum}
    The termination follows because, for a finite topology, any ordered sequence $s_1 \linkStateSequenceComp s_2 \linkStateSequenceComp \dots$ has finite length.
    
\end{proof}

In this section, we dropped two of three application conditions of both the \activationRuleLong and \inactivationRuleLong.
The remaining application conditions are \NACaa{1} for \activationRule and \PACii{1} for \inactivationRule.
\NACaa{1} and \PACii{1} are complementary in the sense that \PACii{1} is fulfilled if and only if \NACaa{1} is not fulfilled for a link \linkVariableOneTwo.
This means that we may also drop \PACii{1} from \inactivationRule without risking to compromise weak consistency.
Removing \PACii{1} is sensible because the resulting \inactivationRuleLong is more efficient to evaluate and, arguably, easier to understand.

\section{Evaluation}\label{sec:evaluation}

In this section, we present a comparative simulation-based evaluation study that serves as a proof-of-concept for our integration of the \GT tool \eMoflon~\cite{LAS14} and the network simulator \Simonstrator~\cite{RSRS15}.
In \Cref{sec:eval-rqs}, we explain the research questions of this evaluation.
In \Cref{sec:eval-setup}, we describe the technical setup and the configuration parameters.
In \Cref{sec:eval-metrics}, we describe the cost and utility metrics to answer the research questions.
In \Cref{sec:eval-results}, we present the measurement results, discuss them, and answer the research questions.
In \Cref{sec:eval-threats}, we discuss threats to the validity of our findings.

\subsection{Research Questions}
\label{sec:eval-rqs}

Our goal is to investigate the benefits of the proposed approach for specifying families of \TC algorithms by composing their corresponding predicates.
For the sake of conciseness, we investigate the \TC algorithms \ktc and \ektc and the effect of combining each of the algorithms with the \minWeightPredicateLong{}.

\paragraph{\RQekTCLong}
The first research question addresses the performance of \ektc compared to \ktc.
As described in \Cref{sec:ektc}, the major objective of \ektc is to extend the lifetime of the topology, which is achieved by establishing a fairer distribution of the per-node energy consumption.
Our first research question is:
Does \ektc improve the network lifetime compared to \ktc?

\paragraph{\RQMinWeightPredicateLong}
The second research question addresses the performance of applying the \minWeightPredicateLong{}.
A motivation for introducing \minWeightPredicate{} in \Cref{sec:min-weight-predicate} was to reduce the cost of a \TC algorithm in terms of memory and runtime while preserving the quality of the output topology.
Our second research question is:
How does \weightThreshold influence the network lifetime and the resource consumption of \ktc and \ektc?

\subsection{Evaluation Setup}

\label{sec:eval-setup}
Following best practices for simulation studies \cite{Kurkowski2005,Hiranandani2013}, we rigorously document the evaluation setup to foster reproducibility of our results.
The technical platforms of this evaluation are the \GT tool \eMoflon \cite{LAS14} and the network evaluation platform \Simonstrator~\cite{RSRS15} with its contained network simulator \PFS~\cite{SGRNKS11}, as shown in \Cref{fig:evaluation-setup} and as described in detail in the following.
A SHARE virtual machine~\cite{GM11} is available for exploring our tool integration.%
\footnote{See \url{https://github.com/Echtzeitsysteme/CorrectByConstructionTCFamilies-SoSyM17}}
\begin{figure}
    \begin{center}
        \includegraphics[width=\textwidth]{./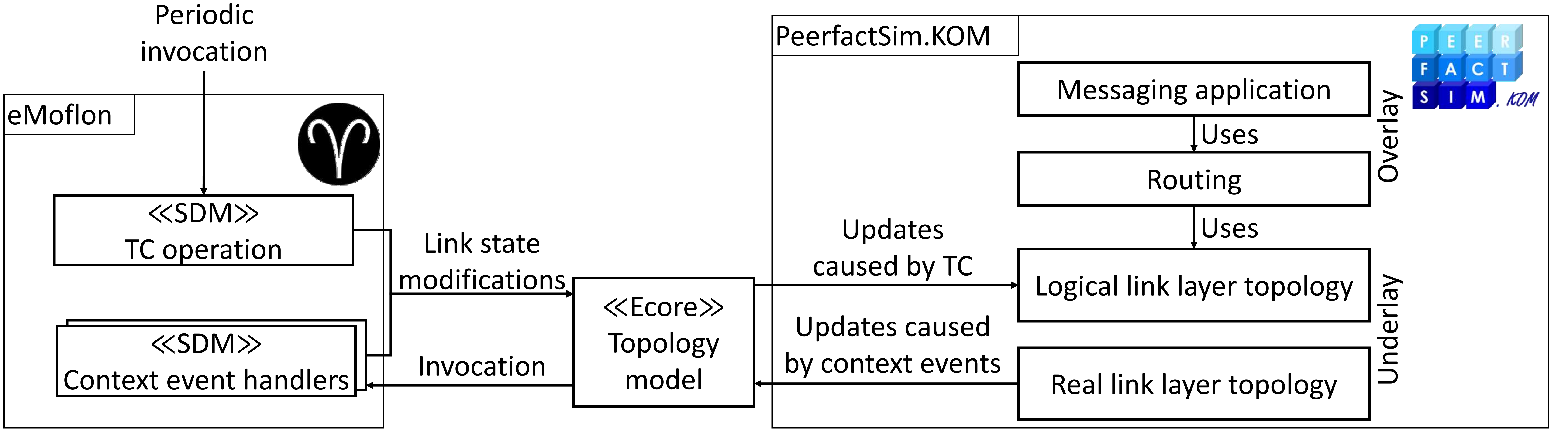}
        \caption{Overview of the evaluation setup}
        \label{fig:evaluation-setup}
    \end{center}
\end{figure}

\paragraph{\GT tool}
All \TC algorithms have been specified using a visual syntax of story diagrams in \eMoflon~\cite{LAS14}.
The specification is used to generate EMF-compliant Java code, which builds on an Ecore-based~\cite{SBMP08} topology metamodel.
Whenever the topology model is modified by a \CE in the simulator, the corresponding handler operation is triggered.
The \TC algorithm periodically updates the topology model via \LSMs.

\paragraph{Network simulator}
\PFS~\cite{SGRNKS11} is a time-discrete network simulator that allows to simulate protocols on the underlay (\idest, the physical and link layer~\cite{Zimmermann1980}) as well as on the overlay (\idest, network, transport and application layer~\cite{Zimmermann1980}).

\paragraph{Node configuration}
The nodes are initially placed uniformly at random onto the simulation area.
Each node moves according to the Gauss-Markov movement model~\cite{Camp2002}.
As link weight, we use the Euclidean distance of its end nodes.
A calibration curve estimates the required transmission power of each link based on its weight.
To create the calibration curve, we used a topology consisting of a sender and a receiver node and set up a simple data transmission application that transfers data from the sender to the receiver at a rate of \SI{1}{\mega\byte\per\second}.

\paragraph{Underlay}
We implemented a generic topology control framework inside the simulator.
This framework presents a \emph{logical link layer topology} to overlay applications, which consists of all active links in the \emph{real link layer topology}.
The real link layer topology is updated by the movement model and the energy model, which describes the battery depletion of each node.
In this evaluation, \TC is invoked periodically every \evalParTCInterval of simulated time.
The set of \CEs is collected during this period and handled before invoking \TC{}.
A \emph{\TC run} consists of the \CE handling and subsequent \TC invocation.
The parameter \ktcParameterK of \ktc and \ektc is set to \evalParK.
This is a typical value~\cite{Stein2016b,SKSVSM15,KSSVMS14,SWBM12} because it is the largest value for $k$ that allows to prove certain advanced properties (such as $\theta$-separation~\cite{SWBM12}).
The \expectedRemainingLifetimeLong{}{\linkVariableab} of each link \linkVariableab is calculated by dividing the energy level \energy{\nodeVariablea}{} of \nodeVariablea by the expected transmission power \expectedPower{}{\linkVariableab} of \linkVariableab.

\paragraph{Overlay}
On the overlay, we run a messaging application, which emulates sporadic message exchange between random pairs of nodes.
Every \evalParMsgInterval, each node sends a message of size \evalParMsgSize to another node that is selected randomly.
We apply a routing algorithm that builds routing paths based on global knowledge.

\paragraph{Summary of parameters}
\Cref{tab:evaluation-parameters-overview} summarizes the parameters of the simulation setting that are fixed for all configurations.
We run each simulation in \emph{terminating mode}, \idest, we abort the simulation after \evalParSimulationDuration, regardless of the remaining number of alive nodes.
This means that, in each simulation run, \TC is performed $\frac{\evalParSimulationDuration}{\evalParTCInterval} =  \evalParMeasurementsPerRun$ times.
All experiments were performed on a 64-bit machine with an Intel i7-2600~CPU (\SI{2}{cores}, \SI{3.7}{\giga\hertz}) and \SI{8}{\giga\byte} of RAM running Windows 7 Professional.
\begin{table}
\begin{minipage}{\textwidth}
\begin{center}        
    \caption{Parameters of the evaluation}
    \label{tab:evaluation-parameters-overview}
    \begin{tabular}{ll}
        \toprule
        \textbf{Parameter} & \textbf{Value}\\
        \midrule
        \multicolumn{2}{l}{\textbf{Simulator}}\\
        \midrule
        Network simulator & \Simonstrator/\PFS 2.4~\cite{RSRS15,SGRNKS11}\\
        Graph transformation tool & \eMoflon 2.16.0~\cite{LAS14}\\
        Runs per configuration & \evalParSeedCount\\
        Seeds & 1\dots\evalParSeedCount\\
        Simulation type & Terminating (after \evalParSimulationDuration of simulated time)\\
        Measurements per run & \evalParMeasurementsPerRun\\
        Code availability & As SHARE~\cite{GM11} virtual machine\footnote{See \url{https://github.com/Echtzeitsysteme/CorrectByConstructionTCFamilies-SoSyM17}}\\
        World size & \emph{Variable}  (see \Cref{tab:evaluation-configurations-overview})\\
        \midrule
        \multicolumn{2}{l}{\textbf{Node Configuration}}\\
        \midrule
        Node count& \evalParNodeCount \\
        Transmission radius & \evalParTransmissionRange\\
        Movement model & Gauss-Markov movement~\cite{Camp2002} ($\alpha = 0.2$, $v = \SI{0.005}{\meter\per\second}$)\\
        Placement model (initial) & Uniform at random\\       
        Battery level (initial) & Uniform at random betw. \SI{30}{\percent} and \SI{100}{\percent} of \evalParBatterySource\\
        \midrule
        \multicolumn{2}{l}{\textbf{Underlay}}\\
        \midrule
        Link count (initial) & \emph{Variable} (see \Cref{tab:evaluation-configurations-overview})\\
        Link layer & IEEE 802.11 Ad-Hoc\\
        \TC algorithm & \ktc, \ektc\\
        \weight{\linkVariableab} & Euclidean distance between \nodeVariablea and \nodeVariableb\\
        \expectedRemainingLifetime{1}{\linkVariableab} & Based on calibration curve\\
        \ktcParameterK parameter of \ktc & \evalParK\\
        Interval of \TC runs& \evalParTCInterval\\
        \TC runs per simulation run & \evalParMeasurementsPerRun\\
        \midrule
        \multicolumn{2}{l}{\textbf{Overlay}}\\
        \midrule
        Application & Messaging application\\
        Communication pattern & Many-to-many\\
        Message size & \evalParMsgSize\\
        Messaging interval & \evalParMsgInterval\\
        \bottomrule
    \end{tabular}
\end{center}
\end{minipage}
\end{table}

To obtain a set of representative \emph{(topology) configurations}, we vary the side length of the quadratic simulation area.
In total, we investigate \evalParNumberOfConfigurations different evaluation configurations, summarized in \Cref{tab:evaluation-configurations-overview}, which correspond to world sizes $\metricWorldSize \in \{\SI{750}{\meter}, \SI{500}{\meter}\}$.
Each configuration is simulated \evalParSeedCount times to obtain representative results.
In the table, the configurations are ordered from the sparsest (\cfgSmallSparse) to the densest topology (\cfgSmallMedium).
\begin{table}
    \begin{center}
        \caption{Varying parameters in the two topology configurations (\textbf{$n$}: initial node count,
            \textbf{$m$}: initial link count in real link layer, 
            \textbf{\metricWorldSize}: side length of quadratic simulation area,
            \textbf{\metricOutdegree}: average out-degree,
            values of $m$, $h$, and \metricOutdegree are averaged over \evalParSeedCount repetitions
            )}
        \label{tab:evaluation-configurations-overview}
        \begin{tabular}{crrrrr}
            \toprule
            \multicolumn{1}{c}{\textbf{ID}} & 
            \multicolumn{1}{c}{\textbf{$n$}} & 
            \multicolumn{1}{c}{\textbf{\metricWorldSize [m]}} & 
            \multicolumn{1}{c}{$m$} & 
            \multicolumn{1}{c}{\textbf{\metricOutdegree}}
            \\
            \midrule
            \cfgSmallSparse&100&750&\numprint{812}&7.7\\
            \cfgSmallMedium&100&500&\numprint{1651}&16.0\\
            \bottomrule
        \end{tabular}
    \end{center}
\end{table}
The node (out-)degree \metricOutdegree is an indicator of the density of the topology:
A high node degree indicates that a topology is dense, while a low node degree indicates a sparse topology.

\subsection{Metrics}
\label{sec:eval-metrics}

The major goal of this evaluation is to assess cost and utility metrics of the output topology of a \TC algorithm.
With \GwThresh, we denote the output topology of the currently active \TC algorithm, whose input topology is obtained by removing all links with a weight smaller than \weightThreshold from the physical topology.

As utility metric, we consider the network lifetime (see also \Cref{sec:ektc}).
We record for each simulation run the \remainingLifetimeLong{1}{}, the \remainingLifetimePctLong{50}{}, and \remainingLifetimePctLong{100}{}.
These values mark the starting point, the approximate middle, and the end of the degradation of the network.

As cost metric, we assess the required memory in terms of the size of the maintained input topology, and the required runtime in terms of the required CPU execution time as well as the number of \LSMs for performing the \TC algorithm.
The size \topologySize{\GwThresh} of the topology $\GwThresh$ is the sum of its node and edge counts.
The \emph{execution time  \executionTime{\GwThresh}} is the real time (in contrast to the simulated time) that it takes to perform a \TC run on \GwThresh.
The \emph{\LSM count \lsmCount{\GwThresh}} is determined analogously by counting the \LSMs during a \TC run on \GwThresh.

We compare results along two dimensions:
Either we consider a fixed configuration (\idest, \cfgSmallMedium or \cfgSmallSparse) and compare the performance of the \TC algorithms, or we fix the world size, node count,  \TC algorithm and \ktcParameterK-value and evaluate the influence of varying \weightThreshold.
For each dimension, we introduce relative metrics \relativeMetric{M}{} for each absolute metric $M$.
When comparing a \TC algorithm $A$ with the baseline \MaxpowerTC algorithm, \relativeMetric{M}{} is defined as follows:
\begin{align*}
&\relativeRemainingLifetime{1}{A} = \frac{\remainingLifetime{1}{A}}{\remainingLifetime{1}{\text{\MaxpowerTC}}},
&&\relativeRemainingLifetimePct{50}{A} = \frac{\remainingLifetimePct{50}{A}}{\remainingLifetimePct{50}{\text{\MaxpowerTC}}},\\
&\relativeRemainingLifetimePct{100}{A} = \frac{\remainingLifetimePct{100}{A}}{\remainingLifetimePct{100}{\text{\MaxpowerTC}}},
&&\relativeTopologySize{A} = \frac{\topologySize{A}}{\topologySize{\text{\MaxpowerTC}}},\\
&\relativeExecutionTime{A} = \frac{\executionTime{A}}{\executionTime{\text{\MaxpowerTC}}},
&&\relativeLsmCount{A} = \frac{\lsmCount{A}}{\lsmCount{\text{\MaxpowerTC}}}.\\
\end{align*}
When evaluating the influence of varying \weightThreshold, \relativeMetric{M}{} is defined as follows:
\begin{align*}
&\relativeRemainingLifetime{1}{\GwThresh} = \frac{\remainingLifetime{1}{\GwThresh}}{\remainingLifetime{1}{G_0}},
&&\relativeRemainingLifetimePct{50}{\GwThresh} = \frac{\remainingLifetimePct{50}{\GwThresh}}{\remainingLifetimePct{50}{G_0}},\\
&\relativeRemainingLifetimePct{100}{\GwThresh} = \frac{\remainingLifetimePct{100}{\GwThresh}}{\remainingLifetimePct{100}{G_0}},
&&\relativeTopologySize{\GwThresh} = \frac{\topologySize{\GwThresh}}{\topologySize{G_0}},\\
&\relativeExecutionTime{\GwThresh} = \frac{\executionTime{\GwThresh}}{\executionTime{G_0}},
&&\relativeLsmCount{\GwThresh} = \frac{\lsmCount{\GwThresh}}{\lsmCount{G_0}}.\\
\end{align*}
Setting \weightThreshold to 0 is equivalent to disabling the minimum-weight optimization.

\subsection{Results and Discussion}
\label{sec:eval-results}

The plots in \Cref{fig:eval-alive-nodes-ws0500-nc0099,fig:eval-alive-nodes-ws0750-nc0099} show the development of the number of alive nodes for the dense topology (\Cref{fig:eval-alive-nodes-ws0500-nc0099}) and the sparse topology (\Cref{fig:eval-alive-nodes-ws0750-nc0099}) for different minimum-weight thresholds $\weightThreshold \in \{0, 20, 40, 60, 80\}$.
In each plot, the simulation runs of \MaxpowerTC, \ktc, and \ektc are shown.
\foreach \worldSize/\nodeCount/\printWS/\printNodeCount in {0500/0099/500/100, 0750/0099/750/100}{
    \foreach \k/\printK in {1.410/1.41}{ %1.000
        \begin{figure}
        \begin{center}
        \newcommand{\subfigWidth}{\textwidth}
        
        \foreach \x in {0.0, 20.0, 40.0, 60.0, 80.0} 
        {
            \begin{subfigure}[t]{\subfigWidth}
                \includegraphics[width=.99\textwidth]{\rootOfAliveNodePlots/\aliveNodesSuffix{\worldSize}{\nodeCount}{\k}{\x}.pdf}
                \caption{$\weightThreshold = \SI{\x}{\meter}$}
            \end{subfigure}
        }
        
        \caption{Alive nodes over simulation time (World size: \SI{\printWS}{\meter}, node count: \printNodeCount, \ktcParameterK: \printK)}
        \label{fig:eval-alive-nodes-ws\worldSize-nc\nodeCount}
        
        \end{center}
        \end{figure}
    } % iterate over k-values
} % iterate over configurations

While the plots provide a rather qualitative view of the performance of the \TC algorithms,  \Cref{tab:eval-chart-ws500-nc99-k1.41,tab:eval-chart-ws750-nc99-k1.41} 
allow for a fine-grained analysis of the cost and utility metrics, as introduced in \Cref{sec:eval-metrics}.
We provide mean values of all \evalParMeasurementsPerRun \TC runs per simulation for the topology size (\meanTopologySize{}), the execution time (\meanExecutionTime{}), and the \LSM count (\meanLsmCount{}).
\begin{table}[ht]
\centering
\caption{Algorithm performance for different weight thresholds \weightThreshold (World size: \SI{500}{\meter}, node count: 100, $k$: 1.41).
Lifetime values (\remainingLifetime{1}{}, \remainingLifetime{\SI{50}{\percent}}{}, \remainingLifetime{\SI{100}{\percent}}{}) are in simulated minutes. 
Execution time \meanExecutionTime{} is in CPU milliseconds.
\relativeMetric{M}{}: relative metrix $M$.
The best values for each combination of algorithm, \weightThreshold, and metric $M$ are highlighted in bold font.
} 
\label{tab:eval-chart-ws500-nc99-k1.41}
\begin{tabular}{cr|p{1.3cm}p{0.9cm}p{1.3cm}p{0.9cm}p{1.4cm}p{0.9cm}}
\toprule
\multicolumn{1}{c}{\textbf{Algo.}} & \multicolumn{1}{c|}{\textbf{\weightThreshold [m]}} & \multicolumn{1}{c}{\textbf{\remainingLifetime{1}{} [min]}} & \multicolumn{1}{c}{\textbf{\relativeRemainingLifetime{1}{}}} & \multicolumn{1}{c}{\textbf{\remainingLifetimePct{50}{} [min]}} & \multicolumn{1}{c}{\textbf{\relativeRemainingLifetimePct{50}{}}} & \multicolumn{1}{c}{\textbf{\remainingLifetimePct{100}{} [min]}} & \multicolumn{1}{c}{\textbf{\relativeRemainingLifetimePct{100}{}}}
\\
\midrule
Maxp. & 0 & 564.0 & 1.00 & 700.0 & 1.00 & 854.0 & 1.00 \\ 
   \midrule
  kTC & 0 & 544.0 & 1.00 & 664.0 & 1.00 & 862.0 & 1.00 \\ 
  kTC & 20 & 544.0 & 1.00 & 664.0 & 1.00 & 864.0 & 1.00 \\ 
  kTC & 40 & 570.0 & 1.05 & 684.0 & 1.03 & \bstVal{868.0} & \bstVal{1.01} \\ 
  kTC & 60 & \bstVal{580.0} & \bstVal{1.07} & 698.0 & 1.05 & 862.0 & 1.00 \\ 
  kTC & 80 & 564.0 & 1.04 & \bstVal{702.0} & \bstVal{1.06} & 860.0 & 1.00 \\ 
   \midrule
  e-kTC & 0 & \bstVal{624.0} & \bstVal{1.00} & \bstVal{776.0} & \bstVal{1.00} & \bstVal{906.0} & \bstVal{1.00} \\ 
  e-kTC & 20 & \bstVal{624.0} & \bstVal{1.00} & 770.0 & 0.99 & 902.0 & 1.00 \\ 
  e-kTC & 40 & 616.0 & 0.99 & 752.0 & 0.97 & \bstVal{906.0} & \bstVal{1.00} \\ 
  e-kTC & 60 & 600.0 & 0.96 & 726.0 & 0.94 & 882.0 & 0.97 \\ 
  e-kTC & 80 & 590.0 & 0.95 & 712.0 & 0.92 & 864.0 & 0.95 \\ 
   \bottomrule
\end{tabular}

\vspace{1ex}\nopagebreak
\begin{tabular}{cr|p{1.3cm}p{0.9cm}p{1.3cm}p{0.9cm}p{1.4cm}p{0.9cm}}
  \toprule
\multicolumn{1}{c}{\textbf{Algo.}} & \multicolumn{1}{c|}{\textbf{\weightThreshold [m]}} & \multicolumn{1}{c}{\textbf{\meanTopologySize{}}} & \multicolumn{1}{c}{\textbf{\relativeTopologySize{}}} & \multicolumn{1}{c}{\textbf{\meanExecutionTime{} [ms]}} & \multicolumn{1}{c}{\textbf{\relativeExecutionTime{}}} & \multicolumn{1}{c}{\textbf{\meanLsmCount{}}} & \multicolumn{1}{c}{\textbf{\relativeLsmCount{}}} \\ 
  \midrule
Maxp. & 0 & 822.3 & 1.00 & 19.8 & 1.00 & 18.1 & 1.00 \\ 
   \midrule
kTC & 0 & 774.7 & 1.00 & 253.4 & 1.00 & 1340.3 & 1.00 \\ 
  kTC & 20 & 775.0 & 1.00 & 227.6 & 0.90 & 1252.2 & 0.93 \\ 
  kTC & 40 & 768.7 & 0.99 & 147.8 & 0.58 & \hide{0}998.7 & 0.75 \\ 
  kTC & 60 & 712.7 & 0.92 & \hide{0}83.1 & 0.33 & \hide{0}728.0 & 0.54 \\ 
  kTC & 80 & \bstVal{603.2} & \bstVal{0.78} & \bstVal{\hide{0}49.9} & \bstVal{0.20} & \bstVal{\hide{0}504.7} & \bstVal{0.38} \\ 
   \midrule
e-kTC & 0 & 927.4 & 1.00 & 240.7 & 1.00 & 1303.1 & 1.00 \\ 
  e-kTC & 20 & 912.9 & 0.98 & 211.6 & 0.88 & 1218.0 & 0.93 \\ 
  e-kTC & 40 & 853.5 & 0.92 & 151.8 & 0.63 & 1031.2 & 0.79 \\ 
  e-kTC & 60 & 747.3 & 0.81 & 115.8 & 0.48 & \hide{0}791.2 & 0.61 \\ 
  e-kTC & 80 & \bstVal{622.2} & \bstVal{0.67} & \bstVal{\hide{0}64.0} & \bstVal{0.27} & \bstVal{\hide{0}555.2} & \bstVal{0.43} \\ 
   \bottomrule
\end{tabular}
\end{table}

\begin{table}[ht]
\centering
\caption{Algorithm performance for different weight thresholds \weightThreshold (world size: \SI{750}{\meter}, node count: 100, $k$: 1.41).
Lifetime values (\remainingLifetime{1}{}, \remainingLifetime{\SI{50}{\percent}}{}, \remainingLifetime{\SI{100}{\percent}}{}) are in simulated minutes. 
Execution time \meanExecutionTime{} is in CPU milliseconds.
\relativeMetric{M}{}: relative $M$ compared to same algorithm with $\weightThreshold = 0$.
The best values for each combination of algorithm, \weightThreshold, and metric $M$ are highlighted in bold font.
} 
\label{tab:eval-chart-ws750-nc99-k1.41}
\begin{tabular}{cr|p{1.3cm}p{0.9cm}p{1.3cm}p{0.9cm}p{1.4cm}p{0.9cm}}
\toprule
\multicolumn{1}{c}{\textbf{Algo.}} & \multicolumn{1}{c|}{\textbf{\weightThreshold [m]}} & \multicolumn{1}{c}{\textbf{\remainingLifetime{1}{} [min]}} & \multicolumn{1}{c}{\textbf{\relativeRemainingLifetime{1}{}}} & \multicolumn{1}{c}{\textbf{\remainingLifetimePct{50}{} [min]}} & \multicolumn{1}{c}{\textbf{\relativeRemainingLifetimePct{50}{}}} & \multicolumn{1}{c}{\textbf{\remainingLifetimePct{100}{} [min]}} & \multicolumn{1}{c}{\textbf{\relativeRemainingLifetimePct{100}{}}}
\\
\midrule
Maxp. & 0 & 552.0 & 1.00 & 668.0 & 1.00 & 884.0 & 1.00 \\ 
\midrule
kTC & 0 & 532.0 & 1.00 & 646.0 & 1.00 & 906.0 & 1.00 \\ 
  kTC & 20 & 532.0 & 1.00 & 650.0 & 1.01 & 942.0 & 1.04 \\ 
  kTC & 40 & 546.0 & 1.03 & 656.0 & 1.02 & 892.0 & 0.98 \\ 
  kTC & 60 & 554.0 & 1.04 & \bstVal{670.0} & \bstVal{1.04} & 894.0 & 0.99 \\ 
  kTC & 80 & \bstVal{556.0} & \bstVal{1.05} & \bstVal{670.0} & \bstVal{1.04} & \bstVal{956.0} & \bstVal{1.06} \\ 
\midrule
e-kTC & 0 & \bstVal{592.0} & \bstVal{1.00} & \bstVal{710.0} & \bstVal{1.00} & 958.0 & 1.00 \\ 
  e-kTC & 20 &\bstVal{592.0} & \bstVal{1.00} & 702.0 & 0.99 & \bstVal{966.0} & \bstVal{1.01} \\ 
  e-kTC & 40 & 590.0 & 1.00 & 696.0 & 0.98 & 902.0 & 0.94 \\ 
  e-kTC & 60 & 584.0 & 0.99 & 688.0 & 0.97 & 958.0 & 1.00 \\ 
  e-kTC & 80 & 574.0 & 0.97 & 684.0 & 0.96 & 944.0 & 0.99 \\ 
\bottomrule
\end{tabular}

\vspace{1ex}\nopagebreak
\begin{tabular}{cr|p{1.3cm}p{0.9cm}p{1.3cm}p{0.9cm}p{1.4cm}p{0.9cm}}
  \toprule
\multicolumn{1}{c}{\textbf{Algo.}} & \multicolumn{1}{c|}{\textbf{\weightThreshold [m]}} & \multicolumn{1}{c}{\textbf{\meanTopologySize{}}} & \multicolumn{1}{c}{\textbf{\relativeTopologySize{}}} & \multicolumn{1}{c}{\textbf{\meanExecutionTime{} [ms]}} & \multicolumn{1}{c}{\textbf{\relativeExecutionTime{}}} & \multicolumn{1}{c}{\textbf{\meanLsmCount{}}} & \multicolumn{1}{c}{\textbf{\relativeLsmCount{}}} \\ 
  \midrule
Maxp. & 0 & 412.9 & 1.00 & 8.7 & 1.00 & 9.0 & 1.00 \\ 
   \midrule
  kTC & 0 & 401.5 & 1.00 & 36.5 & 1.00 & 604.9 & 1.00 \\ 
  kTC & 20 & 399.9 & 1.00 & 34.3 & 0.94 & 567.2 & 0.94 \\ 
  kTC & 40 & 390.8 & 0.97 & 27.7 & 0.76 & 473.4 & 0.78 \\ 
  kTC & 60 & 366.4 & 0.91 & 22.0 & 0.60 & 361.3 & 0.60 \\ 
  kTC & 80 & \bstVal{319.7} & \bstVal{0.80} & \bstVal{17.6} & \bstVal{0.48} & \bstVal{245.3} & \bstVal{0.41} \\ 
   \midrule
  e-kTC & 0 & 443.6 & 1.00 & 32.8 & 1.00 & 563.0 & 1.00 \\ 
  e-kTC & 20 & 435.9 & 0.98 & 31.5 & 0.96 & 532.9 & 0.95 \\ 
  e-kTC & 40 & 418.7 & 0.94 & 30.7 & 0.93 & 470.1 & 0.83 \\ 
  e-kTC & 60 & 383.8 & 0.87 & 25.8 & 0.79 & 383.2 & 0.68 \\ 
  e-kTC & 80 & \bstVal{329.1} & \bstVal{0.74} & \bstVal{19.8} & \bstVal{0.60} & \bstVal{271.7} & \bstVal{0.48} \\ 
   \bottomrule
\end{tabular}
\end{table}

\paragraph{Network lifetime}
We observe that \ktc almost always performs worst and \ektc performs best of all three \TC algorithms \wrt the number of alive nodes.
More precisely, for a fixed value of \weightThreshold, the lifetime of $\GwThresh$ for \ktc is always shorter than for \ektc (see also \Cref{tab:eval-chart-ws500-nc99-k1.41,tab:eval-chart-ws750-nc99-k1.41}).
Also, the remaining lifetime of \ktc increases with increasing \weightThreshold.
For instance, all network lifetime values of the sparse topology are maximal when applying \ktc with $\weightThreshold = \SI{80}{\meter}$ (\remainingLifetime{1}{} = \SI{556}{\min} compared to \remainingLifetime{1}{} = \SI{532}{\min}, \SI{5}{\percent} improvement, \Cref{fig:eval-alive-nodes-ws0750-nc0099}).
A similar, yet not as strong, effect can be observed for \ektc.
For instance, in the sparse topology, \remainingLifetime{\SI{100}{\percent}}{} for $\weightThreshold = \SI{20}{\meter}$ is about \SI{1}{\percent} larger than \remainingLifetime{\SI{100}{\percent}}{} for $\weightThreshold = \SI{0}{\meter}$.
In the remaining cases, the network lifetime for \ektc is maximal for $\weightThreshold = \SI{0}{\meter}$.

\paragraph{Resource consumption}
We now focus on resulting topology size, real execution time, and required \LSM count.
These cost metrics evaluate the resource consumption of a \TC algorithm.
In contrast to the network lifetime, all of the cost metrics strictly decrease with \weightThreshold and reach their minimum for $\weightThreshold = \SI{80}{\meter}$.

For the dense topology (\cfgSmallMedium, \Cref{fig:eval-alive-nodes-ws0500-nc0099}), the size of the stored topology was reduced by up to \SI{22}{\percent} (for \ktc) and \SI{33}{\percent} (for \ektc).
The savings in execution time are even more drastic: \SI{80}{\percent} for \ktc and \SI{73}{\percent} for \ektc.
Finally, the savings \wrt \LSM count amount up to \SI{72} {\percent} for \ktc and \SI{57}{\percent} for \ektc.

For the sparser topology (\cfgSmallSparse, \Cref{fig:eval-alive-nodes-ws0750-nc0099}), the savings \wrt topology size are comparable (\SI{20}{\percent} for \ktc and \SI{26}{\percent} for \ektc).
The savings \wrt execution time (\SI{62}{\percent} for \ktc and \SI{40}{\percent} for \ektc) and \LSM count (\SI{59}{\percent} for \ktc and \SI{52}{\percent} for \ektc) are smaller.

\paragraph{Trade-off between cost and utility}
Surprisingly, the lifetime when applying \ktc was generally very good for $\weightThreshold = \SI{80}{\meter}$.
Therefore, we do not have to trade network lifetime against resource consumption in this situation.
Instead, we can optimize resource consumption and network lifetime at the same time by choosing high values of \weightThreshold.
This observation underpins the intuition behind the minimum-weight predicate that considering short links does probably not help increasing the performance of the network.
Our results even show that performing \ktc also on short links may be harmful to the network lifetime.
A possible explanation is that short links are probably helpful when used for direct message transfer (compared to a multi-hop forwarding of the same message).
Despite this positive result, we have to remark that \ktc is often not able to beat the baseline, \idest, \MaxpowerTC, \wrt network lifetime.

In contrast, \ektc shows a rather monotonic behavior:
In most cases, the network lifetime decreases with increasing \weightThreshold.
Here, we trade the reduced resource consumption against the shorter network lifetime.
In the most extreme case ($\weightThreshold = \SI{80}{\meter}$), the lifetime drops to values between \SI{92}{\percent} and \SI{95}{\percent} (for the dense topology) and between \SI{96}{\percent} and \SI{99}{\percent} (for the sparse topology) of its maximum value.
Choosing the best trade-off between network lifetime and resource consumption for \ektc is certainly a task that depends on the available resources of the selected (simulated) target platform.

\Cref{tab:network-lifetime-comparison} summarizes the best performance values of \MaxpowerTC, \ktc, and \ektc in \Cref{tab:eval-chart-ws500-nc99-k1.41,tab:eval-chart-ws750-nc99-k1.41}.
In this table, \relativeMetric{M}{} denotes the relative value of metric $M$ compared to the baseline \MaxpowerTC algorithm.
\ektc achieves network lifetime improvements between \pct{6.0} and \pct{10.9} for the dense topology (\cfgSmallMedium) and between \pct{6.3} and \pct{9.3} for the sparse topology (\cfgSmallSparse).
At the same time, \ktc only achieves improvements between \pct{0.3} and \pct{1.6} for the dense topology (\cfgSmallMedium) and between \pct{0.3} and \pct{8.1} for the sparse topology (\cfgSmallSparse).
The execution times of \ktc and \ektc grow by a factor of two to three compared to the execution time of \MaxpowerTC, and the number of required \LSMs even increases by a factor of 25 to 30.
This is not at all surprising because the logic behind \MaxpowerTC is simple compared to the rules of \ktc and \ektc.
Especially handling \CEs results in additional required \LSMs.
When comparing \ktc with \ektc \wrt cost metrics, \ktc outperforms \ektc.
Its best execution time is \pct{22.0} (for \cfgSmallMedium) and \pct{11.1} (for \cfgSmallSparse) lower than the best execution time of \ektc.
Similarly, \ktc outperforms \ektc by \pct{9.1} (for \cfgSmallMedium) and \pct{9.7} (for \cfgSmallSparse) \wrt the best achieved \LSM count.
However, the configuration that achieves the best execution time and \LSM count for \ektc perform worse than the best-performing configuration of \ktc \wrt network lifetime.
Therefore, an immediate comparison of \ktc and \ektc is not salient.
Instead, a trade-off between cost and utility is necessary also here.
\begin{table}
\begin{center}
\caption{
Performance of \ktc and \ektc compared to \MaxpowerTC
}
\label{tab:network-lifetime-comparison}
\begin{tabular}{cr|p{1.3cm}p{0.9cm}p{1.3cm}p{0.9cm}p{1.4cm}p{0.9cm}}
\toprule
\multicolumn{1}{c}{\textbf{Config.}} & 
\multicolumn{1}{c|}{\textbf{Algo.}} & \multicolumn{1}{c}{\textbf{\remainingLifetime{1}{} [min]}} & \multicolumn{1}{c}{\textbf{\relativeRemainingLifetime{1}{}}} & \multicolumn{1}{c}{\textbf{\remainingLifetimePct{50}{} [min]}} & \multicolumn{1}{c}{\textbf{\relativeRemainingLifetimePct{50}{}}} & \multicolumn{1}{c}{\textbf{\remainingLifetimePct{100}{} [min]}} & \multicolumn{1}{c}{\textbf{\relativeRemainingLifetimePct{100}{}}}
\\
\midrule
\cfgSmallMedium & Maxp. 	& 564.0 & -- &700.0 & -- &854.0 \\
\cfgSmallMedium & \ktc 			& 580.0 & +\SI{2.8}{\percent} & 702.0 & +\SI{0.3}{\percent} & 868.0 & +\SI{1.6}{\percent} \\
\cfgSmallMedium & \ektc 		& 624.0 & +\SI{10.6}{\percent} & 776.0 & +\SI{10.9}{\percent} & 906.0 & +\SI{6.0}{\percent} \\
\midrule
\cfgSmallSparse & Maxp. & 552.0 & -- & 668.0 & -- & 884.0 \\
\cfgSmallSparse & \ktc 			& 556.0 & +\SI{0.7}{\percent} & 670.0 & +\SI{0.3}{\percent} & 956.0 & +\SI{8.1}{\percent}\\
\cfgSmallSparse & \ektc 		& 592.0 & +\SI{7.2}{\percent} & 710.0 & +\SI{6.3}{\percent} & 966.0 & +\SI{9.3}{\percent}\\
\bottomrule
\end{tabular}

\nopagebreak\vspace{1ex}
\begin{tabular}{cr|p{1.3cm}p{0.9cm}p{1.3cm}p{0.9cm}p{1.4cm}p{0.9cm}}
\toprule
\multicolumn{1}{c}{\textbf{Config.}} & 
\multicolumn{1}{c|}{\textbf{Algo.}} & \multicolumn{1}{c}{\textbf{\meanTopologySize{}}} & \multicolumn{1}{c}{\textbf{\relativeTopologySize{}}} & \multicolumn{1}{c}{\textbf{\meanExecutionTime{} [ms]}} & \multicolumn{1}{c}{\textbf{\relativeExecutionTime{}}} & \multicolumn{1}{c}{\textbf{\meanLsmCount{}}} & \multicolumn{1}{c}{\textbf{\relativeLsmCount{}}}
\\
\midrule
\cfgSmallMedium & Maxp. & 822.3 & -- &19.8 & -- &\hide{0}18.1 \\
\cfgSmallMedium & \ktc 	& 603.2 & \pct{-26.6} & 49.9 & +\pct{152} & 504.7 & +\pct{2688}\\
\cfgSmallMedium & \ektc & 622.2 & \pct{-24.3} & 64.0 & +\pct{223} & 555.2 & +\pct{2967} \\
\midrule
\cfgSmallSparse & Maxp. & 412.9 & -- & \hide{0}8.7 & -- & \hide{00}9.0 \\
\cfgSmallSparse & \ktc 	& 319.7 & \pct{-22.6} & 17.6 & +\pct{+102} & 245.3 & +\pct{2626}\\
\cfgSmallSparse & \ektc & 329.1 & \pct{-20.3} & 19.8 & +\pct{128} & 271.7 & +\pct{2919} \\
\bottomrule
\end{tabular}
\end{center}
\end{table}

\paragraph{Answer to \RQekTC}

Our answer to the first research question is positive:
\ektc is able to beat \ktc with respect to network lifetime:
For the dense and the sparse topology, the best configurations of \ektc \wrt \remainingLifetime{1}{}, \remainingLifetime{\SI{50}{\percent}}{}, and \remainingLifetime{\SI{100}{\percent}}{} outperform \ktc in terms of execution time and \LSM count, but not in terms of topology size.
Additionally, \ktc is in general only able to beat the baseline \MaxpowerTC algorithm for large \weightThreshold between \SI{40}{\meter} and \SI{80}{\meter}, while \ektc generally performs better than \MaxpowerTC.

\paragraph{Answer to \RQMinWeightPredicate}

Our answer to the second research question is that varying \weightThreshold influences the lifetime to a certain extent (positively for \ktc, negatively for \ektc), while the savings \wrt resource consumption are considerable.
A key observation is that the network lifetime for \ktc tends to increase with increasing \weightThreshold, while the network lifetime for \ektc correlates negatively with \weightThreshold

\subsection{Threats to Validity}
\label{sec:eval-threats}

We first consider threats to \emph{external validity}, \idest, the ability to generalize our findings in this evaluation to \WSN topologies and \TC algorithms in general.
To mitigate this threat, we analyze different types of topologies: sparse and dense topologies.
Our results (\Cref{sec:eval-results}) are relatively homogeneous for all considered topologies (\eg, the relative performance of the algorithms).
This increases our confidence in the validity of these findings.
To strengthen our confidence, we plan to carry out additional experiments, \eg, with different movement models, placement models and node density to obtain additional representative data as future work.
Managing the fast-growing number of possible configurations is a major challenge.

Another degree of freedom is the choice of \ktcParameterK for \ktc and \ektc.
The selection of $\ktcParameterK = \evalParK$ may be the unfortunate reason for the suboptimal performance of \ktc.
To mitigate this concern, we performed additional simulation runs with $k \in \{1.0, 1.2, 1.3, 2.0\}$ and for a larger configuration with a world size of \SI{2000}{\meter} and \numprint{1000} sensor nodes, which are not discussed here for conciseness.
The results of these preliminary experiments are comparable to the results discussed earlier.
\ktc generally tends to perform worse than \ektc \wrt network lifetime, and the performance of \ktc correlates positively with \weightThreshold.

The limitation to only one overlay application, \idest, the random-pair communication application, is another threat to external validity.
While this application reflects only one particular communication pattern, \idest, the many-to-many pattern, this pattern is relatively widespread in the communication systems domain. 
We plan to conduct experiments that cover communication patterns such as many-to-one, which is typical in data collection scenarios, or one-to-many, which is applied for data dissemination.

The last considered threat to external validity is that we model topologies as simple graphs (see \Cref{sec:metamodeling-intro}).
Modeling other types of topologies apart from \WSN topologies will probably require to support loops and parallel links.
To support parallel links and loops, we would drop the structural constraints \noLoopsConstraint and \noParallelLinksConstraint.
This implies that the refinement step (see \Cref{sec:refinement}) may result in additional application conditions because less gluings may be dropped.
Therefore, our approach is not limited to simple graphs.

A major threat to \emph{internal validity} is the reproducibility of our results.
To address this threat, we ran all experiments corresponding to each configuration (\idest, a fixed combination of world size, node count, algorithm, \ktcParameterK-parameter, and minimum-weight threshold) at least \evalParSeedCount times with  different random seeds and used the average values of these runs.
We could further increase our confidence in the internal validity of our evaluation by increasing the number of random seeds and by performing an in-depth analysis of the variance within the data of each configuration.

\section{Related Work}\label{sec:relatedWork}

In this section, we survey related work of this paper with a focus on the development of correct algorithms, model-based software development and software product lines.

\subsection{Correctness of Algorithms}
\label{sec:relatedWorkCorrectness}
One motivation of our work is the pertinent missing traceability between specification and implementation in the communication systems domain (see \shortcomingGap in \Cref{sec:introduction}).
The authors of~\cite{Martins2010} propose a calculus for \WSN protocols that is the basis for a service-oriented middleware called \emph{MufFIN}~\cite{Martins2011}.
The idea behind their approach was to prove required properties based on a specification of a \WSN protocol in terms of the calculus.
Then, an equivalent byte code representation for the MufFin middleware should be generated.
Unfortunately, it appears that the approach has not yet been fully implemented to showcase its applicability.
In \cite{Mori2013}, the authors present a domain-specific language for \WSN protocols.
While their approach is not limited to \TC algorithms, this generality makes it hard to constructively ensure required properties, which is a key objective of our work.
In \cite{Dohler2007}, the \emph{ARESA} project is presented, an alliance of industrial and academic institutions that aims to tackle \WSN research challenges jointly.
One of the key objectives is formal analysis of \WSN protocols.
In contrast to this paper, the authors lay a focus on verifying properties based on a formal specification of \WSN protocols.
In contrast, our approach is to construct \TC algorithms, \idest, specialized \WSN protocols, that are correct by construction.
In \cite{Qadir2015}, the authors identify the automatic synthesis of implementations as a central research area because the verification of protocols with large state space is still intractable today.
Finally, a detailed anecdotal example of missing traceability between specification and implementation can be found in \cite[Sec.~1]{Kluge2016}.
There, we analyze the presentation of the CTCA algorithm~\cite{XH12} and highlight the existing gaps.
This does \emph{not} mean that the evaluated implementation is incorrect.
The key message of this example is rather that it is at least difficult for the reader of~\cite{XH12} to understand the relation between the game-theoretic specification and the extensive pseudo code implementation, let alone the unpublished source code of the simulation study.

One of the key properties of the specified \TC algorithms presented in this paper is that they are guaranteed to be correct by construction.
In general, at least three major approaches exist for asserting or checking the correctness of software:
First, \emph{correct-by-construction approaches} integrate the properties during the construction of the software, which is the fundamental idea of this paper.
Second, \emph{verification-based approaches} examine an existing piece of software with respect to the required properties.
Third, \emph{testing} exercises the algorithm by executing a set of representative test cases, each consisting of input data and the expected result and, for each test case, comparing the actual result with the expected result.
In \cite{Hall2002}, the authors highlight that combining these approaches is useful in realistic projects.
Correct-by-construction approaches have been extensively studied in the context of hardware design and software development processes (\eg,~\cite{BBBCJNS11,DB05,BFMSFSW05}).
However, to the best of our knowledge, no work (apart from ours~\cite{KVS15,Kluge2016}) exists on constructing \TC algorithms using the correct-by-construction paradigm.
One technique for verifying correctness properties of software is model checking~\cite{Rensink2004}.
Many model checking approaches are only suitable for verifying properties for models of a finite size.
Graph abstractions~\cite{Baldan2008} are a formalism that alleviates this problem by introducing symbolic representations of whole classes of models.
The benefit of the correct-by-construction approach, as applied in this paper, is that it ensures correctness for models of arbitrary size because each refined \GT rule is guaranteed to preserve the specified correctness properties.
In comparison to correct-by-construction and verification-based approaches, the goal of testing is to derive a finite number of representative test cases that cover a part of the space of possible input values of a piece of software that is as large as possible~\cite{Myers2010}.

In the following, we survey formal frameworks for specifying consistency properties in the context of graph-based models.
Graphical consistency constraints (for short graph constraints), as introduced in \cite{HW95} and as used throughout this paper, express the requirement that particular combinations of nodes and edges should be present in or absent from a graph.
This formalism has been generalized later to HLR categories in~\cite{EEPT06} and extended to cope with attributes in~\cite{DV14}.
While graph constraints provide the benefit that they can be used to constructively refine \GT rules, their expressiveness is relatively limited.
For instance, global constraints such as connectivity cannot be expressed using graph constraints.
For this reason, we proved the connectivity of weakly consistent topologies in this paper rather than integrating this property constructively during the development of the \TC algorithms.
In \cite{HabelRadke2010,Radke2010}, Habel and Radke present $HR^{*}$ constraints, a new type of graph constraints that allow to express path-related properties, which may in principle also serve as input for the constructive approach.
Future work should target the question in how far $HR^{*}$ constraints are applicable in our application scenario.

In~\cite{HHS02}, the authors distinguish four situations in which a model transformation considers consistency conditions, including the preservation and enforcement of consistency constraints.
The algorithm in this paper preserves the active-link constraint, and it enforces and preserves the inactive-link constraint.

\subsection{Model-based Development of Communication Systems}
Model-based techniques have shown to be suitable to describe \cite{TGM00} and construct~\cite{JRSST09} adaptive (communication) systems.
Formal analysis of supposed properties of complex topology adaptation algorithms has already revealed special cases in which the implemented algorithms violate crucial topology constraints~\cite{Zave2012}.
In~\cite{Katelmann2008}, model checking is applied to detect bugs in the \TC algorithm LMST, leading to an improved implementation thereof.
This paper, in contrast, applies a constructive methodology~\cite{HW95} for \GT to develop correct algorithms in the first place.
In~\cite{KSSVMS14}, variants of the \TC algorithm \ktc~\cite{SWBM12} are developed using \GT, integrating the \GT tool eMoflon\footnote{\url{www.emoflon.org}} with a network simulator.
While \cite{KSSVMS14} focuses on the rapid prototyping of \TC algorithms using programmed \GT, this paper aims at devising a generic methodology to develop \TC algorithms that fulfill a set of specified constraints by construction.

In the recent years, a number of model-based tools have been proposed for developing \TC algorithms.
The \emph{Agilla} project\footnote{\url{http://mobilab.wustl.edu/projects/agilla/}}~\cite{Fok2009} provides a middleware platform that allows to use the same implementation for evaluating \TC algorithms in a simulation and in a testbed environment.
Agilla builds upon fUML, a subset of UML with formally defined semantics~\cite{Mayerhofer2012}.
In~\cite{Berardinelli2015}, the authors present an extension of Agilla that is able to collect information \wrt the energy consumption of \WSN nodes.
The \emph{ScatterClipse} project\footnote{\url{http://www.mi.fu-berlin.de/inf/groups/ag-tech/projects/Z_Finished_Projects/ScatterClipse/index.html}}~\cite{AlSaad2008} follows a similar goal and supports the \TC algorithm developer with visualization and testing facilities.
In contrast to this paper, Agilla and Scatterclipse focus on easing the development workflow of \TC algorithms.
To the best of our knowledge, integrating consistency properties constructively into this design process has not been targeted, yet.

\subsection{Tackling Variability: Software Product Lines and Self-Adaptive Systems}
\label{sec:relatedWorkVariability}
One of the major contributions of this paper is to model commonalities and differences of \TC algorithms (see \shortcomingVariability in \Cref{sec:introduction}).
We begin with a short discussion of related work that identifies variability as a research challenge in the communication systems domain.
In \cite{Anguera2010}, the authors present an evaluation of five \TC algorithms in the \emph{WISELIB} algorithm library for \WSNs.
The authors focus on design decisions related to implementing reusable \TC algorithms in WISELIB.
In this paper, however, we focus on highlighting and formalizing the commonalities and variabilities of \TC algorithms already at specification time.

Managing the commonalities and variability of (software) systems is a central topic in the software product lines (SPL) community.
In the following, we provide a short introduction to SPLs.
An SPL describes the possible configuration options of a software system, \eg, using feature models~\cite{Kang1990}, whose syntax resembles the diagram shown in \Cref{fig:tc-algorithms-feature-model}.
While feature models are typically used to model the possible configuration options of a system (often called the \emph{problem space}), metamodeling is a technique for describing abstract representations of concrete systems, the \emph{solution space}~\cite{VanDerLinden2007,Pohl2005}.
In recent years, the expressiveness of feature models has been enhanced by introducing multiplicities, which further reduces the gap between problem space and solution space~\cite{Weckesser2016,Schnabel2016,Quinton2013}.
While traditional SPLs typically describe the configuration space of a software system at or before its deployment, a dynamic software product line (DSPL) models the possible reconfiguration at runtime~\cite{Schroeter2012}.
For this purpose, the concept of binding times has been introduced to distinguish between features of a system that are bound, \eg, statically at compile time or dynamically at runtime~\cite{Burdek2014}.

Until today, there are only few contributions that connect SPLs with \GT.
For instance, in \cite{Strueber2015}, the authors propose to model families of \GT rules by merging multiple related \GT rules into one \GT rule whose variables are annotated with presence conditions, which specify for each annotated variable in which variant of the \GT rule it is present.
Based on this approach, the authors present tool support for automatically deriving variability-based \GT rules from a set of traditional \GT rules~\cite{Strueber2016} and for editing the derived variability-aware rules~\cite{Strueber2016ICGT}.
In future work, this approach should be investigated \wrt its applicability to the \WSN domain.

Several works apply SPL concepts to model adaptive communication systems.
In \cite{Bencomo2008}, the authors specify the reconfiguration space of a flood warning \WSN using SPL.
In contrast to this paper, their focus lies on modeling the different communication interfaces (\eg, WiFi or Bluetooth) of a sensor node and the conditions under which the respective interfaces should be activated.
In \cite{Ortiz2012}, the authors specify a product family of devices that act as environment monitoring and guidance system in a museum at the same time.
In contrast to this paper, their focus lies not on \TC but on modeling variability \wrt the following four dimensions: communication scope (\idest, unicast \vs anycast communication), measured metrics (\eg, humidity or temperature), actuation (\eg, visual or acoustic), and localization technology (\eg, RFID-based localization).
In \cite{Delicato2009}, the possible configuration dimensions of wireless sensor-actor network (WSAN) nodes are outlined using feature models.
The authors propose a middleware that allows to instantiate the possible configurations in a memory- and energy-efficient way on WSAN nodes.
One of the considered components reflects the topology of the WSAN nodes.
In contrast to this paper, the authors of \cite{Delicato2009} focus on surveying the typical complexity of the WSN domain;
especially, they treat the two considered \TC algorithms (flat tree \vs hierarchical tree) as a black box.
In \cite{Fuentes2010} (an extension of \cite{Delicato2009}), an SPL engineering process is presented that allows to configure resource-efficient middleware systems for WSAN nodes with dedicated tasks based on a user-selected configuration (\eg, to build vehicular area networks~\cite{Delicato2009b} or intelligent living spaces~\cite{Fuentes2011}).
The mapping between configuration space and the larger low-level solution space is performed via model transformation and code generation engine (\eg, for Java2 ME).
Finally, \cite{Portocarrero2014} presents a variability-aware reference architecture that builds on \cite{Fuentes2010}.

In fact, DSPLs are also applied to model self-adaptive systems, \idest, systems that monitor their environment, analyze the monitored data, plan appropriate measures and execute them to adapt to changing contextual environments.
In \cite{Saller2012}, a framework for precalculating possible or probable configurations for resource-constraint devices is proposed.
Such techniques are also useful in the \WSN domain because the resources of \WSN nodes are typically highly limited.
In \cite{Saller2013}, an extension to traditional feature models is proposed that integrates a dedicated submodel for the context of the modeled system.
In the context of \WSN nodes, context feature models could be used to model the varying environmental conditions to which a sensor node may react by switching to a more appropriate \TC algorithm (\eg, \MaxpowerTC in context with increased node dynamics). %Karsten's dissertation \cite{Saller2015}
While traditional \WSNs were typically configured at deployment time using a fixed \TC algorithm (if any) and a fixed parameter set, self-adaptive \WSNs can be suitable, \eg, for environmental monitoring to detect wildfires or floods.
For instance, in \cite{Anaya2014}, the authors present a framework that may be used, \eg, for reconfiguring parameters of \WSN nodes in a wildfire detection network.
In their case study, the framework is able to predict critical situations (\eg, imminent wildfires) and react appropriately by increasing the sampling rate.
Use cases such as flood or wildfire detection show that switching between \TC algorithms is a sensible use case and should be one of our future lines of research.

\section{Conclusion}
\label{sec:conclusion}

In this paper, we proposed a model-driven methodology for designing families of \TC algorithms using \GT and graph constraints.
The motivation of our research is that the state-of-the-art development of \TC algorithms for \WSNs exposes two major shortcomings:
\begin{inparaenum}
\item 
While it is common to prove formal properties of the designed algorithms and to evaluate them extensively using simulators (and less frequently using hardware testbeds), it is often hard or even impossible to verify that the formal specification and the simulator (or testbed) implementation indeed represent the same algorithm (\shortcomingGap).
\item 
Little effort is made to constructively reuse common substructures (\eg, structural constraints or tie breakers) that appear in many \TC algorithms (\shortcomingVariability).
\end{inparaenum}

In~\cite{KVS15}, we focused on \shortcomingGap by proposing a systematic approach for developing individual correct-by-construction \TC algorithms.
In this approach, valid topologies are characterized using graph constraints~\cite{HW95}, \TC algorithms are specified using programmed \GT~\cite{FNTZ98}, and an existing constructive approach~\cite{HW95} is applied to enrich \GT rules with application conditions derived from the graph constraints.
We illustrated the applicability of the proposed approach by re-engineering the \TC algorithm \ktc~\cite{SWBM12}.

In this paper, we complement our work on \shortcomingGap and tackle \shortcomingVariability by generalizing the results in~\cite{KVS15} as follows:
\begin{inparaenum}
\item 
We separate the (graph) constraints into common and algorithm-specific parts.
We show the applicability of this step by specifying six existing \TC algorithms as well as \ektc, a novel variant of \ktc.
\item 
We adjust and extend the steps of the constructive approach proposed in~\cite{KVS15} to be also applicable to families of \TC algorithms.
\end{inparaenum}
Finally, we present tool support that allows to immediately evaluate the \TC algorithms, specified using the \GT tool \eMoflon, in the \Simonstrator network simulation environment.

Thanks to the proposed approach, it is now possible to rapidly specify and evaluate new \TC algorithms.
First, after specifying an appropriate algorithm-specific predicate, all proves of formal properties carry over to the new \TC algorithm.
Second, the tool integration between \eMoflon and \Simonstrator mirrors the hierarchical, compositional structure of common and algorithm-specific predicates.
This means that only the algorithm-specific predicate of a new \TC algorithm needs to be implemented in \eMoflon---all other components may be reused immediately.

\paragraph{Outlook}
As future work, we aim at extending the proposed systematic approach to support the entire typical development workflow of \TC algorithms, consisting of specification, simulation and testbed evaluation (\Cref{fig:OverviewOfEntireApproach}).
\def\implMarker{\tikz[baseline=(char.base)]{\node[diamond,draw,inner sep=.1em,text width={width("S")},align=center,text=white,fill=black] (char){\textsf{I}};}}
\def\specMarker{\tikz[baseline=(char.base)]{\node[diamond,draw,inner sep=.1em,text width={width("S")},align=center,text=white,fill=black] (char){\textsf{S}};}}
\begin{figure}
\begin{center}
\includegraphics[width=.9\textwidth]{./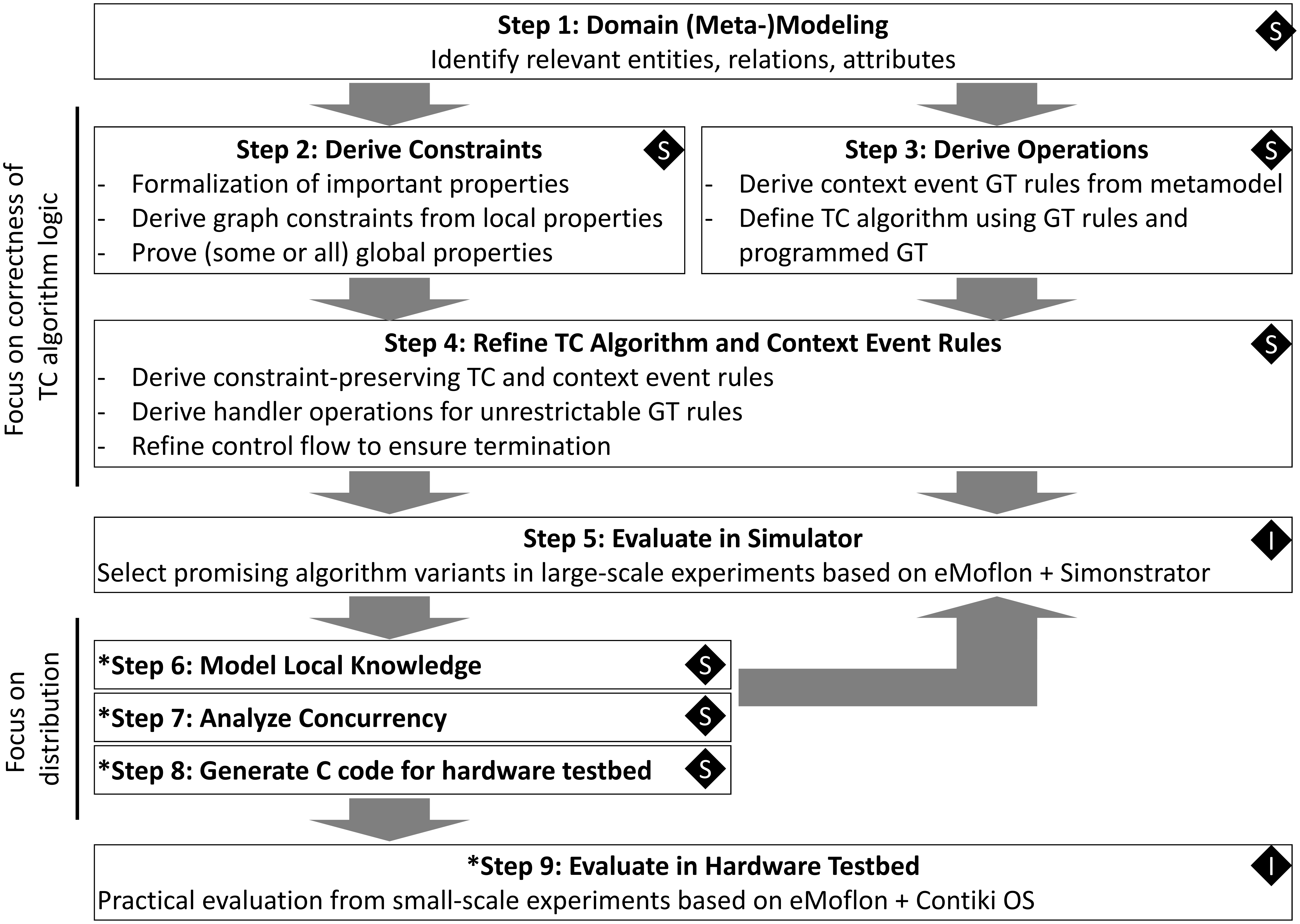}
\caption[Overview of our systematic approach]{Overview of systematic approach (\textbf{\textsf{*}} marks future work; \implMarker/\specMarker: Step in implementation/specification phase)}
\label{fig:OverviewOfEntireApproach}
\end{center}
\end{figure}
In \cite{KVS15,Kluge2016}, we have focused on the step-by-step specification and simulation-based evaluation of \TC algorithms, which is shortly recapitulated here to lead over to the planned future work.
For simplicity, we focus on the development of a single \TC algorithm, even though the steps carry over to families of \TC algorithms as well.
In Step 1, we model relevant entities, relations, and attributes of the considered class of topologies (\eg, node energy level and link weight).
In Step 2, we formalize the important properties of the considered \TC algorithms in terms of first-order logic predicates.
Properties that can be checked based on local knowledge are additionally transformed into graph constraints.
For instance, the requirement that the topology is completely classified upon termination of the \TC algorithm is translated into the \unclassifiedLinkConstraintLong.
Properties that can only be checked globally are proved manually (\eg, A-connectivity of the output topology).
In Step 3, we derive possible \TC actions and context events from the metamodel and formalize them using \TC and \CE \GT rules, respectively.
For instance, link-weight modifications are represented by the \weightModificationRuleLong.
Additionally, we use programmed \GT---in this case Story-Driven Modeling~\cite{FNTZ98}---to specify the control flow of the \TC algorithm.
In Step 4, we combine the graph constraints from the second step and the \GT rules from the third step to obtain refined \GT rules that preserve all graph constraints.
For all unrestrictable \GT rules, we transform the generated application conditions into handler operations.
At the end of the forth step, we analyze whether the generated application conditions of the \TC rules may lead to a non-termination of the \TC algorithm and refine its control flow to ensure termination.
In Step 5, the \TC algorithm is evaluated in a network simulator.
For this paper, we decided to use an integration-based approach because the source code of the simulator is available.
We could have generated code for the simulator as well, but this is typically a larger effort compared to a tool integration.
To support the full development process of \TC algorithms, we will address the following research questions in our future work.
\begin{itemize}
\item
Step 6: How to model information about local knowledge?
Currently, several of the patterns require knowledge about the 3-hop neighborhood of a node, which may not be available on real sensor nodes (due to memory limitations).
\item 
Step 7: How can we analyze concurrency issues due to the parallel execution of the \TC rules on real hardware?
Inside the simulator, \TC is executed sequentially, while, on real hardware nodes, the \TC algorithm is executed concurrently on each node.
We will have to analyze the \GT-based specification for race conditions and other concurrency issues and resolve them, \eg, by implementing conflict resolution strategies.
\item 
Step 8: How can we generate efficient code for the hardware testbed?
We plan to add a second set of code generation templates to the \GT tool \eMoflon to be able to generate embedded C code.
Since testbed devices are typically highly resource-constraint, we will focus on optimizing the runtime and the memory footprint (of the generated code and at runtime) in this step.
\end{itemize}
We plan to use the \Simonstrator~\cite{RSRS15} platform to evaluate the localized, parallel \GT-based specification, which results from answering the first two questions, and we plan to use the Contiki operating system~\cite{Dunkels2004} as target platform for generating embedded C code (Step 9).

\begin{acknowledgements}
%This work has been funded by the German Research Foundation (DFG) as part of projects A01 and C01 within the Collaborative Research Center (CRC) 1053 -- MAKI. 
The authors would like to thank Lukas Neumann for his contributions to the evaluation study.
\end{acknowledgements}

\bibliographystyle{spmpsci}
\bibliography{rkluge}

\end{document}